\documentclass[journal, onecolumn]{IEEEtran}
\usepackage{cite}
\usepackage{amssymb}
\usepackage[cmex10]{amsmath,nccmath}
\usepackage{mathtools}
\usepackage{bm}
\usepackage{orcidlink}
\usepackage{hyperref}
\usepackage{url}
\hypersetup{colorlinks=true,linkcolor=blue,citecolor=blue,urlcolor=blue}

\newtheorem{theorem}{Theorem}
\newtheorem{lemma}{Lemma}
\newtheorem{definition}{Definition}
\newtheorem{corollary}{Corollary}

\newcommand{\E}[1]{\mathbb E\left[#1\right]}
\newcommand{\Prob}[1]{\mathbb P\left[#1\right]}
\newcommand{\mc}[1]{\mathcal{#1}}
\newcommand{\ms}[1]{\mathsf{#1}}
\newcommand{\mr}[1]{\mathrm{#1}}
\newcommand{\bs}[1]{\boldsymbol{#1}}
\newcommand{\Pe}{P_{\mr{e}}}
\newcommand{\Mvlf}{M^*}
\newcommand{\Nvlf}{N^*}
\newcommand{\thmref}[1]{Theorem~\ref{#1}}
\newcommand{\lemref}[1]{Lemma~\ref{#1}}

\newcommand{\secref}[1]{Section~\ref{#1}}
\newcommand{\appref}[1]{Appendix~\ref{#1}}
\newcommand{\defnref}[1]{Definition~\ref{#1}}

\allowdisplaybreaks

\begin{document}

\title{A Tight Second-Order Converse Bound for Variable-Length Feedback Codes}
\author{Recep Can Yavas~\orcidlink{0000-0002-5640-515X},~\IEEEmembership{Member,~IEEE}
\thanks{R. C. Yavas is with the Department of Electrical and Electronics Engineering at Bilkent University, 06800, Ankara, Turkey (e-mail: ryavas@bilkent.edu.tr).}}
\IEEEoverridecommandlockouts
\maketitle
\bstctlcite{BSTcontrol}

\begin{abstract}
We study variable-length feedback (VLF) codes over a discrete memoryless channel under average decoding-time and error-probability constraints. In the non-vanishing error probability regime, Polyanskiy, Poor, and Verd\'u (2011) derive achievability and converse bounds on the logarithm of the maximum achievable codebook size. These bounds establish the $\epsilon$-capacity but leave an order-$\log N$ gap in the second-order expansion, where $N$ is the average decoding time. Yavas and Tan (2025) improve the coefficient of $\log N$ in the achievability bound from $-1$ to $-\frac{C}{C_1}$, where $C$ is the channel capacity and $C_1$ is the largest Kullback--Leibler divergence between two conditional output distributions. We derive a converse with the same coefficient, establishing the second-order fundamental limit for every positive-capacity discrete memoryless channel with finite $C_1$. The result also covers the moderate-deviations and error-exponent regimes, including polynomially decaying error probabilities. The converse uses R\'enyi entropy and the extrinsic Jensen--Shannon divergence. We also derive necessary properties of asymptotically optimal VLF codes. First-order-optimal codes must have an early-stopping branch, and second-order-optimal codes must additionally exhibit communication and confirmation behavior. Finally, for the binary erasure channel, we determine the exact minimum expected decoding time for every message-set size and admissible error probability.
\end{abstract}

\begin{IEEEkeywords}
Binary erasure channel, converse bounds, discrete memoryless channels, feedback, R\'enyi entropy, variable-length coding.
\end{IEEEkeywords}

\section{Introduction} \label{sec:introduction}

A variable-length feedback (VLF) code allows the encoder to select each channel input using the preceding channel outputs and allows the decoder to terminate at a stopping time adapted to those outputs. The decoding time therefore responds to the realized noise, while performance is measured through its expectation. Output-adapted stopping improves the finite-length performance of communication over a discrete memoryless channel (DMC)~\cite{PolyanskiyPoorVerdu2011}, while causal input adaptation together with output-adapted stopping yields Burnashev's optimal VLF error exponent~\cite{Burnashev1976}, which is in general larger than that of fixed-length codes without feedback. 

Let $C$ denote the channel capacity and let $C_1$ denote the largest Kullback--Leibler (KL) divergence between two conditional output distributions. Fix a DMC with $C>0$ and $C_1<\infty$ and a rate $R\in(0,C)$. Consider a sequence of VLF codes indexed by $j$, with $M_j$ equiprobable messages, average error probabilities $\epsilon_j\in(0,1)$, and decoding times $\tau_j$, such that $\epsilon_j\to0$, $\E{\tau_j}\to\infty$, and $\frac{\log M_j}{\E{\tau_j}}\to R$. The error exponent of this sequence is $\liminf_{j\to\infty}\frac{\log\frac{1}{\epsilon_j}}{\E{\tau_j}}$. In~\cite[Th.~3]{Burnashev1976}, Burnashev proves that the supremum of these exponents over all such code sequences is
\begin{align}
    E(R) &= C_1\left(1-\frac{R}{C}\right), \qquad R\in(0,C). \label{eq:intro-burnashev}
\end{align}

Yamamoto and Itoh~\cite{YamamotoItoh1979} attain Burnashev's exponent by alternating fixed-length communication and confirmation phases. Each communication phase uses a fixed-length block code operated at a rate approaching capacity to form a tentative decision. In the following confirmation phase, the transmitter sends an accept or reject control sequence formed from the two input symbols that attain $C_1$, and the decoder performs a fixed-length binary hypothesis test. An accepted decision terminates transmission, whereas a rejected decision initiates another communication phase. Berlin \emph{et al.}~\cite{BerlinEtAl2009} derive a converse bound that parallels this two-phase construction and suggests that communication and confirmation phases are implicit in schemes attaining Burnashev's exponent.

Polyanskiy \emph{et al.}~\cite{PolyanskiyPoorVerdu2011} study VLF coding in the non-vanishing error probability regime. For fixed $\epsilon\in(0,1)$, let $\Mvlf(N,\epsilon)$ denote the largest codebook size achievable with average decoding time at most $N$ and average error probability at most $\epsilon$. Their achievability bound in~\cite[Th.~2]{PolyanskiyPoorVerdu2011} and converse bound in~\cite[Th.~4]{PolyanskiyPoorVerdu2011} give
\begin{align}
    \frac{NC}{1-\epsilon}-\log N+O(1) &\leq \log \Mvlf(N,\epsilon) \leq \frac{NC+h_{\mr{b}}(\epsilon)}{1-\epsilon}, \label{eq:intro-ppv}
\end{align}
where $h_{\mr{b}}(\cdot)$ is the binary Shannon entropy function. The lower bound is attained by stop-feedback codes.

Yavas and Tan~\cite{YavasTan2025} sharpen the achievability bound for every DMC with $C>0$ and $C_1<\infty$. Their modified Yamamoto--Itoh scheme has two communication phases separated by one confirmation phase, and each phase ends at a random stopping time. For every fixed $\epsilon\in(0,1)$, their non-asymptotic construction in~\cite[Th.~1]{YavasTan2025}, after asymptotic optimization in~\cite[Th.~2]{YavasTan2025}, gives
\begin{align}
    \log \Mvlf(N,\epsilon) &\geq \frac{NC}{1-\epsilon} -\frac{C}{C_1}\log N -\log\log N +O(1). \label{eq:intro-yavas-tan}
\end{align}

Yavas~\cite{Yavas2026} analyzes posterior matching over binary-input memoryless symmetric (BMS) channels. For fixed $\epsilon\in(0,\frac{1}{2})$, the resulting construction improves the third-order term $-\log \log N + O(1)$ in~\eqref{eq:intro-yavas-tan} to $O(1)$ over BMS DMCs with positive capacity and finite $C_1$~\cite[Cor.~1]{Yavas2026}. Related posterior-based work gives channel-specific achievability results for binary asymmetric channels~\cite{YangEtAl2022} and develops low-complexity posterior matching over the binary symmetric channel~\cite{AntoniniEtAl2024}.

In the moderate-deviations regime, the coding rate approaches capacity while the error probability vanishes. Truong and Tan~\cite{TruongTan2019} consider the special case in which $\epsilon_N\to0$ and $\frac{1}{N}\log M_N\geq C-\rho_N$, where $\rho_N\to0$ and $\sqrt{N}\rho_N\to\infty$. For DMCs with finite $C_1$, they prove in~\cite[Th.~1]{TruongTan2019} that the largest achievable value of $\liminf_{N\to\infty}\frac{\log\frac{1}{\epsilon_N}}{N\rho_N}$ is $\frac{C_1}{C}$. Since $\sqrt{N}\rho_N\to\infty$ implies $\frac{N\rho_N}{\log N}\to\infty$, the error probabilities at this optimal scale decay faster than every inverse polynomial in $N$. Thus, their result does not cover polynomially decaying error probabilities.

Earlier VLF converses use different measures of the decoder's uncertainty. Burnashev bounds two posterior drifts by $C$ and $C_1$ and combines these bounds with a terminal entropy estimate~\cite{Burnashev1976}. Berlin \emph{et al.}~\cite{BerlinEtAl2009} split their converse at the first time the conditional maximum a posteriori (MAP) error probability falls below a threshold and lower-bound the expected times before and after that crossing using capacity and binary hypothesis testing. The converse behind the upper bound in~\eqref{eq:intro-ppv} combines Fano's inequality, a stopped mutual-information decomposition, and the one-step capacity bound.

For DMCs with $C>0$ and $C_1<\infty$, the bounds in~\eqref{eq:intro-ppv} and~\eqref{eq:intro-yavas-tan} do not determine whether the coefficient of $\log N$ in the VLF fundamental limit is $-\frac{C}{C_1}$. They also do not give a single expansion that determines the transition between the $\log N$ and $\log\frac{1}{\epsilon_N}$ scales.

\subsection{Main Contributions}

For every DMC with $C>0$ and $C_1<\infty$, our main result closes the order-$\log N$ converse gap in the non-vanishing error probability regime and determines how the $\log N$ and $\log\frac{1}{\epsilon_N}$ scales enter the VLF fundamental limit. For error sequences bounded away from one and satisfying $\log\frac{1}{\epsilon_N}\leq(C_1-\eta)N$ for some $\eta\in(0,C_1)$, we prove
\begin{align}
    \log \Mvlf(N,\epsilon_N) &= \frac{NC}{1-\epsilon_N} -\frac{C}{C_1} \max\left\{\log N,\log\frac{1}{\epsilon_N}\right\} +o\left(\max\left\{\log N,\log\frac{1}{\epsilon_N}\right\}\right). \label{eq:intro-main}
\end{align}
In the non-vanishing error probability regime, the converse proves that the coefficient of $\log N$ is $-\frac{C}{C_1}$ throughout this channel class, matching the achievability bound in~\eqref{eq:intro-yavas-tan}. The same expansion connects the non-vanishing error probability, moderate-deviations, and error-exponent regimes and includes polynomially decaying error probabilities. The achievability part reoptimizes the non-asymptotic construction in~\cite[Th.~1]{YavasTan2025}, with a parameter choice uniform over the permitted error-probability sequences.

In a variable-length stop-feedback (VLSF) code, the channel-input sequence is fixed by the message and common randomness and does not depend on past channel outputs. Feedback only tells the encoder when decoding occurs. Hence, VLSF codes form a subclass of VLF codes, and every VLF converse also applies to them. In~\cite[Th.~9]{YavasKostinaEffros2024Sparse}, Yavas \emph{et al.} derive a non-asymptotic VLSF converse in terms of the minimum type-II error probability of a sequential hypothesis test. When decoding is allowed after every channel use, the asymptotic analysis of that bound gives only $\log M\leq\frac{NC}{1-\epsilon}+O(1)$. On the other hand, for DMCs with $C>0$ and $C_1<\infty$, the converse underlying~\eqref{eq:intro-main} improves this bound to $\log M\leq\frac{NC}{1-\epsilon}-\frac{C}{C_1}\log N+o(\log N)$. To the best of our knowledge, the lower bound in~\eqref{eq:intro-ppv} remains the best general VLSF achievability bound, and its coefficient of $\log N$ is $-1$. Since $C<C_1$ for every DMC with $C>0$, the achievability coefficient $-1$ differs from the converse coefficient $-\frac{C}{C_1}$. Closing this gap remains an open problem.

The converse has two stages. The first supplements Burnashev's Shannon-entropy argument in~\cite[Lem.~2]{Burnashev1976} with a R\'enyi-entropy bound of order $1+\frac{2}{\log(M-1)}$. The order-$\alpha$ Sibson capacity~\cite{Sibson1969} upper-bounds the expected one-step decrease of the posterior R\'enyi entropy (see~\lemref{lem:renyi}). At this order, the accumulated excess of the Sibson-capacity bound over $C$ contributes $O(1)$ to the upper bound on $\log M$, while the terminal R\'enyi-entropy bound contains a negative term of order $\log(M-1)$ times the product of the conditional MAP error probability and its complement (see~\lemref{lem:terminal-renyi}). Comparing the R\'enyi and Shannon bounds then shows that the terminal conditional MAP error of a near-optimal code sequence must concentrate near zero or one, yielding the logarithmic lower bound on the terminal posterior potential used in the second stage.

The second stage relates this terminal potential to the accumulated capacity deficit. In~\cite[Lem.~2]{NaghshvarJavidiWigger2015}, Naghshvar \emph{et al.} identify the extrinsic Jensen--Shannon (EJS) divergence with the expected one-step increase of the posterior average log-odds. Together with the Shannon-entropy drift, this identity expresses the expected one-step increase of our potential as the EJS divergence minus the mutual information (see~\eqref{eq:support-L-drift}). We upper-bound this difference by a multiple of the one-step capacity deficit plus a remainder whose expected stopped sum is bounded. Summing this bound until decoding yields the logarithmic term in~\eqref{eq:intro-main}.

Our first complementary result gives necessary conditions for first- and second-order-optimal VLF code sequences in the non-vanishing error probability regime. In every first-order-optimal sequence, an event whose probability approaches $\epsilon$ accounts for almost all decoding errors and has conditional mean decoding time $o(N)$. On the complementary event, decoding is asymptotically reliable and the conditional mean decoding time is $\frac{N}{1-\epsilon}+o(N)$. Thus, errors arise almost entirely from a branch that uses negligible time on average, whereas decoding is reliable on the other branch. Second-order optimality further requires communication and confirmation behavior, although these behaviors need not occur in two contiguous phases.

The second complementary result treats the binary erasure channel (BEC), for which $C_1=\infty$. For every integer $M\geq2$, erasure probability $\delta\in[0,1)$, and $\epsilon\in[0,1-\frac{1}{M}]$, we determine the \emph{exact minimum expected decoding time}. At $\epsilon=0$, this exact value agrees with~\cite[Cor.~6]{DevassyEtAl2016}; when, in addition, $M=2^k$ for an integer $k\geq1$, it equals the expected decoding time achieved by the construction in~\cite[Th.~7]{PolyanskiyPoorVerdu2011}. For $0<\epsilon\leq1-\frac{1}{M}$, the exact expression strengthens the earlier non-asymptotic bounds of Devassy \emph{et al.}~\cite{DevassyEtAl2016}.

\subsection{Paper Organization}

\secref{sec:notation} introduces the notation and channel quantities, and \secref{sec:problem} defines the VLF code model and its fundamental limit. \secref{sec:main} states the main results, and \secref{sec:supporting} develops the posterior inequalities used in the converse. The next three sections contain the proofs, \secref{sec:conclusion} concludes the paper, and the appendices contain the uniform achievability calculation and the auxiliary proofs.

\section{Notation and Definitions} \label{sec:notation}

For a positive integer $m$, let $[m]\triangleq\{1,\ldots,m\}$. The sets of positive and non-negative integers are $\mathbb N\triangleq\{1,2,\ldots\}$ and $\mathbb N_0\triangleq\{0,1,\ldots\}$, respectively. For a sequence $x_1,x_2,\ldots$, let $x^n\triangleq(x_1,\ldots,x_n)$, with $x^0$ denoting the empty sequence. We write $1\{\cdot\}$ for the indicator function. For events $\mc{A}$ and $\mc{B}$, the notation $\mc{A}^{\mr{c}}$ denotes the complement of $\mc{A}$, and $\mc{A}\mathbin{\triangle}\mc{B}$ denotes their symmetric difference. All unsubscripted logarithms have base $e$. For $a\in\mathbb R$, let $a^+\triangleq\max\{a,0\}$. We use the standard asymptotic notation $o(\cdot)$, $O(\cdot)$, $\Omega(\cdot)$, and $\Theta(\cdot)$.

For a probability vector $\bs{p}=(p_1,\ldots,p_m)$, its Shannon entropy and its R\'enyi entropy of order $\alpha>1$ are
\begin{align}
    H(\bs{p}) &\triangleq -\sum_{i=1}^{m}p_i\log p_i, \label{eq:notation-shannon-entropy}\\
    H_\alpha(\bs{p}) &\triangleq \frac{1}{1-\alpha} \log\sum_{i=1}^{m}p_i^\alpha. \label{eq:notation-renyi-entropy}
\end{align}
Here and below, $0\log0\triangleq0$. The binary Shannon entropy in nats is
\begin{align}
    h_{\mr{b}}(a) &\triangleq -a\log a-(1-a)\log(1-a), \qquad a\in[0,1]. \label{eq:notation-binary-entropy}
\end{align}
For discrete random variables, $H(\cdot)$, $H(\cdot\mid\cdot)$, $I(\cdot;\cdot)$, and $I(\cdot;\cdot\mid\cdot)$ denote the Shannon entropy, the conditional entropy, the mutual information, and the conditional mutual information, respectively.

For probability distributions $P$ and $Q$ on the same finite alphabet such that $P$ is absolutely continuous with respect to $Q$, written $P\ll Q$, the KL divergence and the chi-square divergence are
\begin{align}
    D(P\|Q) &\triangleq \sum_{y:P(y)>0}P(y)\log\frac{P(y)}{Q(y)}, \label{eq:notation-kl}\\
    \chi^2(P\|Q) &\triangleq \sum_{y:Q(y)>0}\frac{(P(y)-Q(y))^2}{Q(y)}. \label{eq:notation-chi-square}
\end{align}
If $P$ is not absolutely continuous with respect to $Q$, we set both divergences equal to infinity.

For a probability vector $\bs{p}=(p_1,\ldots,p_m)$ with $p_i<1$ for every $i\in[m]$ and probability distributions $P_1,\ldots,P_m$ on a common finite alphabet, the EJS divergence in~\cite[Eq.~(9a)]{NaghshvarJavidiWigger2015} is defined as
\begin{align}
    \operatorname{EJS}(\bs{p};P_1,\ldots,P_m) &\triangleq \sum_{i:p_i>0}p_iD\left(P_i\middle\|\sum_{j\neq i}\frac{p_j}{1-p_i}P_j\right), \label{eq:notation-ejs}
\end{align}
where indices with zero weight are omitted.

Let $P_{Y|X}$ be a DMC with input alphabet $\mc{X}$, output alphabet $\mc{Y}$, and transition probabilities $P_{Y|X}(y|x)$. For every $n\in\mathbb N$, $x^n\in\mc{X}^n$, and $y^n\in\mc{Y}^n$, its memoryless extension satisfies
\begin{align}
    P_{Y^n|X^n}(y^n|x^n) &= \prod_{t=1}^{n}P_{Y|X}(y_t|x_t). \label{eq:notation-memoryless}
\end{align}

For an input distribution $P_X$ on $\mc{X}$, let $I(X;Y)$ denote the mutual information under $P_XP_{Y|X}$. The channel capacity and the largest KL divergence between two conditional output distributions are
\begin{align}
    C &\triangleq \max_{P_X}I(X;Y), \label{eq:notation-capacity}\\
    C_1 &\triangleq \max_{x,x'\in\mc{X}} D\left(P_{Y|X=x}\middle\|P_{Y|X=x'}\right). \label{eq:notation-C1}
\end{align}
For $C_1<\infty$, convexity of the KL divergence in its second argument gives $I(X;Y)\leq C_1\left(1-\sum_xP_X(x)^2\right)$ for every $P_X$, which implies that $C\leq C_1$. If $C>0$ and $C_1\in(0,\infty)$, evaluating the same inequality at a capacity-achieving input distribution gives $C<C_1$. When $C_1=\infty$, the strict inequality is immediate. Thus, every positive-capacity DMC satisfies $C<C_1$.

\section{Problem Formulation} \label{sec:problem}

The following definition uses the VLF code model in~\cite[Def.~1]{PolyanskiyPoorVerdu2011}.

\begin{definition}[VLF code]\label{def:vlf-code}
Fix $N>0$, a positive integer $M$, and $\epsilon\in[0,1]$. An $(N,M,\epsilon)$-VLF code consists of the following objects.
\begin{enumerate}
\item A probability distribution $P_U$ on an alphabet $\mc{U}$ with $|\mc{U}|\leq3$. The common randomness $U\sim P_U$ is revealed to the encoder and decoder before transmission and is independent of the message $W$, which is equiprobable on $[M]$.

\item A sequence of causal encoding functions
\begin{align}
    \ms{f}_t &\colon \mc{U}\times[M]\times\mc{Y}^{t-1} \to \mc{X}, \qquad t\in\mathbb N, \label{eq:problem-encoder-map}
\end{align}
which generate the channel inputs
\begin{align}
    X_t &= \ms{f}_t(U,W,Y^{t-1}), \qquad t\in\mathbb N. \label{eq:problem-encoder}
\end{align}
For every $t\in\mathbb N$, conditioned on $(U,W,Y^{t-1})$, the channel output $Y_t$ has distribution $P_{Y|X}\left(\cdot\middle|X_t\right)$.

\item A sequence of decoding functions
\begin{align}
    \ms{g}_t &\colon \mc{U}\times\mc{Y}^t \to [M], \qquad t\in\mathbb N_0. \label{eq:problem-decoder-map}
\end{align}

\item A non-negative integer-valued stopping time $\tau$ with respect to the filtration
\begin{align}
    \mc{F}_t &\triangleq \sigma(U,Y^t), \qquad t\in\mathbb N_0. \label{eq:problem-filtration}
\end{align}
On the event $\{\tau=t\}$, the decoder produces the terminal estimate
\begin{align}
    \widehat W &= \ms{g}_t(U,Y^t), \qquad t\in\mathbb N_0. \label{eq:problem-terminal-estimate}
\end{align}
The average decoding time and average error probability satisfy
\begin{align}
    \E{\tau} &\leq N, \label{eq:problem-length-constraint}\\
    \Pe \triangleq \Prob{\widehat W\neq W} &\leq \epsilon. \label{eq:problem-error-constraint}
\end{align}
\end{enumerate}
\end{definition}

For $N>0$ and $\epsilon\in[0,1)$, the VLF fundamental limit is the largest admissible message-set size,
\begin{align}
    \Mvlf(N,\epsilon) &\triangleq \max\left\{ M\in\mathbb N: \text{an $(N,M,\epsilon)$-VLF code exists} \right\}. \label{eq:problem-vlf-fundamental-limit}
\end{align}

\section{Main Results} \label{sec:main}

\subsection{A Tight Converse Bound for VLF Codes}

We first state the asymptotic fundamental limit for an error-probability sequence that may vary with $N$.

\begin{theorem}\label{thm:main}
Consider a DMC with $C>0$ and $C_1<\infty$. Fix $\bar\epsilon\in(0,1)$ and $\eta\in(0,C_1)$, and let $\{\epsilon_N\}_{N\in\mathbb N}$ be a positive sequence such that, for all sufficiently large $N$, $\epsilon_N\leq\bar\epsilon$ and
\begin{align}
    \log\frac{1}{\epsilon_N} &\leq (C_1-\eta)N. \label{eq:main-reliability-condition}
\end{align}
Define
\begin{align}
    A_N &\triangleq \max\left\{\log N,\log\frac{1}{\epsilon_N}\right\}. \label{eq:main-A-N}
\end{align}
Then, as $N\to\infty$,
\begin{align}
    \log \Mvlf(N,\epsilon_N) &\geq \frac{NC}{1-\epsilon_N}-\frac{C}{C_1}A_N-\left(1-\frac{C}{C_1}\right)\log A_N-O(1), \label{eq:main-achievability}\\
    \log \Mvlf(N,\epsilon_N) &\leq \frac{NC+h_{\mr{b}}(\epsilon_N)}{1-\epsilon_N}-\frac{C}{C_1}A_N+o(A_N). \label{eq:main-converse}
\end{align}
Consequently,
\begin{align}
    \log \Mvlf(N,\epsilon_N) &= \frac{NC}{1-\epsilon_N}-\frac{C}{C_1}A_N+o(A_N). \label{eq:main-expansion}
\end{align}
The $O(1)$ term in~\eqref{eq:main-achievability} depends only on the channel and $\bar\epsilon$. The $o(A_N)$ terms in~\eqref{eq:main-converse} and~\eqref{eq:main-expansion} are uniform over all pairs $(N,\epsilon_N)$ satisfying $0<\epsilon_N\leq\bar\epsilon$ and~\eqref{eq:main-reliability-condition}.
\end{theorem}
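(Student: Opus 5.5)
The proof splits into the achievability bound \eqref{eq:main-achievability}, the converse \eqref{eq:main-converse}, and the expansion \eqref{eq:main-expansion}. The expansion follows from the two bounds. Indeed, $\log A_N=o(A_N)$ since $A_N\geq\log N\to\infty$. Also, $h_{\mr{b}}(\epsilon_N)/(1-\epsilon_N)\leq \log 2/(1-\bar\epsilon)=O(1)$.

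\emph{Achievability.} We will use the non-asymptotic two-communication-phase, one-confirmation-phase construction of~\cite[Th.~1]{YavasTan2025}. Its parameters will be re-chosen to depend on $A_N$ rather than on $\log N$ only.
\begin{itemize}
\item \emph{Communication phase.} This phase runs at rate near $C$ with a random stopping threshold $\log M+\gamma_N$. We take $\gamma_N$ of order $\log A_N$, so that the undetected-error contribution is $O(1/A_N)$ relative to the confirmation cost.
\item \emph{Confirmation phase.} We take its threshold to be $A_N+O(\log A_N)$ in log-likelihood units. By Chernoff--Stein, it then costs expected time $A_N/C_1+O(\log A_N)$. This drives the false-acceptance probability below $\min\{\epsilon_N,1/N\}$ times a constant.
\item \emph{Error budget.} We use the remaining error budget by stopping immediately at time $0$ with probability close to $\epsilon_N$ through the common randomness $U$. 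This turns expected time $N$ into $N(1-\epsilon_N)$ of effective time.
\item \emph{Rate.} Multiplying the effective time by $C$ and subtracting $C\cdot A_N/C_1$, together with the $\log A_N$ overheads, gives \eqref{eq:main-achievability}.
\item \emph{Uniformity.} The reliability condition \eqref{eq:main-reliability-condition} with margin $\eta$ guarantees that the confirmation length $A_N/C_1\leq (1-\eta/C_1)N$ is a vanishing-fraction-bounded part of $N$. This is why the constants depend only on the channel and $\bar\epsilon$; the check will be placed in an appendix.
\end{itemize}

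\emph{Converse.} This follows the two-stage plan outlined in the introduction. We start from the Polyanskiy--Poor--Verd\'u stopped Fano argument, which gives $(1-\epsilon)\log M\leq \E{\sum_{t<\tau}I_t}+h_{\mr{b}}(\epsilon)\leq NC+h_{\mr{b}}(\epsilon)$. Here $I_t$ is the one-step mutual information given the past. We then keep the slack $\Delta\triangleq \E{\sum_{t<\tau}(C-I_t)}$ explicitly. It therefore suffices to show that every code with $\log M\geq NC/(1-\epsilon_N)-\frac{C}{C_1}A_N$ has $\Delta\geq \frac{C}{C_1}A_N(1-o(1))$ plus a terminal term.
\begin{itemize}
\item \emph{Stage one: a R\'enyi comparison.} We track the posterior R\'enyi entropy of order $\alpha=1+2/\log(M-1)$. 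By \lemref{lem:renyi}, its expected drift is at most the Sibson capacity $C_\alpha=C+O(\alpha-1)$, so the total excess is $O(N/\log M)=O(1)$. \lemref{lem:terminal-renyi} bounds the terminal R\'enyi entropy with a negative correction of order $\log(M-1)\,\pi(1-\pi)$, where $\pi$ is the terminal conditional MAP error. Comparing the R\'enyi and Shannon bounds for a near-optimal code forces $\E{\pi(1-\pi)}=O(A_N/\log M)\to0$. Hence $\pi$ concentrates near $0$ or $1$. Combined with the error constraint, on the branch with $\pi$ near $0$ the terminal log-odds potential $L_\tau=\sum_i p_i\log\frac{p_i}{1-p_i}$ must satisfy $\E{L_\tau}\gtrsim (1-\epsilon_N)A_N$-scale. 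It is $\log(1/\pi)$ there, and the total error probability is at most $\epsilon_N$, while the length constraint ties $\log N$ in through the Markov-type count of stopping time mass.
\item \emph{Stage two: the EJS drift.} By \eqref{eq:support-L-drift}, the drift of $L_t$ minus the Shannon entropy drift equals $\operatorname{EJS}-I$. We will show pointwise that
\begin{align}
\operatorname{EJS}-I\leq \frac{C_1}{C}(C-I)+R_t,
\end{align}
with $\E{\sum_{t<\tau}R_t}=O(1)$, or at least $o(A_N)$. The heuristic is that when the posterior mass is spread, $\operatorname{EJS}\approx I\leq C$; when it is concentrated, $\operatorname{EJS}\leq C_1$ and $I\approx0$. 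A convex-combination inequality between these regimes gives the factor $C_1/C$, and the chi-square divergence controls the transition. Summing up to $\tau$ then gives $\frac{C_1}{C}\Delta\geq \E{L_\tau}-\E{L_0}-O(1)\geq A_N(1-o(1))$. Substituting this into the Fano bound yields \eqref{eq:main-converse}.
\end{itemize}

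\emph{Main obstacle.} The hardest step is the pointwise inequality $\operatorname{EJS}-I\leq\frac{C_1}{C}(C-I)+R_t$ with a summable remainder, uniformly over $\epsilon_N$. It requires controlling $\operatorname{EJS}$ near uniform-ish posteriors where $I$ is close to $C$. My first attempt would be to bound $\operatorname{EJS}\leq (1-\max_i p_i)^{-1}$-weighted mixtures of $C_1$ and $I$ via convexity of KL. The remainder would then be charged to the expected number of steps with intermediate $\max_i p_i$, which stage one shows to be $O(1)$.
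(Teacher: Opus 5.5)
Your achievability plan (re-tuning the Yavas--Tan thresholds to $A_N$ and guessing at time zero through $U$) matches the paper. So does the architecture of your converse: the Fano slack, R\'enyi entropy of order $1+\frac{2}{\log(M-1)}$ with the Sibson-capacity bound, and the drift $\operatorname{EJS}-I_t$. The proposal breaks at the step you flag as the main obstacle. You plan to charge the remainder $R_t$ to the expected number of steps with intermediate $\max_i p_i$, ``which stage one shows to be $O(1)$.'' Stage one shows nothing of the kind. It constrains only the law of the terminal MAP error at time $\tau$, and a code may linger arbitrarily long with $\max_m\rho_m(t-1)\in(r,1-r)$.

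The missing idea is a bounded potential whose drift dominates the remainder. On $\{\max_m\rho_m(t-1)\leq1-r\}$, a second-order Taylor bound on $\log\frac{\overline P_t}{Q_{m,t}}$, with likelihood ratios controlled by $\lambda_{\min}$, gives $\operatorname{EJS}-I_t\leq\left(\frac{1}{r}+\frac{1}{2\lambda_{\min}^{3}r^{2}}\right)\sum_m\rho_m(t-1)^2\chi^2(P_{m,t}\|\overline P_t)$. The last sum is exactly the conditional one-step increase of $V(\bs{\rho})=\sum_m\rho_m^2\in[0,1]$. Hence the stopped sum of the remainder telescopes to at most $\frac{1}{r}+\frac{1}{2\lambda_{\min}^{3}r^{2}}$, however long the posterior stays in this region. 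On $\{\max_m\rho_m(t-1)>1-r\}$, the bounds $\operatorname{EJS}\leq C_1$ and $I_t\leq\vartheta_r\triangleq h_{\mr{b}}(r)+r\log(|\mc{X}|-1)$ give $\operatorname{EJS}-I_t\leq\frac{C_1}{C-\vartheta_r}(C-I_t)$. An exact pointwise factor $\frac{C_1}{C}$ is unavailable, since $\operatorname{EJS}$ can be near $C_1$ while $I_t>0$. So $r$ must vanish slowly, e.g.\ $r=(A_N\log A_N)^{-1/3}$, which makes the coefficient $\frac{C_1}{C}(1+o(1))$ and the remainder $o(A_N)$.

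Three further repairs are needed.

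\emph{The potential.} It must be $L(\bs{\rho})=\sum_m\rho_m\log\frac{1}{1-\rho_m}=H(\bs{\rho})+\sum_m\rho_m\log\frac{\rho_m}{1-\rho_m}$, not the average log-odds alone. The average log-odds has drift $\operatorname{EJS}$, not $\operatorname{EJS}-I_t$, so your summation step is off by $\E{\sum_t1\{\tau\geq t\}I_t}\approx CN$. Its terminal value is also $-\log(M-1)$ on an error branch that stops at time zero. The correct $L$ is termwise non-negative and satisfies $L(\bs{\rho}(\tau))\geq(1-\pi)\log\frac{1}{\pi}$.

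\emph{The case split.} The stage-one bound $\E{\pi(1-\pi)}=O(A_N/\log M)$ does not vanish when $\log\frac{1}{\epsilon_N}=\Theta(N)$. In general it yields only about $(1-\epsilon_N)\log\frac{N}{A_N}$ for $\E{(1-\pi)\log\frac{1}{\pi}}$, which falls short of $(1-\epsilon_N)A_N$ whenever $\log\frac{1}{\epsilon_N}>\log N$. You need the paper's two cases.
If $\log\frac{1}{\epsilon_N}>\log N$, skip stage one. Apply Jensen to the convex decreasing map $z\mapsto(1-z)\log\frac{1}{z}$ with $\E{\pi}\leq\epsilon_N$, which gives $\E{L(\bs{\rho}(\tau))}\geq(1-\epsilon_N)\log\frac{1}{\epsilon_N}$.
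If $\log\frac{1}{\epsilon_N}\leq\log N$, then $v\triangleq\E{\pi(1-\pi)}=O(\frac{\log N}{N})$. Here $\log N$ enters through Jensen on $\log\frac{1}{\pi}$ over $\{\pi\leq\frac{1}{2}\}$, an event of probability at least $1-\epsilon_N-2v$ with $\E{\pi1\{\pi\leq\frac{1}{2}\}}\leq2v$. This gives $(1-\epsilon_N)\log N-O(\log\log N)$; it does not come from a Markov-type count of stopping-time mass.

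\emph{The target.} Your target $\Delta\geq\frac{C}{C_1}A_N(1-o(1))$ is false for fixed $\epsilon$: dividing the Fano bound by $1-\epsilon$ would then contradict the achievability bound. The correct target, which the two stages actually deliver, is $\Delta\geq\frac{C}{C_1}(1-\epsilon_N)A_N-o(A_N)$.
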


\begin{IEEEproof}
See \appref{app:achievability} for the achievability bound and \secref{sec:proof-main} for the converse.
\end{IEEEproof}

The converse in~\eqref{eq:main-converse} is the principal result. 
Three specializations describe the non-vanishing error probability regime, polynomially decaying error probabilities, and exponential error decay.

\begin{corollary}\label{cor:regimes}
Consider a DMC with $C>0$ and $C_1<\infty$. The following expansions hold as $N\to\infty$.
\begin{enumerate}
\item For every fixed $\epsilon\in(0,1)$,
\begin{align}
    \log \Mvlf(N,\epsilon) &= \frac{NC}{1-\epsilon}-\frac{C}{C_1}\log N+o(\log N). \label{eq:main-fixed-error}
\end{align}
\item For every fixed $\beta>0$,
\begin{align}
    \log \Mvlf(N,N^{-\beta}) &= \frac{NC}{1-N^{-\beta}}-\frac{C}{C_1}\max\{1,\beta\}\log N+o(\log N). \label{eq:main-polynomial}
\end{align}
\item For every fixed $E\in(0,C_1)$ and every sequence $\epsilon_N=\exp\{-EN+o(N)\}$,
\begin{align}
    \log \Mvlf(N,\epsilon_N) &= NC\left(1-\frac{E}{C_1}\right)+o(N). \label{eq:main-exponential}
\end{align}
\end{enumerate}
\end{corollary}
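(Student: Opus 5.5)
The plan is to derive each of the three regimes directly from \thmref{thm:main}. In each case we check that the given error sequence meets the hypotheses of the theorem for some admissible $\bar\epsilon$ and $\eta$. We then evaluate $A_N$, and show that the gap between the two sides of \eqref{eq:main-expansion} and the claimed expression is absorbed into the stated remainder. Only the expansion \eqref{eq:main-expansion} and its uniform $o(A_N)$ term are used.

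For item 1, fix $\epsilon\in(0,1)$ and take $\bar\epsilon=\epsilon$ and any $\eta\in(0,C_1)$. Condition \eqref{eq:main-reliability-condition} holds for all large $N$ because its left side is constant. We have $A_N=\max\{\log N,\log\frac1\epsilon\}=\log N$ for large $N$, so \eqref{eq:main-expansion} becomes \eqref{eq:main-fixed-error} directly.

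For item 2, set $\epsilon_N=N^{-\beta}$. This sequence is eventually at most any fixed $\bar\epsilon\in(0,1)$. Condition \eqref{eq:main-reliability-condition} reads $\beta\log N\leq(C_1-\eta)N$, which holds eventually for, say, $\eta=C_1/2$. Then
\begin{align}
A_N=\max\{1,\beta\}\log N.
\end{align}
Since this is $\Theta(\log N)$, the term $o(A_N)$ equals $o(\log N)$, which gives \eqref{eq:main-polynomial}. The first-order term $\frac{NC}{1-N^{-\beta}}$ stays exactly as in the theorem, so no expansion of it is needed.

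Item 3 is the only case needing a little care. Choose $\eta\in(0,C_1-E)$, for instance $\eta=(C_1-E)/2$. Because $\log\frac1{\epsilon_N}=EN+o(N)$, we eventually have $\log\frac1{\epsilon_N}\leq(E+\eta)N\leq(C_1-\eta)N$. We also have $\epsilon_N\to0$, so any $\bar\epsilon$ works. Next, $A_N=\log\frac1{\epsilon_N}=EN+o(N)$ for large $N$, since $EN$ dominates $\log N$, and hence $o(A_N)=o(N)$. For the leading term, write
\begin{align}
\frac{NC}{1-\epsilon_N}=NC+\frac{NC\epsilon_N}{1-\epsilon_N},
\end{align}
where the second term is $O(Ne^{-EN/2})=o(1)$. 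Therefore
\begin{align}
\log\Mvlf(N,\epsilon_N)=NC-\frac{C}{C_1}\bigl(EN+o(N)\bigr)+o(N)=NC\Bigl(1-\frac{E}{C_1}\Bigr)+o(N),
\end{align}
which is \eqref{eq:main-exponential}.

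No step presents a real obstacle. The one point to watch is that \eqref{eq:main-reliability-condition} must hold with a fixed $\eta$ in item 3, which is why $\eta$ is chosen strictly below $C_1-E$ to absorb the $o(N)$ slack in the exponent. The uniformity of the $o(A_N)$ term is not needed, since each item concerns a single fixed sequence.
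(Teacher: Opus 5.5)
Your proposal is correct and follows the paper's proof: verify the hypotheses of \thmref{thm:main} in each regime (choosing $\eta\in(0,C_1-E)$ in part 3), compute $A_N$, and substitute into \eqref{eq:main-expansion}. Your added details are sound. These are the explicit choice $\eta=(C_1-E)/2$ and the bound $\frac{NC\epsilon_N}{1-\epsilon_N}=o(1)$, which is stronger than the $o(N)$ the paper records.
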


\begin{IEEEproof}
Each specialization satisfies the hypotheses of \thmref{thm:main} for all sufficiently large $N$; in part 3, choose $\eta\in(0,C_1-E)$. For fixed $\epsilon$, we have $A_N=\log N$ for all sufficiently large $N$. For $\epsilon_N=N^{-\beta}$, we have $A_N=\max\{1,\beta\}\log N$. Finally, if $\epsilon_N=\exp\{-EN+o(N)\}$ with $0<E<C_1$, then $A_N=EN+o(N)$ and $\frac{NC}{1-\epsilon_N}=NC+o(N)$. Substitution in~\eqref{eq:main-expansion} proves all three parts.
\end{IEEEproof}

These specializations relate the new converse to the existing asymptotic results as follows.
\begin{enumerate}
\item Part 1 gives the fundamental limit in the non-vanishing error probability regime.
\item Part 2 treats polynomially decaying error probabilities. The coefficient of $\log N$ in the fundamental limit is $-\frac{C}{C_1}$ for $\beta\leq1$ and $-\frac{\beta C}{C_1}$ for $\beta>1$.
\item Part 3 recovers Burnashev's error exponent for $0<E<C_1$. Writing the limiting rate as $R$,~\eqref{eq:main-exponential} gives $E=C_1\left(1-\frac{R}{C}\right)$, in agreement with~\eqref{eq:intro-burnashev}.
\end{enumerate}

The main theorem also has the following moderate-deviations specialization. Let $\{\rho_N\}_{N\in\mathbb N}$ be a positive rate-backoff sequence such that $\rho_N\to0$ and $\frac{N\rho_N}{\log N}\to\infty$. Suppose that
\begin{align}
    \log\frac{1}{\epsilon_N} &= \frac{C_1}{C}N\rho_N+o(N\rho_N). \label{eq:main-moderate-deviations-condition}
\end{align}
The stated conditions imply that the hypotheses of \thmref{thm:main} hold for all sufficiently large $N$, and~\eqref{eq:main-expansion} gives
\begin{align}
    \log \Mvlf(N,\epsilon_N) &= N(C-\rho_N)+o(N\rho_N). \label{eq:main-moderate-deviations}
\end{align}
The rate-backoff condition in~\cite[Th.~1]{TruongTan2019} implies $\frac{N\rho_N}{\log N}\to\infty$. Therefore,~\eqref{eq:main-moderate-deviations-condition} and~\eqref{eq:main-moderate-deviations} recover the optimal VLF moderate-deviations constant established there.

\subsection{Structure of Asymptotically Optimal Code Sequences}

The proof of the main converse also reveals necessary structure in code sequences with a non-vanishing error probability that achieve first-order or second-order optimality. In the following theorem, part (a) requires only positive capacity, whereas part (b) assumes $C_1<\infty$ and identifies the communication and confirmation behavior required for second-order optimality.

\begin{theorem}\label{thm:structure}
Fix $\epsilon\in(0,1)$ and consider any sequence of $(N_j,M_j,\epsilon)$-VLF codes over a DMC with $C>0$, where $N_j\to\infty$. For the $j$th code, use $U_j$ for the common randomness, $W_j$ for the message, $Y_j^t$ for the output history, $\tau_j$ for the stopping time, and $\ms{f}_{j,t}$ for the encoder at time $t$. Keep the encoder and stopping time of each code unchanged and replace only its terminal decoder by a deterministically tie-broken MAP decoder. On $\{\tau_j=t\}$,
\begin{align}
    \widehat W_j &= \min\left\{m\in[M_j]:\Prob{W_j=m\mid\mc{F}_{j,t}}=\max_{m'\in[M_j]}\Prob{W_j=m'\mid\mc{F}_{j,t}}\right\}, \qquad \mc{F}_{j,t}\triangleq\sigma(U_j,Y_j^t). \label{eq:main-sequence-map-decoder}
\end{align}
Let $P_{{\mr{e}},j}\triangleq\Prob{\widehat W_j\neq W_j}$. The MAP decoder minimizes the error probability for the fixed encoder and stopping time, so $P_{{\mr{e}},j}\leq\epsilon$. Define
\begin{align}
    \rho_{j,m}(t) &\triangleq \Prob{W_j=m\mid\mc{F}_{j,t}}, \qquad \rho_{j,m}(\tau_j)\triangleq\sum_{t=0}^{\infty}1\{\tau_j=t\}\rho_{j,m}(t), \label{eq:main-sequence-posterior}\\
    Z_j &\triangleq 1-\max_{m\in[M_j]}\rho_{j,m}(\tau_j), \qquad \mc{A}_j\triangleq\left\{Z_j>\frac{1}{2}\right\}, \qquad \mc{E}_j\triangleq\{\widehat W_j\neq W_j\}. \label{eq:main-sequence-events}
\end{align}
We write $\bs{\rho}_j(t)\triangleq(\rho_{j,1}(t),\ldots,\rho_{j,M_j}(t))$ and $\bs{\rho}_j(\tau_j)\triangleq(\rho_{j,1}(\tau_j),\ldots,\rho_{j,M_j}(\tau_j))$.
\par\smallskip\noindent\emph{(a) First-order terminal behavior and decoding times.}\par\noindent
Suppose that
\begin{align}
    \frac{\log M_j}{N_j} &\to \frac{C}{1-\epsilon}. \label{eq:main-first-order-codes}
\end{align}
Then,
\begin{align}
    P_{{\mr{e}},j} &\to\epsilon, \qquad \E{Z_j(1-Z_j)}\to0, \qquad \Prob{\mc{A}_j}\to\epsilon, \qquad \Prob{\mc{A}_j\mathbin{\triangle}\mc{E}_j}\to0, \label{eq:main-terminal-branches}\\
    \E{Z_j\mid\mc{A}_j^{\mr{c}}} &\to0, \qquad \E{1-Z_j\mid\mc{A}_j}\to0, \label{eq:main-branch-errors}\\
    \frac{\E{\tau_j}}{N_j} &\to1, \qquad \frac{\E{\tau_j1\{\mc{E}_j\}}}{N_j}\to0, \qquad \frac{\E{\tau_j1\{\mc{A}_j\}}}{N_j}\to0, \qquad \frac{\E{\tau_j1\{\mc{A}_j^{\mr{c}}\}}}{N_j}\to1. \label{eq:main-branch-times}
\end{align}
By~\eqref{eq:main-terminal-branches}, the events $\mc{E}_j$, $\mc{A}_j$, and $\mc{A}_j^{\mr{c}}$ have positive probability for all sufficiently large $j$. For every $\nu>0$,
\begin{align}
    \Prob{\tau_j>\nu N_j\mid\mc{E}_j} &\to0, \qquad \Prob{\tau_j>\nu N_j\mid\mc{A}_j}\to0. \label{eq:main-branch-tail-times}
\end{align}

\par\smallskip\noindent\emph{(b) Second-order communication and confirmation structure.}\par\noindent
Under the assumptions of part (a), suppose in addition that $C_1<\infty$ and
\begin{align}
    \log M_j &= \frac{CN_j}{1-\epsilon}-\frac{C}{C_1}\log N_j+o(\log N_j). \label{eq:main-second-order-codes}
\end{align}
For all sufficiently large $j$, define $r_j\triangleq(\log N_j\log\log N_j)^{-\frac{1}{3}}$. For $t\in\mathbb N$, define the following quantities, where $m\in[M_j]$ and $y\in\mc{Y}$.
\begin{align}
    P_{j,m,t}(y) &\triangleq P_{Y|X}\!\left(y\middle|\ms{f}_{j,t}(U_j,m,Y_j^{t-1})\right), \qquad \overline P_{j,t}\triangleq\sum_{m=1}^{M_j}\rho_{j,m}(t-1)P_{j,m,t}, \label{eq:main-sequence-output-laws}\\
    I_{j,t} &\triangleq \sum_{m=1}^{M_j}\rho_{j,m}(t-1)D(P_{j,m,t}\|\overline P_{j,t}), \label{eq:main-sequence-mutual-information}\\
    \operatorname{EJS}_{j,t} &\triangleq \operatorname{EJS}(\bs{\rho}_j(t-1);P_{j,1,t},\ldots,P_{j,M_j,t}), \label{eq:main-sequence-ejs}\\
    \mc{H}_{j,t}&\triangleq\left\{\max_{m\in[M_j]}\rho_{j,m}(t-1)>1-r_j\right\}, \label{eq:main-sequence-information-high}\\
    T_{j,\mr{conf}} &\triangleq \sum_{t=1}^{\infty}1\{\tau_j\geq t,\mc{H}_{j,t}\}. \label{eq:main-confirmation-occupancy}
\end{align}
For every $t\in\mathbb N$, the quantities in~\eqref{eq:main-sequence-output-laws}--\eqref{eq:main-sequence-information-high} are $\mc{F}_{j,t-1}$-measurable.
The assumption $C_1<\infty$ implies that all conditional output distributions have the same support. Hence, the finite-time message posteriors lie strictly between zero and one almost surely, and $\operatorname{EJS}_{j,t}$ is well-defined.
Then,
\begin{align}
    N_j-\E{\tau_j} &= o(\log N_j), \qquad \E{T_{j,\mr{conf}}}=\frac{1-\epsilon}{C_1}\log N_j+o(\log N_j), \label{eq:main-second-order-occupancies}\\
    \E{\sum_{t=1}^{\infty}1\{\tau_j\geq t,\mc{H}_{j,t}^{\mr{c}}\}(C-I_{j,t})} &= o(\log N_j), \label{eq:main-communication-efficiency}\\
    \E{\sum_{t=1}^{\infty}1\{\tau_j\geq t,\mc{H}_{j,t}\}I_{j,t}} &= o(\log N_j), \label{eq:main-confirmation-mutual-information}\\
    \E{\sum_{t=1}^{\infty}1\{\tau_j\geq t,\mc{H}_{j,t}\}\left(C_1-\operatorname{EJS}_{j,t}\right)}
    &= o(\log N_j). \label{eq:main-confirmation-efficiency}
\end{align}
\end{theorem}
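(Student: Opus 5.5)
The plan is to extract both parts from the chain of inequalities behind the converse in \thmref{thm:main}, tracking every slack term rather than discarding it. For part (a) we start from the Shannon-entropy argument with the MAP decoder. We write $\log M_j=H(W_j)=\E{\sum_{t\le\tau_j}I_{j,t}}+\E{H(\bs\rho_j(\tau_j))}$ using the stopped mutual-information decomposition (optional stopping, justified by $\E{\tau_j}<\infty$ and $I_{j,t}\le C$). The one-step bound $I_{j,t}\le C$ then gives $\log M_j\le C\E{\tau_j}+\E{H(\bs\rho_j(\tau_j))}$. For the terminal entropy we use the Fano-type bound $H(\bs\rho)\le h_{\mr b}(Z)+Z\log(M-1)$ with $Z$ the conditional MAP error, which yields
\begin{align}
\log M_j &\le C\E{\tau_j}+\log 2+P_{\mr e,j}\log M_j .
\end{align}
Combined with \eqref{eq:main-first-order-codes}, $\E{\tau_j}\le N_j$ and $P_{\mr e,j}\le\epsilon$, this forces $P_{\mr e,j}\to\epsilon$ and $\E{\tau_j}/N_j\to1$. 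In other words, every inequality in the chain is tight to $o(N_j)$, so in particular $\E{\sum_{t\le\tau_j}(C-I_{j,t})}=o(N_j)$ and $\E{Z_j\log M_j-H(\bs\rho_j(\tau_j))}=o(N_j)$.

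To obtain $\E{Z_j(1-Z_j)}\to0$ we bring in the R\'enyi stage. By \lemref{lem:renyi}, at order $\alpha=1+2/\log(M_j-1)$ the same decomposition holds with $C$ replaced by the Sibson capacity, at $O(1)$ cost. By \lemref{lem:terminal-renyi}, the terminal R\'enyi entropy is at most the Shannon-type bound minus $c\log(M_j-1)\E{Z_j(1-Z_j)}$. The R\'enyi chain must also be tight to $o(N_j)$, which gives $\log M_j\,\E{Z_j(1-Z_j)}=o(N_j)$ and hence $\E{Z_j(1-Z_j)}\to0$. Markov's inequality then splits the terminal state into $Z_j$ near $0$ and $Z_j$ near $1$, giving \eqref{eq:main-branch-errors}. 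Since $\E{Z_j}=P_{\mr e,j}$, we get $\Prob{\mc A_j}\to\epsilon$. Because $\Prob{\mc E_j\mid\mc F_{j,\tau_j}}=Z_j$, we also get $\Prob{\mc A_j\triangle\mc E_j}=\E{Z_j1\{\mc A_j^{\mr c}\}+(1-Z_j)1\{\mc A_j\}}\to0$.

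For the time split we refine the entropy accounting on each branch. On $\mc A_j^{\mr c}$ the terminal entropy is $o(\log M_j)$ on average, while on $\mc A_j$ it is at most $\log M_j$. Write $S_j\triangleq\log M_j-H(\bs\rho_j(\tau_j))$, the total information gathered. Its expectation is at most $C\E{\tau_j}$ on each branch; to make this per-branch statement rigorous, we take the stopped submartingale $\sum_{t\le\tau}(C-I)$ and condition on $\mc F_{\tau}$-measurable events. We then obtain $(1-\epsilon)\log M_j\le C\E{\tau_j1\{\mc A_j^{\mr c}\}}+o(N_j)$, and combining this with the total $C\E{\tau_j}\le CN_j$ forces $\E{\tau_j1\{\mc A_j\}}=o(N_j)$. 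The claims for $\mc E_j$ then follow from $\Prob{\mc A_j\triangle\mc E_j}\to0$ together with uniform integrability of $\tau_j/N_j$ on small events. That uniform integrability is itself delicate, since it requires controlling $\E{\tau_j1\{B\}}$ for $\Prob{B}\to0$; we plan to get it from the same per-branch inequality applied to $B\cap\mc A_j^{\mr c}$, where the entropy drop is forced. Finally, \eqref{eq:main-branch-tail-times} follows by Markov's inequality.

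For part (b) we rerun the second-stage argument of \secref{sec:proof-main} with the tighter hypothesis \eqref{eq:main-second-order-codes}. The converse chain reads
\begin{align}
\log M_j &\le \frac{CN_j}{1-\epsilon}-\frac{C}{C_1}\log N_j+o(\log N_j)-(\text{nonnegative slacks}),
\end{align}
where the slacks are: $\frac{C}{1-\epsilon}(N_j-\E{\tau_j})$; the communication deficit $\E{\sum 1\{\mc H^{\mr c}\}(C-I)}$; the confirmation terms arising when the potential drift $\operatorname{EJS}-I$ is bounded by $C_1\cdot$(normalized capacity deficit) plus a remainder; and in particular the confirmation mutual information and $C_1-\operatorname{EJS}$ on $\mc H_{j,t}$. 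Second-order optimality forces each slack to be $o(\log N_j)$, which gives \eqref{eq:main-communication-efficiency}--\eqref{eq:main-confirmation-efficiency} and $N_j-\E{\tau_j}=o(\log N_j)$.

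The occupancy statement comes from summing the potential drift, which is at most $C_1$ per step on $\mc H$ and negligible off $\mc H$ thanks to the threshold $r_j$. The terminal potential must reach $(1-\epsilon)\log N_j$; this yields $\E{T_{j,\mr{conf}}}\ge\frac{1-\epsilon}{C_1}\log N_j-o(\log N_j)$, while the matching upper bound follows because confirmation steps carry $I\approx0$ and therefore cost $C$ each in the entropy budget.

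We expect the main obstacle to be showing that the remainder terms from the drift decomposition, restricted to $\mc H^{\mr c}$ at threshold $r_j=(\log N_j\log\log N_j)^{-1/3}$, contribute only $o(\log N_j)$. The choice of $r_j$ must balance the potential accrued below the threshold against the chi-square error in approximating $\operatorname{EJS}-I$. We will first try bounding that remainder by $O(1/r_j)$ times a quantity with bounded stopped sum, which the chosen $r_j$ makes $o(\log N_j)$.
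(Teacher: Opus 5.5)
\textbf{Gap in the decoding-time split of part (a).} Most of your proposal follows the paper's route: Fano with the stopped Shannon identity for $P_{\mr e,j}\to\epsilon$ and $\E{\tau_j}/N_j\to1$, \lemref{lem:terminal-concentration} for $\E{Z_j(1-Z_j)}\to0$, the pointwise endpoint bounds for \eqref{eq:main-terminal-branches}--\eqref{eq:main-branch-errors}, and, in part (b), \lemref{lem:scalar} and \lemref{lem:mutual-information-difference} with the sandwich $G_{j,\mr{conf}}/C_1\le\E{T_{j,\mr{conf}}}\le S_j/(C-\vartheta_j)$. The problem is the decoding-time split. You need, for $B=\mc A_j^{\mr c}$ and later for other $\mc G_j$-measurable events, a bound $\E{(\log M_j-H(\bs\rho_j(\tau_j)))1_B}\le C\E{\tau_j1_B}+o(N_j)$.

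Conditioning the nondecreasing process $\sum_{t\le\tau_j}(C-I_{j,t})$ on $\mc F_\tau$-measurable events yields only the trivial bound $\E{1_B\sum_{t\le\tau_j}I_{j,t}}\le C\E{\tau_j1_B}$. The realized information $\log M_j-H(\bs\rho_j(\tau_j))$ equals $\sum_{t\le\tau_j}I_{j,t}$ plus the stopped value of a zero-mean martingale, namely the entropy drops minus their conditional means. Optional stopping controls this martingale only unconditionally; restricted to an $\mc F_\tau$-measurable event, its mean can be positive.

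The inequality without an error term is in fact false. Take the BEC with $M=2$, send the bit until it is unerased, and let $B=\{\tau=1\}$. Then $\E{(\log2-H(\bs\rho(\tau)))1_B}=(1-\delta)\log2$, while $C\E{\tau1_B}=(1-\delta)^2\log2$, which is strictly smaller for $\delta\in(0,1)$. So you must show that the martingale term on $B$ is $o(N_j)$. The posterior-entropy increments give you no handle on this. They are not uniformly bounded: one output can move $H$ by order $\log M_j=\Theta(N_j)$, for instance when transition probabilities vanish, which part (a) allows. Hence no direct $L^2$ estimate is available from the entropy process.

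\textbf{Missing ingredient.} You need a per-event bound driven by a process with bounded increments. The paper supplies this in \lemref{lem:correct-decoding-time}, via a change of measure to i.i.d.\ $P_Y^*$ outputs and Azuma--Hoeffding for $\Lambda_n=\sum_{t\le n}\imath(X_t;Y_t)$, whose increments have range at most $b_{\mr{ch}}$.

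Your route can be repaired the same way. Let $Q$ be the law of the padded observation under i.i.d.\ $P_Y^*$ outputs. Then $\log(M_j\rho_{j,W_j}(\tau_j))=\Lambda_{\tau_j}+\log\frac{\mr dQ}{\mr dP}(\mathsf O_{\tau_j})$. Applying Jensen on $B$ gives $\E{(\log M_j-H(\bs\rho_j(\tau_j)))1_B}\le\E{\Lambda_{\tau_j}1_B}+e^{-1}$. The $L^2$ bound for the bounded-increment martingale $\ms M$ then gives $\E{\Lambda_{\tau_j}1_B}\le C\E{\tau_j1_B}+\frac{b_{\mr{ch}}}{2}\sqrt{N_j}$. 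This holds uniformly over $B\in\mc G_j$.

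With this bound, your argument for $\mc A_j$ goes through. Your uniform-integrability step for $\mc E_j$ also goes through, for example via
\[
\E{\tau_j1\{\mc E_j\}}=\E{\tau_jZ_j}\le\delta N_j+\E{\tau_j1\{\mc A_j\}}+\E{\tau_j1\{\delta<Z_j\le\tfrac12\}}
\]
together with $\Prob{\delta<Z_j\le\frac12}\to0$.

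The paper orders the steps differently. It first applies \lemref{lem:correct-decoding-time} to the correct-decoding event, which gives $\E{\tau_j1\{\mc E_j\}}=o(N_j)$ directly. It then uses $\E{\tau_jZ_j}=\E{\tau_j1\{\mc E_j\}}$ and $1\{\mc A_j\}\le2Z_j$. This handles both branches at once and avoids the uniform-integrability issue entirely.
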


\begin{IEEEproof}
See \secref{sec:proof-structure}.
\end{IEEEproof}

Part (a) applies to every first-order-optimal sequence, independently of its construction. The error constraint is asymptotically active, and the terminal conditional MAP error approaches either zero or one. The decoder-observable event $\mc{A}_j$ identifies the latter realizations and becomes equivalent in probability to the MAP-error event $\mc{E}_j$ (see~\eqref{eq:main-terminal-branches}). Its probability is $\epsilon+o(1)$, and, conditioned on this event, the decoder is almost always wrong and the mean decoding time is $o(N_j)$, although it need not stop at time zero. On $\mc{A}_j^{\mr{c}}$, the terminal decision is asymptotically reliable and the mean decoding time is $\frac{N_j}{1-\epsilon}+o(N_j)$.

Part (b) identifies the communication and confirmation structure required for second-order optimality. Over the low-posterior times $\mc{H}_{j,t}^{\mr{c}}$,~\eqref{eq:main-communication-efficiency} states that the expected stopped sum of the capacity deficits $C-I_{j,t}$ is $o(\log N_j)$. Over the high-posterior times $\mc{H}_{j,t}$,~\eqref{eq:main-second-order-occupancies}--\eqref{eq:main-confirmation-efficiency} show that the expected number of channel uses is $\frac{1-\epsilon}{C_1}\log N_j+o(\log N_j)$, while the expected stopped sums of $I_{j,t}$ and $C_1-\operatorname{EJS}_{j,t}$ are both $o(\log N_j)$. The set of high-posterior times may consist of several disjoint blocks.

\subsection{The BEC}

The finite-$C_1$ assumption in \thmref{thm:main} excludes channels whose conditional output distributions have different supports. For the BEC, we determine the exact minimum expected decoding time. Its input alphabet is $\{0,1\}$, its output alphabet is $\{0,1,\mathsf e\}$, and, for $x\in\{0,1\}$,
\begin{align}
    P_{Y|X}(x|x) &= 1-\delta, \qquad P_{Y|X}(\mathsf e|x)=\delta, \qquad \delta\in[0,1). \label{eq:main-bec-channel}
\end{align}
All other transition probabilities are zero.
For an integer $M\geq2$ and $\epsilon\in[0,1]$, let $\Nvlf(M,\epsilon;\delta)$ denote the minimum expected decoding time among VLF codes for $M$ equiprobable messages over this channel with average error probability at most $\epsilon$. Define
\begin{align}
    k_M &\triangleq \left\lfloor\log_2M\right\rfloor, \qquad \ell_{\mr{H}}(M)\triangleq k_M+2\left(1-\frac{2^{k_M}}{M}\right). \label{eq:main-huffman-length}
\end{align}
The quantity $\ell_{\mr{H}}(M)$ is the minimum average length of a binary prefix code for $M$ equiprobable messages~\cite{Huffman1952}.

\begin{theorem}\label{thm:bec}
For every integer $M\geq2$, $\delta\in[0,1)$, and $\epsilon\in[0,1-\frac{1}{M}]$,
\begin{align}
    \Nvlf(M,\epsilon;\delta) &= \frac{M(1-\epsilon)-1}{M-1}\frac{\ell_{\mr{H}}(M)}{1-\delta}. \label{eq:main-bec}
\end{align}
For $\epsilon\in[1-\frac{1}{M},1]$, we have $\Nvlf(M,\epsilon;\delta)=0$.
\end{theorem}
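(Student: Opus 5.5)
The plan is to prove matching achievability and converse bounds. In both, the time axis is reduced to the number of \emph{unerased} channel outputs, and the problem becomes a combinatorial one about randomized binary decision trees over $M$ equiprobable messages. Throughout, set $q\triangleq\frac{M(1-\epsilon)-1}{M-1}\in[0,1]$ for $\epsilon\in[0,1-\frac{1}{M}]$.

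\emph{Achievability.} I would use common randomness $U\in\{0,1\}$ with $\Prob{U=1}=q$.
\begin{enumerate}
\item If $U=0$, the decoder stops at $\tau=0$ and outputs a fixed message, so it errs with probability $1-\frac{1}{M}$.
\item If $U=1$, the encoder sends the bits of the message's codeword in an optimal binary prefix (Huffman) code for $M$ equiprobable messages. Through feedback it learns of every erasure and retransmits each bit until it is received unerased. The decoder stops once it has a complete codeword, so decoding is error-free.
\end{enumerate}
Each bit costs a geometric number of channel uses with mean $\frac{1}{1-\delta}$, so Wald's identity gives expected time $\frac{\ell_{\mr{H}}(M)}{1-\delta}$. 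The error probability is $(1-q)(1-\frac{1}{M})=\epsilon$, and the expected time is $q\,\ell_{\mr{H}}(M)/(1-\delta)$, which is~\eqref{eq:main-bec}. For $\epsilon\geq1-\frac{1}{M}$, stopping at time $0$ already suffices, so $\Nvlf(M,\epsilon;\delta)=0$.

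\emph{Converse, reduction step.} Let $S_\tau$ be the number of unerased outputs up to $\tau$. The erasure indicators are i.i.d.\ and independent of $(U,W)$ and of the past, so optional stopping (assuming $\E{\tau}<\infty$, else nothing to prove) gives $(1-\delta)\E{\tau}=\E{S_\tau}$. Now condition on $U$ and on the entire erasure pattern; this conditioning does not change the law of $W$. Each unerased output is then a deterministic binary function of $W$ and the previously revealed bits, so the process is a binary query tree. The stopping rule selects a (possibly infinite) subtree whose leaves $v$ carry message sets $\mc{S}_v$ at depth $d_v$.

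At each leaf the best decision is correct with probability $\frac{1}{M}$, so the success probability is at most $L/M$, where $L$ is the number of leaves. The expected number of unerased outputs is $\frac{1}{M}\sum_v|\mc{S}_v|d_v$. Hence it suffices to show, for every finite tree,
\begin{align}
    \sum_v|\mc{S}_v|d_v &\geq (L-1)\,\frac{M\ell_{\mr{H}}(M)}{M-1}. \label{eq:bec-plan-tree}
\end{align}
This bound is affine in $L$, so averaging over $U$ and the erasure pattern preserves it. Combining it with $1-\epsilon\leq\E{L}/M$ then gives $\E{S_\tau}\geq q\,\ell_{\mr{H}}(M)$. Infinite trees are handled by truncation and monotone convergence.

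\emph{Converse, tree inequality.} I would prove~\eqref{eq:bec-plan-tree} by strong induction on $M$, allowing general set sizes $n$. Write $g(n)\triangleq n\ell_{\mr{H}}(n)$ and $r(n)\triangleq\frac{g(n)}{n-1}$. A leaf at the root has $L=1$ and cost $0$, which settles the base case. Otherwise the root splits the set into parts of sizes $a+b=n$, with $L_1\leq a$ and $L_2\leq b$ leaves in the two subtrees, and the cost is $n+\mathrm{cost}_1+\mathrm{cost}_2$. Applying the induction hypothesis to each subtree, it remains to verify
\begin{align}
    n+(L_1-1)r(a)+(L_2-1)r(b) &\geq (L_1+L_2-1)\,r(n) \label{eq:bec-plan-split}
\end{align}
for all admissible $a,b,L_1,L_2$. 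Both sides are affine in $(L_1,L_2)$, so it suffices to check the corners $L_i\in\{1,a\}$ and $L_i\in\{1,b\}$.
\begin{enumerate}
\item At $L_1=a$ and $L_2=b$, \eqref{eq:bec-plan-split} reduces to the Huffman recursion $g(n)\leq n+g(a)+g(b)$.
\item At $L_1=L_2=1$, it reads $n\geq r(n)$, i.e.\ $\ell_{\mr{H}}(n)\leq n-1$, which holds.
\item At the mixed corners, e.g.\ $L_1=a$, $L_2=1$, it reads $n+g(a)\geq a\,r(n)$.
\end{enumerate}

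The mixed corners are where I expect the main obstacle. Since $r$ is increasing ($r(2)=2$, $r(3)=5/2$, $r(4)=8/3,\ldots$), the deficit $r(n)-r(a)$ must be absorbed by the root cost $n$. I would try to derive the needed inequality from the explicit form $g(n)=nk_n+2(n-2^{k_n})$, which is piecewise linear and convex in $n$, together with $g(n)\leq n+g(a)+g(b)$. If a corner fails, the fallback is to strengthen the induction hypothesis to the lower convex envelope in $L$ of the minimal cost over trees on $n$ messages.
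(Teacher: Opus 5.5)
Your achievability scheme and your reduction of the converse match the paper's. The reduction consists of conditioning on $U$ and the erasure pattern, the success bound $L/M$, and $\E{S_\tau}=(1-\delta)\E{\tau}$. Your target tree inequality is exactly the paper's pathwise bound $\frac{1}{M}\sum_i s_id_i\geq\frac{K-1}{M-1}\ell_{\mr{H}}(M)$.

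\textbf{The gap.} You never establish this inequality. Your induction closes only if the mixed corners $n+g(a)\geq a\,r(n)$ hold, and you leave these open, even allowing that they might fail and proposing an unspecified fallback. These corners carry the combinatorial content of the converse, so as written the converse is not proved.

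\textbf{How to close it.} The monotonicity of $r$, which you treat as the obstacle, is exactly what proves these corners. Because $g$ is piecewise linear with slope $k+2$ on $[2^k,2^{k+1}]$, it is convex, and $g(1)=0$. Hence $r(s)=\frac{g(s)-g(1)}{s-1}$ is a chord slope from $(1,0)$ and is nondecreasing. The paper verifies this directly as $\frac{\ell_{\Sigma}(s+1)}{s}-\frac{\ell_{\Sigma}(s)}{s-1}=\frac{2^{b+1}-(b+2)}{s(s-1)}>0$. Therefore $g(b)\leq(b-1)r(n)$ for $1\leq b\leq n$, and combining this with your first corner $g(n)\leq n+g(a)+g(b)$ gives
\[
a\,r(n)=g(n)-(b-1)r(n)\leq g(n)-g(b)\leq n+g(a).
\]
The $(1,1)$ corner follows the same way.

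\textbf{Smaller repairs.} Induct on the height of the finite tree rather than on $n$ alone. A root query may send all $n$ messages to one child ($a=n$, $b=0$); this case is trivial, but strong induction on $n$ does not reach it. Also adopt the convention $(L-1)r(1)=0$ for singleton sets.

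\textbf{Comparison with the paper.} Once monotonicity of $r$ is available, the induction is unnecessary. The paper appends an optimal prefix tree with $s_i$ leaves below each terminal transcript. This yields a prefix tree with $M$ leaves and hence $\sum_i\bigl(s_id_i+\ell_{\Sigma}(s_i)\bigr)\geq\ell_{\Sigma}(M)$. Substituting $\ell_{\Sigma}(s_i)\leq\frac{s_i-1}{M-1}\ell_{\Sigma}(M)$ and $\sum_i(s_i-1)=M-K$ then gives the tree inequality in one line, with no corner analysis.
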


\begin{IEEEproof}
See \secref{sec:proof-bec}.
\end{IEEEproof}

For $\epsilon=0$, \thmref{thm:bec} agrees with~\cite[Cor.~6]{DevassyEtAl2016}. When $\epsilon=0$ and $M=2^k$ for an integer $k\geq1$, the value in~\eqref{eq:main-bec} equals the expected decoding time achieved by the construction in~\cite[Th.~7]{PolyanskiyPoorVerdu2011}. For $0<\epsilon\leq1-\frac{1}{M}$, the factor $\frac{M(1-\epsilon)-1}{M-1}$ in~\eqref{eq:main-bec} is the probability of proceeding with zero-error Huffman transmission when the code randomizes between that procedure and an immediate guess. In this range, the exact formula strictly strengthens the achievability bound in~\cite[Th.~3]{DevassyEtAl2016} and matches or strengthens the converse bound in~\cite[Th.~5]{DevassyEtAl2016}. For every fixed $\epsilon\in[0,1)$, \thmref{thm:bec} implies $\log \Mvlf(N,\epsilon)=\frac{NC}{1-\epsilon}+O(1)$ as $N\to\infty$.

\section{Supporting Results for the Converse} \label{sec:supporting}

Consider a DMC with $C>0$. We fix an arbitrary $(N,M,\epsilon)$-VLF code with $M\geq3$ and $\epsilon<1-\frac{1}{M}$. All quantities in this section refer to the causal encoders and the stopping time of this code. We first place the terminal observation on a fixed product space and derive the stopped Shannon-entropy identity. We then combine this identity with a stopped R\'enyi-entropy bound to show that a small Fano deficit forces the terminal conditional MAP error toward zero or one. These arguments apply without assuming $C_1<\infty$. Under $C_1<\infty$, we compare the Shannon-entropy and EJS drifts and thereby lower-bound the accumulated capacity deficit in terms of the terminal posterior potential.

For $m\in[M]$ and $t\in\mathbb N_0$, define the message posterior
\begin{align}
    \rho_m(t) &\triangleq \Prob{W=m\mid\mc{F}_t}, \label{eq:problem-posterior}
\end{align}
and write $\bs{\rho}(t)\triangleq(\rho_1(t),\ldots,\rho_M(t))$. We use the Bayes version on every positive-probability finite history and set the posterior to the uniform distribution on null histories. Since $\E{\tau}<\infty$, the stopping time is finite almost surely, and the random variables
\begin{align}
    \rho_m(\tau) &\triangleq \sum_{t=0}^{\infty}1\{\tau=t\}\rho_m(t), \qquad m\in[M], \label{eq:problem-stopped-posterior}
\end{align}
are well-defined. We write $\bs{\rho}(\tau)\triangleq(\rho_1(\tau),\ldots,\rho_M(\tau))$. To identify the information on which the terminal posterior is conditioned, choose a symbol $\mathsf T\notin\mc{Y}$ and define the padded outputs
\begin{align}
    \widetilde Y_t &\triangleq \begin{cases}Y_t,&\tau\geq t,\\ \mathsf T,&\tau<t,\end{cases}
    \qquad t\in\mathbb N, \label{eq:problem-padded-output}\\
    \mathsf{O}_\tau &\triangleq (U,\widetilde Y_1,\widetilde Y_2,\ldots)\in\mc{U}\times\left(\mc{Y}\cup\{\mathsf T\}\right)^{\mathbb N}, \qquad \mc{G}\triangleq\sigma(\mathsf{O}_\tau), \label{eq:problem-terminal-observation}
\end{align}
where the range of $\mathsf O_\tau$ carries its product sigma-field.
Each padded coordinate is measurable because $\{\tau\geq t\}\in\mc{F}_{t-1}$ and $Y_t$ is $\mc{F}_t$-measurable.

To prove the converse in~\cite[Th.~4]{PolyanskiyPoorVerdu2011}, Polyanskiy \emph{et al.} place an extended-channel output on a fixed infinite sequence. The padded observation $\mathsf O_\tau$ in~\eqref{eq:problem-terminal-observation} instead places decoder-side VLF stopping directly on a fixed product space, without changing the physical channel. The following lemma shows that $\mathsf O_\tau$ determines the stopping time and terminal estimate and that $\rho_m(\tau)$ is the posterior conditioned on this observation.

\begin{lemma}\label{lem:padded-observation}
The stopping time $\tau$ and every terminal estimate satisfying~\eqref{eq:problem-terminal-estimate} are $\mc{G}$-measurable. Moreover, for every $m\in[M]$,
\begin{align}
    \rho_m(\tau) &= \Prob{W=m\mid\mc{G}} \quad\text{almost surely}. \label{eq:problem-terminal-posterior}
\end{align}
\end{lemma}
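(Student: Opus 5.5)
We first show that $\mathsf O_\tau$ determines $\tau$, then that it determines every terminal estimate, and finally we identify the conditional law of $W$ given $\mc{G}$ by testing against generating events of $\mc{G}$.

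\emph{Measurability of $\tau$.} Since $\{\tau\geq t\}\in\mc{F}_{t-1}$ and $\mathsf T\notin\mc{Y}$, we have
\begin{align}
\{\tau\geq t\}=\{\widetilde Y_t\neq\mathsf T\}, \qquad t\in\mathbb N.
\end{align}
Hence $\tau=\sum_{t\ge1}1\{\widetilde Y_t\neq\mathsf T\}$ on the almost-sure event $\{\tau<\infty\}$, and $\tau$ is $\mc{G}$-measurable. (On the null set $\{\tau=\infty\}$ we may redefine $\tau$ in any fixed way, or note that the equality still holds there with value $\infty$.)

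\emph{Measurability of the terminal estimate.} On $\{\tau=t\}$ we have $Y^t=\widetilde Y^t$, so
\begin{align}
\widehat W=\sum_{t}1\{\tau=t\}\,\ms{g}_t(U,\widetilde Y^t),
\end{align}
which is a $\mc{G}$-measurable function. More generally, for any $\mc{F}_t$-measurable random variable $V_t$, we can write $V_t=\phi_t(U,Y^t)$ for some measurable $\phi_t$ by the Doob--Dynkin lemma. Then $1\{\tau=t\}V_t=1\{\tau=t\}\phi_t(U,\widetilde Y^t)$ is $\mc{G}$-measurable.

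\emph{The posterior identity.} Set $\xi_m\triangleq\rho_m(\tau)$. Each summand $1\{\tau=t\}\rho_m(t)$ is $\mc{G}$-measurable by the previous step, so $\xi_m$ is $\mc{G}$-measurable. It remains to verify that $\E{1\{W=m\}1_B}=\E{\xi_m1_B}$ for every $B\in\mc{G}$. The key structural fact is
\begin{align}
\mc{G}\cap\{\tau=t\}=\mc{F}_t\cap\{\tau=t\},
\end{align}
meaning that every $B\in\mc{G}$ satisfies $B\cap\{\tau=t\}=B_t\cap\{\tau=t\}$ for some $B_t\in\mc{F}_t$. This holds because on $\{\tau=t\}$ the tail coordinates $\widetilde Y_s$ for $s>t$ are all equal to $\mathsf T$, so $\mathsf O_\tau$ is a measurable function of $(U,Y^t)$ there. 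One justifies this either by a $\pi$--$\lambda$ argument on cylinder sets or by taking $B_t=\{(U,Y^t,\mathsf T,\mathsf T,\ldots)\in\Gamma\}$ when $B=\{\mathsf O_\tau\in\Gamma\}$.

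Given this fact, we fix $t$ and use that $\{\tau=t\}\cap B_t\in\mc{F}_t$ together with $\rho_m(t)=\Prob{W=m\mid\mc{F}_t}$:
\begin{align}
\E{1\{W=m\}1_B1\{\tau=t\}}=\E{\rho_m(t)1_{B_t}1\{\tau=t\}}=\E{\xi_m1_B1\{\tau=t\}}.
\end{align}
Summing over $t\in\mathbb N_0$ gives the identity, where we use $\tau<\infty$ almost surely and dominated convergence, since all terms are bounded by one. The choice of version on null histories does not matter, because those histories have probability zero.

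\emph{Main obstacle.} The one step needing care is the restriction identity $\mc{G}\cap\{\tau=t\}=\mc{F}_t\cap\{\tau=t\}$ on the infinite product space. The explicit construction of $B_t$ handles it cleanly, since the map $(u,y^t)\mapsto(u,y^t,\mathsf T,\ldots)$ is measurable into the product sigma-field.
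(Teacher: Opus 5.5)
Your proof is correct and follows essentially the same route as the paper: you recover $\tau$ from the first padding symbol, write the estimate and each $1\{\tau=t\}\rho_m(t)$ as functions of $(U,\widetilde Y^t)$ on the slice $\{\tau=t\}$, reduce every $B\in\mc{G}$ on that slice to an $\mc{F}_t$-event, and sum the slice-wise defining identities over $t$. The only difference is cosmetic. You obtain $B_t$ directly from $B=\{\mathsf O_\tau\in\Gamma\}$ through the measurable map $(u,y^t)\mapsto(u,y^t,\mathsf T,\mathsf T,\ldots)$, whereas the paper checks $B\cap\{\tau=t\}\in\mc{F}_t$ on cylinder events and then extends it to all of $\mc{G}$ by a good-sets argument.
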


\begin{IEEEproof}
The first padding symbol occurs at time $\tau+1$, and hence
\begin{align}
    \tau &= \inf\{t\in\mathbb N:\widetilde Y_t=\mathsf T\}-1=\sum_{t=1}^{\infty}1\{\widetilde Y_t\neq\mathsf T\} \quad\text{almost surely}. \label{eq:problem-recover-tau}
\end{align}
Thus, $\tau$ is $\mc{G}$-measurable. On $\{\tau=t\}$, the padded observation reveals $(U,Y^t)$ and hence the terminal value $\ms{g}_t(U,Y^t)$. Since each slice $\{\tau=t\}$ is $\mc{G}$-measurable, the terminal estimate is $\mc{G}$-measurable.

It remains to identify the posterior. For every $B\in\mc{G}$ and $t\in\mathbb N_0$,
\begin{align}
    B\cap\{\tau=t\} &\in\mc{F}_t. \label{eq:problem-terminal-slice}
\end{align}
Indeed, the claim holds first for every cylinder event generated by $U$ and finitely many padded coordinates. On $\{\tau=t\}$, the first $t$ padded output coordinates equal $Y^t$, and every later padded coordinate equals the deterministic symbol $\mathsf T$. The collection of $B$ satisfying~\eqref{eq:problem-terminal-slice} is a sigma-field, so the claim extends to $\mc{G}$.

Conversely, if $A\in\mc{F}_t$, then $A\cap\{\tau=t\}\in\mc{G}$: on $\{\tau=t\}$, the pair $(U,Y^t)$ equals $(U,\widetilde Y^t)$, and the event $\{\tau=t\}$ is recoverable from the padded coordinates. Since $\rho_m(t)$ is a measurable function of $(U,Y^t)$, each random variable $1\{\tau=t\}\rho_m(t)$ is $\mc{G}$-measurable. Their countable slice sum $\rho_m(\tau)$ is therefore $\mc{G}$-measurable.

We can now verify the conditional-expectation identity. For every $B\in\mc{G}$,
\begin{align}
    \E{1\{W=m\}1\{B\}} &= \sum_{t=0}^{\infty}\E{1\{W=m\}1\{B\cap\{\tau=t\}\}}=\sum_{t=0}^{\infty}\E{\rho_m(t)1\{B\cap\{\tau=t\}\}}=\E{\rho_m(\tau)1\{B\}}. \label{eq:problem-posterior-test}
\end{align}
This is the defining property of the conditional probability in~\eqref{eq:problem-terminal-posterior}.
\end{IEEEproof}

Define the terminal MAP estimate using deterministic tie-breaking by
\begin{align}
    \widehat W_{\mr{MAP}} &\triangleq \min\left\{ m\in[M]: \rho_m(\tau)=\max_{m'\in[M]}\rho_{m'}(\tau) \right\}, \label{eq:problem-map-estimate}
\end{align}
and define the conditional MAP error probability
\begin{align}
    \epsilon_{\mr{MAP}} &\triangleq 1-\max_{m\in[M]}\rho_m(\tau). \label{eq:problem-terminal-map-error}
\end{align}
Conditioned on $\mc{G}$, the prescribed decoder and the MAP decoder satisfy
\begin{align}
    \Prob{\widehat W\neq W\mid\mc{G}} &= 1-\rho_{\widehat W}(\tau) \geq \epsilon_{\mr{MAP}}, \label{eq:problem-original-conditional-error}\\
    \Prob{\widehat W_{\mr{MAP}}\neq W\mid\mc{G}} &= \epsilon_{\mr{MAP}}. \label{eq:problem-map-conditional-error}
\end{align}
Consequently,
\begin{align}
    \E{\epsilon_{\mr{MAP}}} &= \Prob{\widehat W_{\mr{MAP}}\neq W} \leq \Pe \leq \epsilon. \label{eq:problem-map-risk}
\end{align}

At time $t$, the encoder input under $W=m$ is the $\mc{F}_{t-1}$-measurable quantity
\begin{align}
    x_t(m) &\triangleq \ms{f}_t(U,m,Y^{t-1}), \qquad t\in\mathbb N. \label{eq:support-input-message}
\end{align}
For the realized history, define
\begin{align}
    P_{m,t}(y) &\triangleq P_{Y|X}\left(y\middle|x_t(m)\right), \qquad y\in\mc{Y}, \label{eq:support-message-output}\\
    \overline P_t(y) &\triangleq \sum_{m=1}^{M}\rho_m(t-1)P_{m,t}(y), \qquad y\in\mc{Y}. \label{eq:support-output-mixture}
\end{align}

The one-step conditional mutual information between the message and the next channel output is
\begin{align}
    I_t &\triangleq \sum_{m:\rho_m(t-1)>0}\rho_m(t-1)D\left(P_{m,t}\middle\|\overline P_t\right). \label{eq:support-mutual-information}
\end{align}
This is the $\mc{F}_{t-1}$-measurable one-step mutual information after fixing the past. Zero-posterior terms are omitted; under the finite-$C_1$ assumption used in the main converse, the restriction is vacuous.
For every realized past, $I_t$ is the mutual information induced by the corresponding conditional input distribution. Hence, $C-I_t\geq0$ almost surely.
For every $m$ such that $\rho_m(t-1)>0$, Bayes' rule gives
\begin{align}
    \rho_m(t) &= \frac{\rho_m(t-1)P_{m,t}(Y_t)}{\overline P_t(Y_t)}. 
\end{align}
Consequently, the posterior Shannon-entropy decrease is
\begin{align}
    \E{H(\bs{\rho}(t-1))-H(\bs{\rho}(t))\mid\mc{F}_{t-1}}
    &= \sum_{m:\rho_m(t-1)>0}\rho_m(t-1)
    \sum_{y:P_{m,t}(y)>0}P_{m,t}(y)
    \log\frac{\rho_m(t-1)P_{m,t}(y)}
    {\rho_m(t-1)\overline P_t(y)} \\
    &= \sum_{m:\rho_m(t-1)>0}\rho_m(t-1)D\left(P_{m,t}\middle\|\overline P_t\right)=I_t. \label{eq:support-shannon-drift}
\end{align}

\subsection{Terminal Posterior Concentration} \label{subsec:terminal-posterior}

Fix a capacity-achieving input distribution $P_X^*$ and let $P_Y^*$ denote the induced output distribution,
\begin{align}
    P_Y^*(y) &\triangleq \sum_{x\in\mc{X}}P_X^*(x)P_{Y|X}(y|x). \label{eq:notation-capacity-output}
\end{align}
By the capacity optimality condition in~\cite[Th.~4.5.1]{Gallager1968},
\begin{align}
    D\left(P_{Y|X=x}\middle\|P_Y^*\right) &\leq C, \qquad x\in\mc{X}. \label{eq:notation-capacity-condition}
\end{align}
Equality holds for every $x$ in the support of $P_X^*$.
The induced distribution $P_Y^*$ is the unique capacity-achieving output distribution~\cite[Cor.~2 to Th.~4.5.2]{Gallager1968}. Moreover, $P_Y^*(y)>0$ whenever $P_{Y|X}(y|x)>0$ for some $x\in\mc{X}$; otherwise, the divergence in~\eqref{eq:notation-capacity-condition} would be infinite.
For every $x\in\mc{X}$, define the information density on the support of $P_{Y|X=x}$ by
\begin{align}
    \imath(x;y) &\triangleq \log\frac{P_{Y|X}(y|x)}{P_Y^*(y)}, \qquad y\in\mc{Y}\text{ such that }P_{Y|X}(y|x)>0. \label{eq:notation-information-density}
\end{align}
Define the channel-dependent range constant by
\begin{align}
    b_{\mr{ch}} &\triangleq \max_{x\in\mc{X}} \left\{ \max_{y:P_{Y|X}(y|x)>0} \imath(x;y) - \min_{y:P_{Y|X}(y|x)>0} \imath(x;y) \right\} < \infty. \label{eq:notation-bch}
\end{align}

For $\alpha>1$, the closed-form expression for Sibson's order-$\alpha$ mutual information in~\cite[Cor.~2.3]{Sibson1969} is
\begin{align}
    I_\alpha^{\mr{S}}(P_X,P_{Y|X}) &\triangleq \frac{\alpha}{\alpha-1} \log \sum_{y\in\mc{Y}} \left( \sum_{x\in\mc{X}}P_X(x)P_{Y|X}(y|x)^\alpha \right)^{\frac{1}{\alpha}}. \label{eq:support-sibson-information}
\end{align}
We define the order-$\alpha$ Sibson capacity by
\begin{align}
    C_\alpha^{\mr{S}} &\triangleq \max_{P_X}I_\alpha^{\mr{S}}(P_X,P_{Y|X}). \label{eq:support-sibson-capacity}
\end{align}
For a recent overview of Sibson's $\alpha$-mutual information and its applications to statistical learning, hypothesis testing, and estimation theory, see~\cite{esposito2026Sibson}.
The proof uses two order-$\alpha$ bounds. The order-$\alpha$ Sibson capacity remains close to $C$ when $\alpha$ is close to one, while the terminal R\'enyi-entropy bound retains information about whether the conditional MAP error lies near an endpoint of $[0,1]$. The next two lemmas establish these properties.

\begin{lemma}\label{lem:renyi}
Consider an $(N,M,\epsilon)$-VLF code over a DMC, and let $\bs{\rho}(t)$ be its message-posterior process. For every $\alpha>1$,
\begin{align}
    \log M- \E{H_\alpha(\bs{\rho}(\tau))} &\leq C_\alpha^{\mr{S}}\E{\tau}. \label{eq:support-renyi-stopped}
\end{align}
Moreover, for every $s>0$,
\begin{align}
    C_{1+s}^{\mr{S}} &\leq C+\frac{s b_{\mr{ch}}^2}{8}. \label{eq:support-sibson-capacity-bound}
\end{align}
\end{lemma}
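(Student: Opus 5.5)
The plan is to prove the stopped bound~\eqref{eq:support-renyi-stopped} through a one-step drift inequality for the R\'enyi entropy, and then sum that drift up to $\tau$. The natural process is not $\sum_m\rho_m(t)^\alpha$ itself. It is the $\ell_\alpha$-norm
\begin{align*}
    \Phi_t \triangleq \Bigl(\sum_{m=1}^{M}\rho_m(t)^\alpha\Bigr)^{1/\alpha},
\end{align*}
for which $H_\alpha(\bs{\rho}(t))=\frac{\alpha}{1-\alpha}\log\Phi_t$. This mirrors Arimoto's conditional R\'enyi entropy. A first attempt that bounds $\E{\sum_m\rho_m(t)^\alpha\mid\mc{F}_{t-1}}$ using superadditivity of $x\mapsto x^{1/\alpha}$ loses a factor and only gives a $G^{1/\alpha}$-type bound, so I will work with $\Phi_t$ directly.

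For the one-step computation, Bayes' rule gives $\rho_m(t)=\rho_m(t-1)P_{m,t}(Y_t)/\overline P_t(Y_t)$. The factor $\overline P_t(y)$ then cancels exactly:
\begin{align*}
    \E{\Phi_t\mid\mc{F}_{t-1}}
    =\sum_{y}\Bigl(\sum_{m}\rho_m(t-1)^\alpha P_{m,t}(y)^\alpha\Bigr)^{1/\alpha}
    =\Phi_{t-1}\sum_y\Bigl(\sum_{x}Q_t(x)P_{Y|X}(y|x)^\alpha\Bigr)^{1/\alpha}.
\end{align*}
Here $Q_t$ is the $\mc{F}_{t-1}$-measurable input distribution that puts mass $\rho_m(t-1)^\alpha/\Phi_{t-1}^\alpha$ on $x_t(m)$, so it is the $\alpha$-tilted posterior pushed through the encoder. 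By~\eqref{eq:support-sibson-information}, the last sum equals $\exp\{\frac{\alpha-1}{\alpha}I_\alpha^{\mr{S}}(Q_t,P_{Y|X})\}\leq\exp\{\frac{\alpha-1}{\alpha}C_\alpha^{\mr{S}}\}$.

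Jensen's inequality for the concave logarithm then gives
\begin{align*}
    \E{\log\Phi_t\mid\mc{F}_{t-1}}\leq\log\Phi_{t-1}+\tfrac{\alpha-1}{\alpha}C_\alpha^{\mr{S}}.
\end{align*}
Multiplying by $\frac{\alpha}{\alpha-1}$, this is $\E{H_\alpha(\bs{\rho}(t-1))-H_\alpha(\bs{\rho}(t))\mid\mc{F}_{t-1}}\leq C_\alpha^{\mr{S}}$. Since $U$ is independent of $W$, we have $\bs{\rho}(0)$ uniform and $H_\alpha(\bs{\rho}(0))=\log M$. I then multiply the drift by $1\{\tau\geq t\}\in\mc{F}_{t-1}$, sum over $t$, and telescope. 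Because $0\leq H_\alpha\leq\log M$ and $\E{\tau}<\infty$, dominated convergence (equivalently, optional stopping for the bounded supermartingale $H_\alpha(\bs{\rho}(t\wedge\tau))+C_\alpha^{\mr{S}}(t\wedge\tau)$) yields~\eqref{eq:support-renyi-stopped}.

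For~\eqref{eq:support-sibson-capacity-bound}, I will use Sibson's variational form $I_\alpha^{\mr{S}}(P_X,P_{Y|X})=\min_Q\frac{1}{\alpha-1}\log\sum_xP_X(x)\exp\{(\alpha-1)D_\alpha(P_{Y|X=x}\|Q)\}$. I take $Q=P_Y^*$ directly; the upper bound by a particular $Q$ can also be checked via H\"older if one prefers not to cite the minimization. This gives $I_{1+s}^{\mr{S}}\leq\max_xD_{1+s}(P_{Y|X=x}\|P_Y^*)$, where $sD_{1+s}(P_{Y|X=x}\|P_Y^*)=\log\mathbb E_{P_{Y|X=x}}[e^{s\imath(x;Y)}]$. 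Under $P_{Y|X=x}$, the information density $\imath(x;Y)$ has mean $D(P_{Y|X=x}\|P_Y^*)\leq C$ by~\eqref{eq:notation-capacity-condition}, and it ranges over an interval of length at most $b_{\mr{ch}}$. Hoeffding's lemma then gives $\log\mathbb E[e^{s\imath}]\leq sC+s^2b_{\mr{ch}}^2/8$. Dividing by $s$ and maximizing over $P_X$ proves the bound.

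The main obstacle is the first part: choosing the right functional ($\Phi_t$ rather than $\sum_m\rho_m^\alpha$) so that the one-step expectation collapses exactly to a Sibson expression with the tilted input law $Q_t$. Once that is in place, the rest is Jensen's inequality plus a routine stopping-time argument.
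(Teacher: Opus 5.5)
Your proposal is correct and follows essentially the paper's route: your tilted law $Q_t$ is the paper's $P_{X,t}^{(\alpha)}$, and your Jensen step on $\log\Phi_t$ is exactly the paper's dropping of the non-negative term $D(\overline P_t\|Q_{\alpha,t})$ from the exact drift. The second part, with the Sibson center evaluated at $P_Y^*$ followed by Hoeffding's lemma, is identical to the paper's. One small slip: $H_\alpha(\bs{\rho}(t))+tC_\alpha^{\mr{S}}$ is a submartingale, as the paper states, not a bounded supermartingale. This affects only your parenthetical remark, since the indicator-telescoping argument with dominated and monotone convergence that you actually use is correct.
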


\begin{IEEEproof}
See \appref{app:renyi}.
\end{IEEEproof}

For $M\geq3$, define
\begin{align}
    \alpha_M &\triangleq 1+\frac{2}{\log(M-1)}, \label{eq:support-alpha-M}\\
    \kappa &\triangleq 1+\log\frac{1+\exp\{-2\}}{2}>0, \label{eq:support-kappa}\\
    c_0 &\triangleq \exp\{1\}+\exp\{-1\}. \label{eq:support-c-zero}
\end{align}
We next bound the terminal R\'enyi entropy pointwise in terms of the MAP error.

\begin{lemma}\label{lem:terminal-renyi}
Let $M\geq3$, let $\alpha_M$, $\kappa$, and $c_0$ be given by~\eqref{eq:support-alpha-M}--\eqref{eq:support-c-zero}, and let $\bs{\rho}$ be a probability vector on $[M]$. Set $\theta\triangleq1-\max_{m\in[M]}\rho_m$. Then,
\begin{align}
    H_{\alpha_M}(\bs{\rho}) &\leq \theta\log(M-1)-\kappa\log(M-1)\theta(1-\theta)+c_0. \label{eq:support-terminal-renyi}
\end{align}
\end{lemma}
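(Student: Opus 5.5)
The plan is to reduce the R\'enyi sum to a two-point expression and then apply a quantitative Jensen (concavity-gap) inequality for the logarithm. Write $L\triangleq\log(M-1)$, so that $\alpha_M-1=\frac{2}{L}$ and
\begin{align}
H_{\alpha_M}(\bs{\rho})=-\frac{L}{2}\log\sum_{m}\rho_m^{\alpha_M}.
\end{align}

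First, fix the largest coordinate $1-\theta$. Since $x\mapsto x^{\alpha_M}$ is convex, the remaining mass $\theta$ contributes least to $\sum_m\rho_m^{\alpha_M}$ when it is spread evenly over the other $M-1$ coordinates. This gives
\begin{align}
\sum_m\rho_m^{\alpha_M}\geq(1-\theta)^{\alpha_M}+(M-1)^{1-\alpha_M}\theta^{\alpha_M}=(1-\theta)\,a+\theta\, b,
\end{align}
where
\begin{align}
a\triangleq(1-\theta)^{2/L},\qquad b\triangleq e^{-2}\theta^{2/L}.
\end{align}
The equality uses $(M-1)^{1-\alpha_M}=e^{-2}$, which is exactly the reason for the choice of $\alpha_M$. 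We also note that $\max_m\rho_m\geq\frac1M$ forces $\theta\leq1-\frac1M$, and hence $\frac{\theta}{1-\theta}\leq M-1$.

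Second, write
\begin{align}
\log\bigl((1-\theta)a+\theta b\bigr)=(1-\theta)\log a+\theta\log b+g_r(\theta),\qquad g_r(\theta)\triangleq\log(1-\theta+\theta r)-\theta\log r,\qquad r\triangleq\frac{b}{a}.
\end{align}
Multiplying by $-\frac{L}{2}$, the linear part becomes
\begin{align}
\theta L-(1-\theta)\log(1-\theta)-\theta\log\theta=\theta L+h_{\mr b}(\theta).
\end{align}
The term $h_{\mr b}(\theta)\leq\log2$ is absorbed into $c_0$.

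The remaining contribution is $-\frac{L}{2}g_r(\theta)$. The function $g_r$ is concave in $\theta$ and vanishes at $\theta\in\{0,1\}$. Therefore $g_r(\theta)\geq2g_r(\tfrac12)\min\{\theta,1-\theta\}\geq2g_r(\tfrac12)\theta(1-\theta)$. At $r=e^{-2}$ we have
\begin{align}
g_{e^{-2}}(\tfrac12)=\log\frac{1+e^{-2}}{2}+1=\kappa,
\end{align}
which would give exactly $-\kappa L\theta(1-\theta)$. Moreover, $g_r(\tfrac12)$ is decreasing in $r$ on $(0,1]$. When $\theta\leq\frac12$ we have $r=e^{-2}\bigl(\tfrac{\theta}{1-\theta}\bigr)^{2/L}\leq e^{-2}$, so the bound holds directly in that case.

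The main obstacle is the case $\theta>\frac12$. There $r$ can exceed $e^{-2}$, by the factor $\bigl(\tfrac{\theta}{1-\theta}\bigr)^{2/L}\leq e^{2}$. My first attempt is to avoid putting the $\theta$-dependence into $r$ at all. Instead, I would lower-bound $a$ and $b$ by constants times $e^{\pm1}$-type factors in the regime where $\theta^{2/L}$ and $(1-\theta)^{2/L}$ lie in $[e^{-1},1]$, that is, where $\min\{\theta,1-\theta\}\geq e^{-L/2}$. In that regime the Jensen step would be applied with $r=e^{-2}$ exactly, and the multiplicative distortions would contribute at most an additive constant. This is plausibly where $c_0=e+e^{-1}$ comes from.

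In the complementary regime, $\min\{\theta,1-\theta\}<e^{-L/2}=(M-1)^{-1/2}$. Then $L\theta(1-\theta)\leq L(M-1)^{-1/2}$ is bounded, so the $\kappa$ term costs only $O(1)$. The claim then follows from the crude bounds $H_{\alpha_M}(\bs\rho)\leq H(\bs\rho)\leq\theta L+h_{\mr b}(\theta)$, which are Fano-type bounds, together with an elementary check that the constants fit within $c_0$. Verifying this last constant bookkeeping is the step I expect to require the most care.
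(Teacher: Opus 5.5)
\paragraph{Where the proof breaks} Your reduction to the two-point power sum is correct. So are the bound $H_{\alpha_M}(\bs{\rho})\le\theta L+h_{\mr{b}}(\theta)-\frac{L}{2}g_r(\theta)$ (with your $L=\log(M-1)$), the case $\theta\le\frac{1}{2}$, and the regime $1-\theta<(M-1)^{-1/2}$ handled through $H_{\alpha_M}\le H\le\theta L+h_{\mr{b}}(\theta)$.

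The gap is the range $\frac{1}{2}<\theta\le1-(M-1)^{-1/2}$, and it is not a matter of constant bookkeeping. Since $H_{\alpha_M}$ equals $-\frac{L}{2}$ times the logarithm of the power sum, a multiplicative loss $c$ in that sum becomes an additive loss $\frac{L}{2}\log\frac{1}{c}$ in the bound. This loss is $O(1)$ only when $\log\frac{1}{c}=O(\frac{1}{L})$. You propose to lower-bound $(1-\theta)^{2/L}$ and $\theta^{2/L}$ by $e^{-1}$ in order to run the Jensen step at $r=e^{-2}$. That lower-bounds the power sum by $e^{-1}(1-\theta+e^{-2}\theta)$ and therefore adds $\frac{L}{2}=\frac{1}{2}\log(M-1)$ to the bound, not a constant. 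In fact $\theta+\frac{1}{2}\log(1-\theta+e^{-2}\theta)<\frac{1}{4}$ on $[0,1]$. So the resulting bound exceeds $\theta L+\frac{L}{4}$ and is worse than the trivial bound $\theta L+\log2$ for large $M$. The missing idea is that the distortion from the factor $(1-\theta)^{2/L}$ must be weighted by the mass $1-\theta$ of the term that carries it.

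\paragraph{How to repair it} Your decomposition can be completed this way without any regime split. Replace the monotonicity of $g_r(\frac{1}{2})$ in $r$ by a Lipschitz bound in $s=\log r$. The function $g_{e^s}(\frac{1}{2})=\log\frac{1+e^s}{2}-\frac{s}{2}$ has derivative $\frac{e^s}{1+e^s}-\frac{1}{2}\ge-\frac{1}{2}$. Since $\log r=-2+\frac{2}{L}\log\frac{\theta}{1-\theta}$, this gives $g_r(\frac{1}{2})\ge\kappa-\frac{1}{L}\log\frac{\theta}{1-\theta}$. For $\theta>\frac{1}{2}$ it follows that $-\frac{L}{2}g_r(\theta)\le-\kappa L(1-\theta)+(1-\theta)\log\frac{\theta}{1-\theta}\le-\kappa L\theta(1-\theta)+e^{-1}$. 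The lemma then holds with the constant $\log2+e^{-1}<c_0$.

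\paragraph{Comparison with the paper} The paper implements the same weighting differently. It factors out $\omega(\theta)=1-\theta+e^{-2}\theta$ and applies weighted AM--GM with the tilted weights $w_0=\frac{1-\theta}{\omega(\theta)}$ and $w_1=\frac{e^{-2}\theta}{\omega(\theta)}$. The concavity step then acts on the fixed function $\psi(\theta)=-\frac{1}{2}\log\omega(\theta)$, which has no $\theta$-dependent parameter. The factors $(1-\theta)^{2/L}$ and $\theta^{2/L}$ contribute only $w_0\log\frac{1}{1-\theta}+w_1\log\frac{1}{\theta}\le e+e^{-1}=c_0$. This bound, rather than a distortion estimate, is the source of $c_0$.
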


\begin{IEEEproof}
See \appref{app:terminal-renyi}.
\end{IEEEproof}

The magnitude of the negative quadratic term in~\eqref{eq:support-terminal-renyi} is of order $\log(M-1)$ whenever the terminal conditional MAP error remains bounded away from zero and one.

For code parameters $(N,M,\epsilon)$, define the Fano deficit
\begin{align}
    \Delta_{\mr{F}}(N,M,\epsilon) &\triangleq CN+h_{\mr{b}}(\epsilon)-(1-\epsilon)\log M. \label{eq:support-delta}
\end{align}
The upper bound in~\eqref{eq:intro-ppv} shows that $\Delta_{\mr{F}}(N,M,\epsilon)$ is non-negative. Our goal is to derive a sharper lower bound on this quantity.
The following lemma first establishes the exact stopped Shannon-entropy identity and then combines it with \lemref{lem:renyi} and \lemref{lem:terminal-renyi}. It shows that a small Fano deficit forces the terminal conditional MAP error toward the endpoints of $[0,1]$.

\begin{lemma}\label{lem:terminal-concentration}
Consider an $(N,M,\epsilon)$-VLF code over a DMC with $C>0$, $M\geq3$, and $\epsilon<1-\frac{1}{M}$. Then, the terminal conditional MAP error satisfies
\begin{align}
    \E{\epsilon_{\mr{MAP}}(1-\epsilon_{\mr{MAP}})} &\leq \frac{ \frac{b_{\mr{ch}}^2N}{4\log(M-1)} +\Delta_{\mr{F}}(N,M,\epsilon)+c_0 }{ \kappa\log(M-1) }. \label{eq:support-terminal-concentration}
\end{align}
\end{lemma}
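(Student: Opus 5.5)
The plan is to combine the stopped R\'enyi-entropy bound of \lemref{lem:renyi} at order $\alpha_M$ with the pointwise terminal bound of \lemref{lem:terminal-renyi}, and then to use $\E{\epsilon_{\mr{MAP}}}\leq\epsilon$ from~\eqref{eq:problem-map-risk} to absorb the linear term into the Fano deficit. The Shannon-entropy identity (obtained by summing~\eqref{eq:support-shannon-drift} up to $\tau$, i.e., $\log M-\E{H(\bs{\rho}(\tau))}=\E{\sum_{t\leq\tau}I_t}\leq C\E{\tau}$) can be recorded alongside for later use. For the inequality itself, however, I expect the R\'enyi route to suffice.

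\emph{Step 1 (upper bound on the R\'enyi entropy drop).} Set $s\triangleq\alpha_M-1=\frac{2}{\log(M-1)}$, which is well defined since $M\geq3$. By~\eqref{eq:support-renyi-stopped} and~\eqref{eq:support-sibson-capacity-bound},
\begin{align}
\log M-\E{H_{\alpha_M}(\bs{\rho}(\tau))}\leq \left(C+\frac{b_{\mr{ch}}^2}{4\log(M-1)}\right)\E{\tau}\leq CN+\frac{b_{\mr{ch}}^2N}{4\log(M-1)}.
\end{align}
The second inequality uses $\E{\tau}\leq N$ and the nonnegativity of the bracketed constant.

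\emph{Step 2 (terminal R\'enyi bound).} Apply \lemref{lem:terminal-renyi} pointwise with $\bs{\rho}=\bs{\rho}(\tau)$, so that $\theta=\epsilon_{\mr{MAP}}$. Taking expectations gives
\begin{align}
\E{H_{\alpha_M}(\bs{\rho}(\tau))}\leq \log(M-1)\,\E{\epsilon_{\mr{MAP}}}-\kappa\log(M-1)\,\E{\epsilon_{\mr{MAP}}(1-\epsilon_{\mr{MAP}})}+c_0.
\end{align}
Combining this with Step 1 and rearranging yields
\begin{align}
\kappa\log(M-1)\,\E{\epsilon_{\mr{MAP}}(1-\epsilon_{\mr{MAP}})}\leq CN+\frac{b_{\mr{ch}}^2N}{4\log(M-1)}+c_0-\log M+\log(M-1)\,\E{\epsilon_{\mr{MAP}}}.
\end{align}

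\emph{Step 3 (comparison with $\Delta_{\mr{F}}$).} It remains to show that
\begin{align}
CN-\log M+\log(M-1)\,\E{\epsilon_{\mr{MAP}}}\leq\Delta_{\mr{F}}(N,M,\epsilon),
\end{align}
which is equivalent to $\log(M-1)\,\E{\epsilon_{\mr{MAP}}}\leq\epsilon\log M+h_{\mr{b}}(\epsilon)$. This follows from three facts: $\E{\epsilon_{\mr{MAP}}}\leq\epsilon$ by~\eqref{eq:problem-map-risk}, $\log(M-1)\leq\log M$, and $h_{\mr{b}}(\epsilon)\geq0$. Dividing by $\kappa\log(M-1)>0$ then gives~\eqref{eq:support-terminal-concentration}.

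The main obstacle is the bookkeeping in Steps 1 and 2. One must check that the excess term $\frac{sb_{\mr{ch}}^2}{8}N$ at $s=\frac{2}{\log(M-1)}$ is exactly $\frac{b_{\mr{ch}}^2N}{4\log(M-1)}$. One must also check that the linear-in-$\theta$ part of \lemref{lem:terminal-renyi} matches the Fano-type term with the correct sign, which is where the hypotheses $\epsilon<1-\frac{1}{M}$ and $M\geq3$ enter. The first is used only to guarantee that the setting is nondegenerate; the second ensures $\log(M-1)>0$. The substantive analytic content resides in the two cited lemmas, which are assumed here.
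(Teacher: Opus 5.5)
\paragraph{Verdict.} Your proof is correct. It uses the same three ingredients as the paper: \lemref{lem:renyi} at order $\alpha_M$, \lemref{lem:terminal-renyi}, and $\E{\epsilon_{\mr{MAP}}}\leq\epsilon$. Your route is more direct, however.

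\paragraph{Comparison with the paper.} The paper first proves the stopped Shannon-entropy identity \eqref{eq:support-stopped-shannon}, using the padded observation and L\'evy's upward theorem. It then writes the exact decomposition \eqref{eq:support-delta-decomposition} of $\Delta_{\mr{F}}$ and subtracts the Shannon identity from the R\'enyi bound to obtain \eqref{eq:support-shannon-renyi-difference}. Finally, it solves the decomposition back for $\E{H(\bs{\rho}(\tau))}$. In that last step the stopped capacity deficit and $\E{H(\bs{\rho}(\tau))}$ cancel. Because $h_{\mr{b}}(\epsilon)+\epsilon\log M-\Delta_{\mr{F}}(N,M,\epsilon)=\log M-CN$, the resulting bound \eqref{eq:support-terminal-renyi-lower} is exactly your Step~1 inequality $\E{H_{\alpha_M}(\bs{\rho}(\tau))}\geq\log M-CN-\frac{b_{\mr{ch}}^2N}{4\log(M-1)}$. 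From there, the paper's use of \lemref{lem:terminal-renyi} and its discarding of $-h_{\mr{b}}(\epsilon)$ and $-\epsilon\log\frac{M}{M-1}$ coincide with your Steps~2--3.

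\paragraph{What the paper's detour buys.} It is not needed for this inequality. Its value lies in results proved inside the same argument and used later: the stopped chain rule, Fano's bound \eqref{eq:support-fano}, and the nonnegativity of the three terms in \eqref{eq:support-delta-decomposition}. These feed, for example, \eqref{eq:proof-delta-lower} and the relation $S_j\leq\Delta_j$ in the proof of \thmref{thm:structure}. Relatedly, your parenthetical claim that the Shannon identity follows by summing \eqref{eq:support-shannon-drift} up to $\tau$ glosses over the limiting argument needed for an unbounded $\tau$. Since you never use that identity, this causes no gap.

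\paragraph{Minor correction.} Your argument never uses the hypothesis $\epsilon<1-\frac{1}{M}$. Step~3 needs only $\E{\epsilon_{\mr{MAP}}}\leq\epsilon$, $\log(M-1)\leq\log M$, and $h_{\mr{b}}(\epsilon)\geq0$. In the paper, the hypothesis serves only to show that the bracketed Fano term in \eqref{eq:support-delta-decomposition} is nonnegative. Your remark that it enters the sign matching of the linear term is therefore inaccurate, though harmless.
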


\begin{IEEEproof}
We first record the exact Shannon-entropy identity for the fixed observation in~\eqref{eq:problem-terminal-observation}. The indicator $1\{\tau\geq t\}$ is $\sigma(U,\widetilde Y^{t-1})$-measurable. If a padding symbol occurs in $\widetilde Y^{t-1}$, then $\tau<t$. Otherwise, $(U,\widetilde Y^{t-1})=(U,Y^{t-1})$, and the stopping rule determines whether $\tau=t-1$ or $\tau\geq t$. On $\{\tau\geq t\}$, the conditional law of $(W,\widetilde Y_t)$ given the padded past is the conditional law of $(W,Y_t)$ given $\mc{F}_{t-1}$; on the complementary event, $\widetilde Y_t=\mathsf T$ is deterministic. Hence,
\begin{align}
    I(W;\widetilde Y_t\mid U,\widetilde Y^{t-1}) &= \E{1\{\tau\geq t\}I_t}. \label{eq:support-padded-one-step}
\end{align}
For each finite $n$, the ordinary conditional chain rule sums~\eqref{eq:support-padded-one-step} from $t=1$ to $n$. Since $\sigma(U,\widetilde Y^n)$ increases to $\mc{G}$, L\'evy's upward theorem~\cite[Sec.~14.2]{Williams1991} gives, for every $m\in[M]$,
\begin{align}
    \Prob{W=m\mid U,\widetilde Y^n} &\longrightarrow \Prob{W=m\mid\mc{G}}\quad\text{almost surely}. \label{eq:support-padded-posterior-limit}
\end{align}
Continuity and boundedness of entropy then justify passage to the limit. Since $W$ is independent of $U$ and $\mathsf O_\tau$ contains $U$, we obtain
\begin{align}
    I(W;\mathsf{O}_\tau) &= \lim_{n\to\infty}I(W;\widetilde Y^n\mid U) = \sum_{t=1}^{\infty}I(W;\widetilde Y_t\mid U,\widetilde Y^{t-1}) = \E{\sum_{t=1}^{\infty}1\{\tau\geq t\}I_t}. \label{eq:support-padded-information}
\end{align}
Equation~\eqref{eq:support-padded-information} is the stopped mutual-information chain rule for decoder-side VLF stopping. By \lemref{lem:padded-observation},
\begin{align}
    \log M-\E{H(\bs{\rho}(\tau))} &= I(W;\mathsf{O}_\tau)=\E{\sum_{t=1}^{\infty}1\{\tau\geq t\}I_t}. \label{eq:support-stopped-shannon}
\end{align}
Since $\widehat W$ is $\mc{G}$-measurable, Fano's inequality,~\eqref{eq:support-stopped-shannon}, and $\log(M-1)\leq\log M$ give
\begin{align}
    H(W\mid\mathsf{O}_\tau) &\leq H(W\mid\widehat W)\leq h_{\mr{b}}(\Pe)+\Pe\log(M-1), \label{eq:support-fano-entropy}\\
    (1-\Pe)\log M &\leq I(W;\mathsf{O}_\tau)+h_{\mr{b}}(\Pe)\leq C\E{\tau}+h_{\mr{b}}(\Pe)\leq CN+h_{\mr{b}}(\Pe). \label{eq:support-fano}
\end{align}
Using~\eqref{eq:support-stopped-shannon} and $\E{\tau}=\E{\sum_{t=1}^{\infty}1\{\tau\geq t\}}$, the Fano deficit has the exact decomposition
\begin{align}
    \Delta_{\mr{F}}(N,M,\epsilon) &= C\left(N-\E{\tau}\right)+\E{\sum_{t=1}^{\infty}1\{\tau\geq t\}(C-I_t)} +\left[h_{\mr{b}}(\epsilon)+\epsilon\log M-\E{H(\bs{\rho}(\tau))}\right]. \label{eq:support-delta-decomposition}
\end{align}
The first two terms on the right-hand side of the equality in~\eqref{eq:support-delta-decomposition} are non-negative because $\E{\tau}\leq N$ and $I_t\leq C$. Taking $\widehat W=\widehat W_{\mr{MAP}}$ in~\eqref{eq:support-fano-entropy}, and using~\eqref{eq:problem-map-risk} together with the monotonicity of $a\mapsto h_{\mr{b}}(a)+a\log(M-1)$ on $[0,1-\frac{1}{M}]$, give
\begin{align}
    \E{H(\bs{\rho}(\tau))}=H(W\mid\mathsf{O}_\tau) &\leq h_{\mr{b}}(\epsilon)+\epsilon\log(M-1)\leq h_{\mr{b}}(\epsilon)+\epsilon\log M. 
\end{align}
Thus, the bracketed third term in~\eqref{eq:support-delta-decomposition} is also non-negative.

We apply \lemref{lem:renyi} with the order in~\eqref{eq:support-alpha-M}. Subtracting~\eqref{eq:support-stopped-shannon} from~\eqref{eq:support-renyi-stopped}, and then using~\eqref{eq:support-sibson-capacity-bound} and $\E{\tau}\leq N$, gives
\begin{align}
    \E{H(\bs{\rho}(\tau))-H_{\alpha_M}(\bs{\rho}(\tau))} &\leq \frac{b_{\mr{ch}}^2N}{4\log(M-1)}+\E{\sum_{t=1}^{\infty}1\{\tau\geq t\}(C-I_t)}. \label{eq:support-shannon-renyi-difference}
\end{align}
Solving~\eqref{eq:support-delta-decomposition} for $\E{H(\bs{\rho}(\tau))}$, subtracting the bound in~\eqref{eq:support-shannon-renyi-difference}, and discarding the non-negative quantity $C(N-\E{\tau})$ give
\begin{align}
    \E{H_{\alpha_M}(\bs{\rho}(\tau))} &\geq h_{\mr{b}}(\epsilon)+\epsilon\log M-\frac{b_{\mr{ch}}^2N}{4\log(M-1)}-\Delta_{\mr{F}}(N,M,\epsilon). \label{eq:support-terminal-renyi-lower}
\end{align}
On the other hand, \lemref{lem:terminal-renyi} and~\eqref{eq:problem-map-risk} yield
\begin{align}
    \E{H_{\alpha_M}(\bs{\rho}(\tau))} &\leq \epsilon\log(M-1)-\kappa\log(M-1)\E{\epsilon_{\mr{MAP}}(1-\epsilon_{\mr{MAP}})}+c_0. \label{eq:support-terminal-renyi-upper}
\end{align}
We combine~\eqref{eq:support-terminal-renyi-lower} and~\eqref{eq:support-terminal-renyi-upper}. After rearrangement, the upper bound on $\kappa\log(M-1)\E{\epsilon_{\mr{MAP}}(1-\epsilon_{\mr{MAP}})}$ contains the two non-positive contributions $-h_{\mr{b}}(\epsilon)$ and $-\epsilon\log\frac{M}{M-1}$. Discarding them proves~\eqref{eq:support-terminal-concentration}.
\end{IEEEproof}

The preceding endpoint-concentration estimate bounds $\E{\epsilon_{\mr{MAP}}(1-\epsilon_{\mr{MAP}})}$, whereas the lower bound on the terminal posterior potential in~\eqref{eq:support-L-lower} involves $\E{(1-\epsilon_{\mr{MAP}})\log\frac{1}{\epsilon_{\mr{MAP}}}}$. The following scalar inequality connects these quantities.

\begin{lemma}\label{lem:scalar}
Let $Z$ take values in $[0,1]$, suppose that $\E{Z}\leq\epsilon$, and define
\begin{align}
    v_Z &\triangleq \E{Z(1-Z)}. \label{eq:support-v-Z}
\end{align}
If $0<v_Z<\frac{1-\epsilon}{4}$, then
\begin{align}
    \E{(1-Z)\log\frac{1}{Z}} &\geq (1-\epsilon-2v_Z) \log\frac{1-\epsilon-2v_Z}{2v_Z} -\exp\{-1\}. \label{eq:support-scalar}
\end{align}
\end{lemma}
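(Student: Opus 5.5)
The plan is a truncation at $Z=\frac{1}{2}$ followed by a weighted Jensen inequality. Set $a\triangleq1-\epsilon-2v_Z$ and $\mc{B}\triangleq\{Z\leq\frac{1}{2}\}$. The integrand $(1-Z)\log\frac{1}{Z}$ is non-negative on $[0,1]$, with the value $+\infty$ at $Z=0$. Hence we may drop the contribution of $\mc{B}^{\mr{c}}$ and lower-bound $\E{(1-Z)\log\frac{1}{Z}1\{\mc{B}\}}$. If $\Prob{Z=0}>0$, the left-hand side of~\eqref{eq:support-scalar} is infinite and nothing remains to prove. We therefore assume $Z>0$ almost surely.

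\emph{Step 1 (mass on $\mc{B}$).} On $\mc{B}^{\mr{c}}$ we have $Z>\frac{1}{2}$, so $Z(1-Z)\geq\frac{1}{2}(1-Z)$. This gives $\E{(1-Z)1\{\mc{B}^{\mr{c}}\}}\leq 2v_Z$. Since $\E{1-Z}\geq1-\epsilon$, the weight
\begin{align}
    m\triangleq\E{(1-Z)1\{\mc{B}\}}
\end{align}
satisfies $m\geq a$. The hypothesis $v_Z<\frac{1-\epsilon}{4}$ gives $a>\frac{1-\epsilon}{2}>2v_Z>0$, so $m>0$.

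\emph{Step 2 (Jensen).} Define the probability measure $d\mu\triangleq\frac{(1-Z)1\{\mc{B}\}}{m}\,d\mathbb P$. Convexity of $z\mapsto-\log z$ gives
\begin{align}
    \E{(1-Z)\log\tfrac{1}{Z}1\{\mc{B}\}} = m\,\mathbb E_\mu\!\left[\log\tfrac{1}{Z}\right] \geq m\log\frac{m}{\E{Z(1-Z)1\{\mc{B}\}}} \geq m\log\frac{m}{v_Z}.
\end{align}

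\emph{Step 3 (monotonicity in $m$).} The function $g(m)=m\log\frac{m}{v_Z}$ has derivative $\log\frac{m}{v_Z}+1$. This derivative is positive for $m>v_Z/e$, which holds on $[a,\infty)$ because $a>2v_Z$. Hence $g(m)\geq g(a)=a\log\frac{a}{v_Z}\geq a\log\frac{a}{2v_Z}$. This is~\eqref{eq:support-scalar} even without the $-\exp\{-1\}$ slack. That slack also absorbs an alternative route: one can bound $(1-Z)\log\frac{1}{Z}\geq-Z\log Z-\ldots$ pointwise using $x\log x\geq-e^{-1}$, in case one prefers to avoid the truncation.

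The only delicate step is Step 2. One must choose the Jensen weights as $(1-Z)1\{\mc{B}\}$ rather than $1\{\mc{B}\}$, so that the denominator becomes exactly the restricted $\E{Z(1-Z)}$, which is at most $v_Z$. The remaining care is to check measure-theoretic edge cases, namely $m>0$ and $Z=0$ on a positive-probability set.
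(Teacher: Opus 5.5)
Your proof is correct. It reaches the bound by a slightly different route from the paper's, and it yields a sharper inequality.

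\textbf{The paper's route.} The paper applies Jensen on $\mc{B}_{\mr{lo}}=\{Z\le\frac12\}$ with the plain indicator weight. This gives $\E{1\{\mc{B}_{\mr{lo}}\}\log\frac1Z}\ge p_{\mr{lo}}\log\frac{p_{\mr{lo}}}{\mu_{\mr{lo}}}$, where $\mu_{\mr{lo}}=\E{Z1\{\mc{B}_{\mr{lo}}\}}$. It then recovers the factor $1-Z$ by subtracting $\E{Z1\{\mc{B}_{\mr{lo}}\}\log\frac1Z}\le e^{-1}$. Finally, it bounds $\mu_{\mr{lo}}\le 2v_Z$ using $Z\le 2Z(1-Z)$ on $\mc{B}_{\mr{lo}}$. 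These two steps produce the $-\exp\{-1\}$ and the $2$ in the denominator of the stated bound. The truncation at $\frac12$ is essential there, because $\E{Z}$ over the whole space can be as large as $\epsilon$.

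\textbf{Your route.} You tilt the Jensen measure by $(1-Z)1\{\mc{B}\}$, which absorbs the factor $1-Z$ into the weight. The denominator is then the restricted $\E{Z(1-Z)}$, no subtraction is needed, and you obtain $a\log\frac{a}{v_Z}$. With your weighting the truncation is actually superfluous. Taking weight $1-Z$ on the whole space gives $m'=\E{1-Z}\ge1-\epsilon$ and $\mathbb E_\mu[Z]=v_Z/m'$. Your monotonicity argument then yields $\E{(1-Z)\log\frac1Z}\ge(1-\epsilon)\log\frac{1-\epsilon}{v_Z}$, using $1-\epsilon>4v_Z>v_Z/e$.

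\textbf{Comparison.} The paper uses only the leading $\log\frac{1}{v_Z}$ behaviour, in~\eqref{eq:proof-L-length} and~\eqref{eq:structure-second-order-L-lower}, so both versions suffice for its purposes; yours is shorter and tighter. Your closing aside about a pointwise route via $x\log x\ge-e^{-1}$ is incomplete as written. It is essentially the paper's subtraction step, and your main argument does not depend on it, so you can simply drop it.
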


\begin{IEEEproof}
See \appref{app:scalar}.
\end{IEEEproof}

\subsection{Comparison of the Shannon-Entropy and EJS Drifts} \label{subsec:posterior-log-odds}

For the remainder of this section, suppose that $C_1<\infty$. All conditional output distributions then have the same support. Indeed, if an output symbol has positive probability under one input and zero probability under another, one of the two ordered KL divergences in~\eqref{eq:notation-C1} is infinite. After deleting output symbols that have zero probability under every input, we have $P_{Y|X}(y|x)>0$ for all $(x,y)\in\mc{X}\times\mc{Y}$. We define
\begin{align}
    \lambda_{\min} &\triangleq \min_{x,x'\in\mc{X}} \min_{y\in\mc{Y}} \frac{P_{Y|X}(y|x)}{P_{Y|X}(y|x')} >0. \label{eq:notation-lambda-min}
\end{align}
Conditional on every positive-probability value of $U$, common support makes every attainable finite output history have positive likelihood under every message. The uniform prior and Bayes' rule therefore imply that $0<\rho_m(t)<1$ for every $m\in[M]$ and every finite $t$, almost surely. On null histories this property also holds for the uniform posterior assigned after~\eqref{eq:problem-posterior}. We may therefore define the output distribution conditioned on the event that message $m$ is incorrect as
\begin{align}
    Q_{m,t}(y) &\triangleq \frac{\overline P_t(y)-\rho_m(t-1)P_{m,t}(y)}{1-\rho_m(t-1)}. \label{eq:support-extrinsic-output}
\end{align}
By~\eqref{eq:notation-ejs}, the EJS divergence for the posterior and one-step output laws at time $t$ is
\begin{align}
    \operatorname{EJS}(\bs{\rho}(t-1);P_{1,t},\ldots,P_{M,t}) &= \sum_{m=1}^{M}\rho_m(t-1)D\left(P_{m,t}\middle\|Q_{m,t}\right). \label{eq:support-ejs}
\end{align}
In~\cite[Lem.~2]{NaghshvarJavidiWigger2015}, Naghshvar \emph{et al.} prove that the EJS divergence is at least the Jensen--Shannon divergence, which equals $I_t$ by~\eqref{eq:support-mutual-information}. Therefore,
\begin{align}
    I_t &\leq \operatorname{EJS}(\bs{\rho}(t-1);P_{1,t},\ldots,P_{M,t}). \label{eq:support-ejs-dominates-information}
\end{align}

We define the posterior potential
\begin{align}
    L(\bs{\rho}) &\triangleq \sum_{m=1}^{M}\rho_m\log\frac{1}{1-\rho_m}. \label{eq:support-L}
\end{align}
Equivalently,
\begin{align}
    L(\bs{\rho}) &= H(\bs{\rho})+\sum_{m=1}^{M}\rho_m\log\frac{\rho_m}{1-\rho_m}. \label{eq:support-L-decomposition}
\end{align}
Every summand in~\eqref{eq:support-L} is non-negative. Keeping only a summand corresponding to a MAP message gives
\begin{align}
    L(\bs{\rho}(\tau)) &\geq (1-\epsilon_{\mr{MAP}})\log\frac{1}{\epsilon_{\mr{MAP}}}. \label{eq:support-L-lower}
\end{align}
The main converse now relates this terminal posterior potential to the accumulated capacity deficit. The next lemma performs the comparison for every causal encoder action by expressing the drift of $L$ as the EJS divergence minus the mutual information and bounding this difference in terms of $C-I_t$.

\begin{lemma}\label{lem:mutual-information-difference}
Consider an $(N,M,\epsilon)$-VLF code with $M\geq3$ over a DMC with $C>0$ and $C_1<\infty$, and let $I_t$ and $L$ be defined by~\eqref{eq:support-mutual-information} and~\eqref{eq:support-L}, respectively. For every $r\in(0,\frac{1}{2})$ such that
\begin{align}
    h_{\mr{b}}(r)+r\log(|\mc{X}|-1) &< C, \label{eq:support-r-condition}
\end{align}
we have
\begin{align}
    \E{\sum_{t=1}^{\infty}1\{\tau\geq t\}(C-I_t)} &\geq \frac{C-h_{\mr{b}}(r)-r\log(|\mc{X}|-1)}{C_1}\left(\E{L(\bs{\rho}(\tau))}-\log\frac{M}{M-1}-\frac{1}{r}-\frac{1}{2\lambda_{\min}^{3}r^2}\right)^+. \label{eq:support-mutual-information-difference}
\end{align}
\end{lemma}

\begin{IEEEproof}
The posterior average log-odds drift identity in~\cite[Lem.~2]{NaghshvarJavidiWigger2015}, together with~\eqref{eq:support-L-decomposition} and the Shannon-entropy drift in~\eqref{eq:support-shannon-drift}, gives
\begin{align}
    \E{L(\bs{\rho}(t))-L(\bs{\rho}(t-1))\mid\mc{F}_{t-1}} &= \operatorname{EJS}(\bs{\rho}(t-1);P_{1,t},\ldots,P_{M,t})-I_t. \label{eq:support-L-drift}
\end{align}
We bound the right-hand side of the identity in~\eqref{eq:support-L-drift} in the two posterior regimes determined by the threshold $1-r$, and then stop and sum the resulting bounds. Let
\begin{align}
    V(\bs{\rho}) &\triangleq \sum_{m=1}^{M}\rho_m^2. \label{eq:support-posterior-square}
\end{align}
Bayes' rule and the definition of the $\chi^2$-divergence give
\begin{align}
    \E{V(\bs{\rho}(t))-V(\bs{\rho}(t-1))\mid\mc{F}_{t-1}}
    &= \sum_{m=1}^{M}\rho_m(t-1)^2\left(\sum_{y\in\mc{Y}}\frac{P_{m,t}(y)^2}{\overline P_t(y)}-1\right) = \sum_{m=1}^{M}\rho_m(t-1)^2 \chi^2\left(P_{m,t}\middle\|\overline P_t\right). \label{eq:support-posterior-square-drift}
\end{align}

First, suppose that
\begin{align}
    \max_{m\in[M]}\rho_m(t-1) &\leq 1-r. \label{eq:support-no-large-posterior}
\end{align}
In this region, define
\begin{align}
    \ell_{m,t}(y) &\triangleq \frac{P_{m,t}(y)}{\overline P_t(y)}, \qquad u_{m,t}(y)\triangleq \frac{\rho_m(t-1)\left(\ell_{m,t}(y)-1\right)}{1-\rho_m(t-1)}. \label{eq:support-local-ratios}
\end{align}
Equation~\eqref{eq:support-extrinsic-output} gives $\frac{Q_{m,t}(y)}{\overline P_t(y)}=1-u_{m,t}(y)$.
The likelihood-ratio bound in~\eqref{eq:notation-lambda-min} is inherited by mixtures. Averaging the bound over the input mixtures that induce $Q_{m,t}$ and $\overline P_t$ gives $Q_{m,t}(y)\geq\lambda_{\min}\overline P_t(y)$. Averaging it over the mixture inducing $\overline P_t$, with $P_{m,t}$ fixed on the right, gives $\overline P_t(y)\geq\lambda_{\min}P_{m,t}(y)$. Hence,
\begin{align}
    1-u_{m,t}(y) &\geq \lambda_{\min}, \qquad \ell_{m,t}(y)\leq \frac{1}{\lambda_{\min}}. \label{eq:support-ratio-bounds}
\end{align}
Since $\lambda_{\min}\leq1$, the first inequality in~\eqref{eq:support-ratio-bounds} gives $1-v\geq\lambda_{\min}$ for every $v$ between $0$ and $u_{m,t}(y)$, so Taylor's theorem applies for either sign of $u_{m,t}(y)$ and gives
\begin{align}
    -\log(1-u_{m,t}(y)) &\leq u_{m,t}(y)+\frac{u_{m,t}(y)^2}{2\lambda_{\min}^2}, \qquad y\in\mc{Y}. \label{eq:support-log-bound}
\end{align}
For $Y\sim P_{m,t}$, we have
\begin{align}
    \E{\ell_{m,t}(Y)-1} &= \chi^2\left(P_{m,t}\middle\|\overline P_t\right), \\
    \E{(\ell_{m,t}(Y)-1)^2} &\leq \frac{1}{\lambda_{\min}}\chi^2\left(P_{m,t}\middle\|\overline P_t\right), \label{eq:support-moments}
\end{align}
where the inequality uses $\ell_{m,t}(y)\leq\frac{1}{\lambda_{\min}}$.
Moreover,
\begin{align}
    D\left(P_{m,t}\middle\|Q_{m,t}\right)-D\left(P_{m,t}\middle\|\overline P_t\right) &= \E{-\log(1-u_{m,t}(Y))}. \label{eq:support-divergence-difference}
\end{align}
Substituting the definition of $u_{m,t}$ into~\eqref{eq:support-log-bound}, taking the $P_{m,t}$-expectation, and applying~\eqref{eq:support-moments} in~\eqref{eq:support-divergence-difference} gives
\begin{align}
    D\left(P_{m,t}\middle\|Q_{m,t}\right)-D\left(P_{m,t}\middle\|\overline P_t\right) &\leq \Biggl[\frac{\rho_m(t-1)}{1-\rho_m(t-1)} +\frac{\rho_m(t-1)^2}{2\lambda_{\min}^3(1-\rho_m(t-1))^2}\Biggr]\chi^2\left(P_{m,t}\middle\|\overline P_t\right). \label{eq:support-one-message-bound}
\end{align}
Multiplying~\eqref{eq:support-one-message-bound} by $\rho_m(t-1)$, summing over $m$, and using~\eqref{eq:support-no-large-posterior} and~\eqref{eq:support-posterior-square-drift}, we obtain
\begin{align}
    \operatorname{EJS}(\bs{\rho}(t-1);P_{1,t},\ldots,P_{M,t})-I_t &\leq \left( \frac{1}{r}+\frac{1}{2\lambda_{\min}^3r^2} \right) \E{V(\bs{\rho}(t))-V(\bs{\rho}(t-1))\mid\mc{F}_{t-1}}. \label{eq:support-no-large-posterior-bound}
\end{align}

Next, suppose that $\max_{m\in[M]}\rho_m(t-1)>1-r$. Let $P_{X,t}$ denote the induced input distribution
\begin{align}
    P_{X,t}(x) &\triangleq \sum_{m=1}^{M}\rho_m(t-1)\,1\{x_t(m)=x\}, \qquad x\in\mc{X}. \label{eq:support-input-distribution}
\end{align}
Set
\begin{align}
    \vartheta_r &\triangleq h_{\mr{b}}(r)+r\log(|\mc{X}|-1)<C. \label{eq:support-vartheta-r}
\end{align}
If $\rho_{m_*}(t-1)>1-r$, then $P_{X,t}(x_t(m_*))\geq\rho_{m_*}(t-1)>1-r$. Hence, the mass outside a most likely input symbol is smaller than $r$. For a fixed outside mass $a$, distributing it uniformly over the other $|\mc{X}|-1$ symbols maximizes the entropy at $h_{\mr{b}}(a)+a\log(|\mc{X}|-1)$. This expression is increasing for $a\in[0,r]$ because $r<\frac{1}{2}$, and therefore
\begin{align}
    I_t &\leq H(P_{X,t})\leq\vartheta_r. \label{eq:support-large-posterior-mutual-information}
\end{align}
For every $m$, $Q_{m,t}$ is the posterior-weighted convex combination of $P_{m',t}$ over $m'\neq m$. Convexity of KL divergence in its second argument and the definition of $C_1$ give $D(P_{m,t}\|Q_{m,t})\leq C_1$. Averaging over $m$ yields
\begin{align}
    \operatorname{EJS}(\bs{\rho}(t-1);P_{1,t},\ldots,P_{M,t}) &\leq C_1. \label{eq:support-ejs-C-one}
\end{align}
Since $I_t\leq\vartheta_r<C$, the comparison between $C_1-I_t$ and $C-I_t$ follows from
\begin{align}
    C_1(C-I_t)-(C_1-I_t)(C-\vartheta_r) &= C_1(\vartheta_r-I_t)+I_t(C-\vartheta_r)\geq0. \label{eq:support-large-posterior-algebra}
\end{align}
Thus, every posterior satisfying $\max_{m\in[M]}\rho_m(t-1)>1-r$ also satisfies
\begin{align}
    \operatorname{EJS}(\bs{\rho}(t-1);P_{1,t},\ldots,P_{M,t})-I_t &\leq C_1-I_t \leq \frac{C_1}{C-\vartheta_r}(C-I_t). \label{eq:support-large-posterior-bound}
\end{align}

Finally, truncate the stopping time and sum the two bounds. For $k\in\mathbb N$, let $\tau^{(k)}\triangleq\min\{\tau,k\}$. The indicators $1\{\tau^{(k)}\geq t\}$, $1\{\max_{m\in[M]}\rho_m(t-1)\leq1-r\}$, and $1\{\max_{m\in[M]}\rho_m(t-1)>1-r\}$ are $\mc{F}_{t-1}$-measurable, so they may multiply the corresponding conditional drift estimates before expectation. The common-support property and the finiteness of $\mc{U}$ and $\mc{Y}$ imply that $L(\bs{\rho}(t))$ is bounded over the histories at every fixed $t$. Stopping~\eqref{eq:support-L-drift} at $\tau^{(k)}$ gives
\begin{align}
    \E{L(\bs{\rho}(\tau^{(k)}))}-L(\bs{\rho}(0)) &= \E{\sum_{t=1}^{k}1\{\tau^{(k)}\geq t\}\left[\operatorname{EJS}(\bs{\rho}(t-1);P_{1,t},\ldots,P_{M,t})-I_t\right]}. \label{eq:support-stopped-L}
\end{align}
For the terms satisfying~\eqref{eq:support-no-large-posterior}, we apply~\eqref{eq:support-no-large-posterior-bound}. Their contribution to the non-negative drift in~\eqref{eq:support-posterior-square-drift} is at most its full stopped sum, which telescopes according to
\begin{align}
    \E{\sum_{t=1}^{k}1\{\tau^{(k)}\geq t\}\E{V(\bs{\rho}(t))-V(\bs{\rho}(t-1))\mid\mc{F}_{t-1}}} &= \E{V(\bs{\rho}(\tau^{(k)}))}-V(\bs{\rho}(0))\leq1. \label{eq:support-stopped-posterior-square}
\end{align}
For the remaining terms, we apply~\eqref{eq:support-large-posterior-bound}. Equations~\eqref{eq:support-stopped-L} and~\eqref{eq:support-stopped-posterior-square} then give
\begin{align}
    \E{L(\bs{\rho}(\tau^{(k)}))}-L(\bs{\rho}(0)) &\leq \frac{1}{r}+\frac{1}{2\lambda_{\min}^3r^2}+\frac{C_1}{C-\vartheta_r}\E{\sum_{t=1}^{k}1\{\tau^{(k)}\geq t\}(C-I_t)}. \label{eq:support-stopped-comparison}
\end{align}
Since $\tau<\infty$ almost surely, $L(\bs{\rho}(\tau^{(k)}))$ converges almost surely to $L(\bs{\rho}(\tau))$. Fatou's lemma gives the first inequality below. For every $k$,~\eqref{eq:support-stopped-comparison} bounds the expression inside the $\liminf$, and monotone convergence gives the second inequality
\begin{align}
    \E{L(\bs{\rho}(\tau))}-L(\bs{\rho}(0))
    &\leq \liminf_{k\to\infty}\left[\E{L(\bs{\rho}(\tau^{(k)}))}-L(\bs{\rho}(0))\right] \\
    &\leq \frac{1}{r}+\frac{1}{2\lambda_{\min}^3r^2}+\frac{C_1}{C-\vartheta_r}\E{\sum_{t=1}^{\infty}1\{\tau\geq t\}(C-I_t)}.
\end{align}
Since the initial posterior is uniform,
\begin{align}
    L(\bs{\rho}(0)) &= \log\frac{M}{M-1}. \label{eq:support-initial-L}
\end{align}
Rearrangement gives the lower bound in~\eqref{eq:support-mutual-information-difference} without the positive-part operator. Because the stopped capacity deficit on the left-hand side is non-negative, taking the maximum of this lower bound and zero yields the positive-part form in~\eqref{eq:support-mutual-information-difference}.
\end{IEEEproof}

\section{Proof of \thmref{thm:main}} \label{sec:proof-main}

For the converse, suppose that the uniform bound in~\eqref{eq:main-converse} fails. Then, there exist a constant $d>0$, a strictly increasing sequence of positive integers $N_j\to\infty$, error probabilities $\epsilon_j$ satisfying
\begin{align}
    0<\epsilon_j &\leq \bar\epsilon, \qquad \log\frac{1}{\epsilon_j}\leq (C_1-\eta)N_j, \label{eq:proof-error-sequence-conditions}
\end{align}
and an $(N_j,M_j,\epsilon_j)$-VLF code for every $j\in\mathbb N$ such that, with $A_j\triangleq\max\{\log N_j,\log\frac{1}{\epsilon_j}\}$,
\begin{align}
    \log M_j &\geq \frac{N_jC+h_{\mr{b}}(\epsilon_j)}{1-\epsilon_j}-\frac{C}{C_1}A_j+dA_j. \label{eq:proof-contradiction-assumption}
\end{align}
For the $j$th code, let $\tau_j$ be its stopping time and let $\bs{\rho}_j(t)$ be its posterior process. Write $I_{j,t}$ for the conditional mutual information in~\eqref{eq:support-mutual-information} for this code, and define $\epsilon_{\mr{MAP},j}\triangleq1-\max_{m\in[M_j]}\rho_{j,m}(\tau_j)$.

The proof first determines the scale of the message set and derives an upper bound on the Fano deficit from the contradiction assumption. It then uses the terminal posterior potential to derive a conflicting lower bound on the stopped capacity deficit.

Applying the Fano-capacity bound in~\eqref{eq:support-fano} to the terminal MAP estimate and using $\E{\epsilon_{\mr{MAP},j}}\leq\epsilon_j$ and $h_{\mr{b}}(a)\leq\log 2$ give
\begin{align}
    (1-\epsilon_j)\log M_j &\leq CN_j+\log 2. \label{eq:proof-Fano-order}
\end{align}
For all sufficiently large $j$, $\log N_j\leq(C_1-\eta)N_j$. Together with~\eqref{eq:proof-error-sequence-conditions}, this gives $A_j\leq(C_1-\eta)N_j$. Assumption~\eqref{eq:proof-contradiction-assumption} then implies $\log M_j=\Omega(N_j)$, while~\eqref{eq:proof-Fano-order} implies $\log M_j=O(N_j)$. Hence,
\begin{align}
    \log(M_j-1) &= \Theta(N_j). \label{eq:proof-log-M-order}
\end{align}
In particular, $M_j\geq3$ and $\epsilon_j<1-\frac{1}{M_j}$ for all sufficiently large $j$, so the finite-length estimates in \secref{sec:supporting} apply.

We next introduce the Fano deficit for the $j$th code.
\begin{align}
    \Delta_j &\triangleq \Delta_{\mr{F}}(N_j,M_j,\epsilon_j). \label{eq:proof-delta-j}
\end{align}
The exact decomposition in~\eqref{eq:support-delta-decomposition} yields
\begin{align}
    \Delta_j &\geq \E{\sum_{t=1}^{\infty}1\{\tau_j\geq t\}(C-I_{j,t})}. \label{eq:proof-delta-lower}
\end{align}
The definition of $\Delta_j$ also gives
\begin{align}
    (1-\epsilon_j)\log M_j &= CN_j+h_{\mr{b}}(\epsilon_j)-\Delta_j. \label{eq:proof-delta-identity}
\end{align}
Combining~\eqref{eq:proof-delta-identity} with~\eqref{eq:proof-contradiction-assumption} and using $1-\epsilon_j\geq1-\bar\epsilon$ give
\begin{align}
    \Delta_j &\leq \frac{C}{C_1}(1-\epsilon_j)A_j-d(1-\bar\epsilon)A_j. \label{eq:proof-delta-upper}
\end{align}
Thus, $\Delta_j=O(A_j)$; in particular,~\eqref{eq:proof-delta-lower} shows that the stopped capacity deficit is $O(A_j)$.

We next derive a lower bound of order $A_j$ on the expected terminal posterior potential. Let
\begin{align}
    \mc{J}_{\mr{L}} &\triangleq \left\{j:\log\frac{1}{\epsilon_j}\leq\log N_j\right\}, \qquad \mc{J}_{\mr{E}}\triangleq \left\{j:\log\frac{1}{\epsilon_j}>\log N_j\right\}. \label{eq:proof-index-partition}
\end{align}
For $j\in\mc{J}_{\mr{L}}$, we have $A_j=\log N_j$ and~\eqref{eq:proof-delta-upper} gives $\Delta_j=O(\log N_j)$. By \lemref{lem:terminal-concentration} and~\eqref{eq:proof-log-M-order},
\begin{align}
    v_j \triangleq \E{\epsilon_{\mr{MAP},j}(1-\epsilon_{\mr{MAP},j})} &= O\left(\frac{\log N_j}{N_j}\right). \label{eq:proof-v-j}
\end{align}
The common-support property implies $v_j>0$, while~\eqref{eq:proof-v-j} and $1-\epsilon_j\geq1-\bar\epsilon>0$ give $v_j<\frac{1-\epsilon_j}{4}$ for all sufficiently large $j$. Equation~\eqref{eq:proof-v-j} also gives $\log\frac{1}{v_j}\geq\log N_j-\log\log N_j-O(1)$ and $v_j\log\frac{1}{v_j}=o(1)$. Together with $\E{\epsilon_{\mr{MAP},j}}\leq\epsilon_j$, these estimates verify the hypotheses of \lemref{lem:scalar}. Applying that lemma with $Z=\epsilon_{\mr{MAP},j}$ and then using~\eqref{eq:support-L-lower} give
\begin{align}
    \E{L(\bs{\rho}_j(\tau_j))} &\geq (1-\epsilon_j)\log N_j-O(\log\log N_j). \label{eq:proof-L-length}
\end{align}

For $j\in\mc{J}_{\mr{E}}$, we have $A_j=\log\frac{1}{\epsilon_j}$. The function $z\mapsto(1-z)\log\frac{1}{z}$ is convex and decreasing on $(0,1)$. Therefore, Jensen's inequality, the bound $\E{\epsilon_{\mr{MAP},j}}\leq\epsilon_j$, and~\eqref{eq:support-L-lower} give
\begin{align}
    \E{L(\bs{\rho}_j(\tau_j))}
    &\geq \E{(1-\epsilon_{\mr{MAP},j})\log\frac{1}{\epsilon_{\mr{MAP},j}}} \\
    &\geq \left(1-\E{\epsilon_{\mr{MAP},j}}\right)\log\frac{1}{\E{\epsilon_{\mr{MAP},j}}} \\
    &\geq (1-\epsilon_j)\log\frac{1}{\epsilon_j}=(1-\epsilon_j)A_j. \label{eq:proof-L-error}
\end{align}
The estimate in~\eqref{eq:proof-L-length} applies when $\log\frac{1}{\epsilon_j}\leq\log N_j$, and the estimate in~\eqref{eq:proof-L-error} applies when $\log\frac{1}{\epsilon_j}>\log N_j$. Together, they give
\begin{align}
    \E{L(\bs{\rho}_j(\tau_j))} &\geq (1-\epsilon_j)A_j-o(A_j). \label{eq:proof-L-unified}
\end{align}

It remains to convert the terminal quantity in~\eqref{eq:proof-L-unified} into accumulated capacity deficit. We choose
\begin{align}
    r_j &\triangleq \left(A_j\log A_j\right)^{-\frac{1}{3}}. \label{eq:proof-r-j}
\end{align}
The threshold choice in~\eqref{eq:proof-r-j} has the two required properties
\begin{align}
    h_{\mr{b}}(r_j)+r_j\log(|\mc{X}|-1) &= o(1), \qquad \frac{1}{r_j}+\frac{1}{2\lambda_{\min}^{3}r_j^2}=o(A_j). \label{eq:proof-r-properties}
\end{align}
For all sufficiently large $j$,~\eqref{eq:proof-r-properties} verifies the condition in~\eqref{eq:support-r-condition}. Moreover, $\log\frac{M_j}{M_j-1}=o(A_j)$, and $1-\epsilon_j\geq1-\bar\epsilon>0$. Hence, the argument of the positive part in~\eqref{eq:support-mutual-information-difference} is positive for all sufficiently large $j$. Applying \lemref{lem:mutual-information-difference} with $r=r_j$ and using~\eqref{eq:proof-delta-lower},~\eqref{eq:proof-L-unified}, and~\eqref{eq:proof-r-properties} give
\begin{align}
    \Delta_j &\geq \left(\frac{C}{C_1}+o(1)\right)\left((1-\epsilon_j)A_j-o(A_j)\right) = \frac{C}{C_1}(1-\epsilon_j)A_j-o(A_j). \label{eq:proof-delta-final-lower}
\end{align}
The lower bound in~\eqref{eq:proof-delta-final-lower} and the upper bound in~\eqref{eq:proof-delta-upper} are incompatible because $d(1-\bar\epsilon)A_j$ cannot be absorbed into the $o(A_j)$ remainder. This contradiction proves~\eqref{eq:main-converse}. Finally, $\log A_N=o(A_N)$ and $\frac{h_{\mr{b}}(\epsilon_N)}{1-\epsilon_N}=O(1)=o(A_N)$, so~\eqref{eq:main-achievability} and~\eqref{eq:main-converse} give~\eqref{eq:main-expansion}.
\hfill\IEEEQED

\section{Proof of \thmref{thm:structure}} \label{sec:proof-structure}

First-order optimality gives the conclusions about terminal branches and decoding times, while second-order optimality gives the communication and confirmation structure.

The Fano-capacity bound concerns only the total expected decoding time. The following lemma lower-bounds the expected decoding time on the correct-decoding event.

\begin{lemma}\label{lem:correct-decoding-time}
For every DMC with $C>0$, there exist constants $K_{\mr{dec}}>0$ and $M_0\geq2$, depending only on the channel, such that every VLF code with stopping time $\tau$, message-set size $M\geq M_0$, and average error probability $\Pe$ satisfies
\begin{align}
    \E{\tau1\{\widehat W=W\}} &\geq \frac{(1-\Pe)\log M}{C}-K_{\mr{dec}}\sqrt{\log M}. \label{eq:structure-correct-time}
\end{align}
\end{lemma}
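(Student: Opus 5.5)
The plan is a change-of-measure argument on the accumulated information density, truncated at a horizon of order $\log M$ so that martingale fluctuations cost only $O(\sqrt{\log M})$. Let $\mc{E}^{\mr{c}}\triangleq\{\widehat W=W\}$ be the correct-decoding event. With $\imath$ from~\eqref{eq:notation-information-density} and $x_t(m)$ from~\eqref{eq:support-input-message}, define
\begin{align}
    S_t &\triangleq \sum_{s=1}^{t}\imath(x_s(W);Y_s), \qquad M_t\triangleq\sum_{s=1}^{t}\left(\imath(x_s(W);Y_s)-D\left(P_{Y|X=x_s(W)}\middle\|P_Y^*\right)\right).
\end{align}
By~\eqref{eq:notation-capacity-condition}, $S_t\leq Ct+M_t$ for every $t$. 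The process $M_t$ is a martingale with respect to $\sigma(U,W,Y^t)$, and its increments are bounded in absolute value by $b_{\mr{ch}}$. Fix the horizon $n\triangleq\lceil 2\log M/C\rceil$ and let $\tau'\triangleq\min\{\tau,n\}$. Since $\E{\tau1\{\mc{E}^{\mr{c}}\}}\geq\E{\tau'1\{\mc{E}^{\mr{c}}\}}$, it suffices to lower-bound the truncated quantity.

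\emph{Step 1 (change of measure).} Let $Q$ be the auxiliary law under which $(U,W)$ keep their distribution but $Y_1,Y_2,\ldots$ are i.i.d.\ $P_Y^*$, independent of $(U,W)$. Under $Q$, the pair $(\tau,\widehat W)$ is a function of $(U,Y)$ and hence independent of $W$, so $Q[\widehat W=W]=\frac1M$. On $\{\tau=t\}$ the likelihood ratio $dP/dQ$ restricted to $(U,W,Y^t)$ equals $\exp\{S_t\}$. Summing over $t$ therefore gives, for $\gamma>0$,
\begin{align}
    \Prob{\widehat W=W,\ S_\tau\leq\log M-\gamma}\leq \exp\{\log M-\gamma\}\cdot\frac1M=\exp\{-\gamma\}.
\end{align}
I would take $\gamma\triangleq\log\log M$.

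\emph{Step 2 (splitting the correct event).} Pointwise,
\begin{align}
    C\tau'1\{\mc{E}^{\mr{c}}\}\geq 1\{\mc{E}^{\mr{c}},\tau\leq n\}\left(S_\tau-M_\tau\right)+Cn\,1\{\mc{E}^{\mr{c}},\tau>n\}.
\end{align}
I bound the three contributions separately.
\begin{enumerate}
\item[] \emph{Martingale term.} Cauchy--Schwarz gives $\left|\E{M_{\tau'}1\{\mc{E}^{\mr{c}},\tau\leq n\}}\right|\leq\sqrt{\E{M_{\tau'}^2}}\leq b_{\mr{ch}}\sqrt n=O(\sqrt{\log M})$, using orthogonality of the martingale increments and optional stopping at the bounded time $\tau'$.
\item[] \emph{Information-density term.} Every increment of $S$ is at least $-\bar b$ for a channel constant $\bar b$, so $S_\tau\geq -\bar b n$ on $\{\tau\leq n\}$. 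By Step~1,
\begin{align}
    \E{S_\tau1\{\mc{E}^{\mr{c}},\tau\leq n\}}\geq(\log M-\gamma)\left(\Prob{\mc{E}^{\mr{c}},\tau\leq n}-e^{-\gamma}\right)-\bar b n e^{-\gamma}.
\end{align}
With $\gamma=\log\log M$ the error terms are $O(\log\log M)$.
\item[] \emph{Long-time term.} On $\{\tau>n\}$ we have $Cn\geq 2\log M\geq\log M$.
\end{enumerate}
Adding the three bounds yields $C\,\E{\tau'1\{\mc{E}^{\mr{c}}\}}\geq(1-\Pe)\log M-O(\sqrt{\log M})$. Dividing by $C$ gives~\eqref{eq:structure-correct-time} with $K_{\mr{dec}}$ depending only on $b_{\mr{ch}}$, $\bar b$ and $C$. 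The threshold $M_0$ only needs to make $\log\log M>0$ and to let the $O(\log\log M)$ terms be absorbed into $K_{\mr{dec}}\sqrt{\log M}$.

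\emph{Main obstacle.} The key difficulty is that the converse must hold without any a priori bound on $\E{\tau}$. Without truncation, the martingale fluctuation would be of order $\sqrt{\E{\tau}}$, which could be much larger than $\sqrt{\log M}$. Truncating at $n=\Theta(\log M)$ avoids this, because runs with $\tau>n$ already contribute at least $\log M/C$ on their own. The remaining care concerns the finite-alphabet bounds on the information density. If some output $y$ has $P_{Y|X}(y|x)=0$ for some input $x$ (the case $C_1=\infty$), then $\imath$ is undefined there. However, it is only evaluated on the support of the transmitted input's output distribution, and $P_Y^*>0$ there, so the range constants $b_{\mr{ch}}$ and $\bar b$ remain finite.
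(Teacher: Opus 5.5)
Your proof is correct, and it takes a genuinely different route from the paper's.

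\textbf{The paper's route.} The paper changes measure only at fixed times. For each $n$ it bounds $\Prob{\widehat W=W,\tau\leq n}\leq\Prob{\Lambda_n\geq\log M-\xi_M}+\exp\{-\xi_M\}$ through the truncated decision $\widehat W^{(n)}$. It then inserts this into the layer-cake bound $\E{\tau1\{\widehat W=W\}}\geq n_M(1-\Pe)-\sum_{n<n_M}\Prob{\widehat W=W,\tau\leq n}$. The $O(\sqrt{\log M})$ loss comes from summing Azuma--Hoeffding tails of the form $\exp\{-cj^2/\log M\}$, with a separate case for $b_{\mr{ch}}=0$.

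\textbf{Your route.} You change measure once, at $\tau$. You then convert information into time through the pointwise compensator bound $S_\tau-M_\tau\leq C\tau$. The fluctuation cost is therefore just the $L^2$ norm $b_{\mr{ch}}\sqrt{n}$ of the martingale stopped at the bounded time $\min\{\tau,n\}$.

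\textbf{What each buys.} Your argument is shorter and uses only second moments: no concentration inequality, no tail sum, and no degenerate case. Your constant $\bar b$ costs nothing extra, because $\imath(x;y)\geq D(P_{Y|X=x}\|P_Y^*)-b_{\mr{ch}}\geq-b_{\mr{ch}}$. In exchange, the paper's route controls $\Prob{\widehat W=W,\tau\leq n}$ for every fixed $n$. That is information about the distribution of early correct stopping, not only about a truncated mean.

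\textbf{Two fixes in Step 1.} First, $\tau$ may be infinite under $Q$, so you should write $Q[\widehat W=W,\tau<\infty]\leq\frac{1}{M}$ rather than an equality. Second, $S_t$ is undefined on $Q$-histories that contain a transition of zero physical probability. You should therefore work with the likelihood ratio, which is zero on such histories and equals $\exp\{S_t\}$ elsewhere. You also need a fixed version of the stopping rule on all output histories. Neither point affects the conclusion.
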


\begin{IEEEproof}
See \appref{app:correct-decoding}.
\end{IEEEproof}

First, we prove part (a) of \thmref{thm:structure}. Recall from~\eqref{eq:main-sequence-events} that $Z_j=1-\max_{m\in[M_j]}\rho_{j,m}(\tau_j)$ is the terminal conditional MAP error, $\mc{E}_j=\{\widehat W_j\neq W_j\}$ is the error event, and $\mc{A}_j=\{Z_j>\frac{1}{2}\}$ is the event on which this conditional error exceeds $\frac{1}{2}$. For the $j$th code, let $\mathsf O_{\tau_j}$ be the padded terminal observation in~\eqref{eq:problem-terminal-observation} and set $\mc{G}_j\triangleq\sigma(\mathsf O_{\tau_j})$. By \lemref{lem:padded-observation} and the MAP-decoder definition in~\eqref{eq:main-sequence-map-decoder}, $\widehat W_j$ is $\mc{G}_j$-measurable and
\begin{align}
    Z_j &= \Prob{\mc{E}_j\mid\mc{G}_j} \quad\text{almost surely}. \label{eq:structure-conditional-error}
\end{align}
In particular, $\E{Z_j}=P_{{\mr{e}},j}\leq\epsilon$. Applying the Fano-capacity bound in~\eqref{eq:support-fano} with $\Pe=P_{{\mr{e}},j}$, dividing by $N_j$, and using~\eqref{eq:main-first-order-codes} give
\begin{align}
    P_{{\mr{e}},j} &\to \epsilon. \label{eq:structure-error-limit}
\end{align}

The first-order condition in~\eqref{eq:main-first-order-codes} also gives
\begin{align}
    \Delta_{\mr{F}}(N_j,M_j,\epsilon) &= o(N_j). \label{eq:structure-delta-order}
\end{align}
Since $\log(M_j-1)=\Theta(N_j)$, we have $M_j\geq3$ and $\epsilon<1-\frac{1}{M_j}$ for all sufficiently large $j$. Hence, \lemref{lem:terminal-concentration} and~\eqref{eq:structure-delta-order} imply
\begin{align}
    \E{Z_j(1-Z_j)} &\to 0. \label{eq:structure-terminal-product}
\end{align}
For every $z\in[0,1]$, we have
\begin{align}
    \left|1\left\{z>\frac{1}{2}\right\}-z\right| &\leq 2z(1-z). \label{eq:structure-endpoint-elementary}
\end{align}
Taking expectations in~\eqref{eq:structure-endpoint-elementary} with $z=Z_j$ and using $\E{Z_j}=P_{{\mr{e}},j}$ give $|\Prob{\mc{A}_j}-P_{{\mr{e}},j}|\leq2\E{Z_j(1-Z_j)}$. Hence,~\eqref{eq:structure-error-limit} and~\eqref{eq:structure-terminal-product} imply $\Prob{\mc{A}_j}\to\epsilon$. On $\mc{A}_j$, we have $Z_j>\frac{1}{2}$ and hence $1-Z_j\leq2Z_j(1-Z_j)$. On $\mc{A}_j^{\mr{c}}$, we have $Z_j\leq\frac{1}{2}$ and hence $Z_j\leq2Z_j(1-Z_j)$. Therefore, the conditional-error identity in~\eqref{eq:structure-conditional-error}, these pointwise bounds, and~\eqref{eq:structure-terminal-product} give
\begin{align}
    \Prob{\mc{A}_j\mathbin{\triangle}\mc{E}_j} &= \E{1\{\mc{A}_j\}(1-Z_j)+1\{\mc{A}_j^{\mr{c}}\}Z_j}\leq2\E{Z_j(1-Z_j)}\to0. \label{eq:structure-event-equivalence}
\end{align}
The same two pointwise bounds and $\Prob{\mc{A}_j}\to\epsilon\in(0,1)$ prove~\eqref{eq:main-branch-errors}.

We next derive the decoding-time conclusions. Applying \lemref{lem:correct-decoding-time} with $\Pe=P_{{\mr{e}},j}$ and using~\eqref{eq:main-first-order-codes} and~\eqref{eq:structure-error-limit} give
\begin{align}
    \E{\tau_j1\{\widehat W_j=W_j\}} &\geq N_j-o(N_j). \label{eq:structure-correct-time-lower}
\end{align}
Since $\E{\tau_j}\leq N_j$, it follows that
\begin{align}
    \E{\tau_j} &= N_j-o(N_j), \qquad \E{\tau_j1\{\mc{E}_j\}}=o(N_j). \label{eq:structure-length-and-error-time}
\end{align}
The measurability conclusion in \lemref{lem:padded-observation} also applies to $\tau_j$. Consequently,~\eqref{eq:structure-conditional-error} yields
\begin{align}
    \E{\tau_j Z_j} &= \E{\tau_j\E{1\{\mc{E}_j\}\mid\mc{G}_j}}=\E{\tau_j1\{\mc{E}_j\}}. \label{eq:structure-conditional-error-time}
\end{align}
Since $1\{\mc{A}_j\}\leq2Z_j$, equations~\eqref{eq:structure-length-and-error-time} and~\eqref{eq:structure-conditional-error-time} imply
\begin{align}
    \E{\tau_j1\{\mc{A}_j\}} &= o(N_j), \qquad \E{\tau_j1\{\mc{A}_j^{\mr{c}}\}}=N_j-o(N_j). \label{eq:structure-branch-time-result}
\end{align}
For every $\nu>0$, conditional Markov inequalities give
\begin{align}
    \Prob{\tau_j>\nu N_j\mid\mc{E}_j} &\leq \frac{\E{\tau_j1\{\mc{E}_j\}}}{\nu N_jP_{{\mr{e}},j}}, \qquad \Prob{\tau_j>\nu N_j\mid\mc{A}_j}\leq \frac{\E{\tau_j1\{\mc{A}_j\}}}{\nu N_j\Prob{\mc{A}_j}}. \label{eq:structure-tail-bounds}
\end{align}
The numerators are $o(N_j)$, and both conditioning probabilities converge to $\epsilon>0$. This completes the proof of part (a).

We now prove part (b) of \thmref{thm:structure} and assume the second-order expansion in~\eqref{eq:main-second-order-codes}. For the $j$th code, set $\Delta_j\triangleq\Delta_{\mr{F}}(N_j,M_j,\epsilon)$. Define the stopped capacity deficit by
\begin{align}
    S_j &\triangleq \E{\sum_{t=1}^{\infty}1\{\tau_j\geq t\}(C-I_{j,t})}. \label{eq:structure-capacity-deficit}
\end{align}
Assumption~\eqref{eq:main-second-order-codes} gives
\begin{align}
    \Delta_j &= (1-\epsilon)\frac{C}{C_1}\log N_j+o(\log N_j). \label{eq:structure-second-order-deficit-value}
\end{align}
The exact decomposition in~\eqref{eq:support-delta-decomposition} and the non-negativity established there imply $S_j\leq\Delta_j$ and $C(N_j-\E{\tau_j})\leq\Delta_j-S_j$.

\lemref{lem:terminal-concentration} and~\eqref{eq:structure-second-order-deficit-value} yield
\begin{align}
    \E{Z_j(1-Z_j)} &= O\left(\frac{\log N_j}{N_j}\right). \label{eq:structure-second-order-terminal-product}
\end{align}
Since $\E{Z_j}=P_{{\mr{e}},j}\leq\epsilon$, common support gives $\E{Z_j(1-Z_j)}>0$, and~\eqref{eq:structure-second-order-terminal-product} implies $\E{Z_j(1-Z_j)}<\frac{1-\epsilon}{4}$ for all sufficiently large $j$. Thus, \lemref{lem:scalar} applies with $Z=Z_j$. Combining that lemma with~\eqref{eq:support-L-lower} gives
\begin{align}
    \E{L(\bs{\rho}_j(\tau_j))} &\geq (1-\epsilon)\log N_j-O(\log\log N_j). \label{eq:structure-second-order-L-lower}
\end{align}
For fixed $\epsilon$, the threshold $r_j$ in the theorem is the specialization of~\eqref{eq:proof-r-j} to $A_j=\log N_j$. Hence,~\eqref{eq:proof-r-properties} makes it admissible in \lemref{lem:mutual-information-difference}; the additive $r_j$-dependent terms are $o(\log N_j)$, and the coefficient multiplying the positive part is $\frac{C}{C_1}-o(1)$. Moreover,~\eqref{eq:structure-second-order-L-lower},~\eqref{eq:proof-r-properties}, and $\log\frac{M_j}{M_j-1}=o(1)$ show that the argument of the positive part in~\eqref{eq:support-mutual-information-difference} is positive. Applying the lemma with $r=r_j$ therefore gives
\begin{align}
    S_j &\geq (1-\epsilon)\frac{C}{C_1}\log N_j-o(\log N_j). \label{eq:structure-capacity-deficit-lower}
\end{align}
Combining this lower bound with $S_j\leq\Delta_j$ and~\eqref{eq:structure-second-order-deficit-value} yields
\begin{align}
    S_j &= (1-\epsilon)\frac{C}{C_1}\log N_j+o(\log N_j), \label{eq:structure-S-value}\\
    0\leq C(N_j-\E{\tau_j}) &\leq \Delta_j-S_j=o(\log N_j), \\
    N_j-\E{\tau_j} &= o(\log N_j). \label{eq:structure-capacity-deficit-value}
\end{align}

To separate the low-posterior times $\mc{H}_{j,t}^{\mr{c}}$ from the high-posterior times $\mc{H}_{j,t}$, define the non-negative drift gap
\begin{align}
    G_{j,t} &\triangleq \operatorname{EJS}_{j,t}-I_{j,t}. \label{eq:structure-drift-gap}
\end{align}
Its non-negativity follows from~\eqref{eq:support-ejs-dominates-information}.
Define the expected stopped drift gaps in $\mc{H}_{j,t}^{\mr{c}}$ and $\mc{H}_{j,t}$ by
\begin{align}
    G_{j,\mr{com}} &\triangleq \E{\sum_{t=1}^{\infty}1\{\tau_j\geq t,\mc{H}_{j,t}^{\mr{c}}\}G_{j,t}}, \label{eq:structure-communication-drift-gap-definition}\\
    G_{j,\mr{conf}} &\triangleq \E{\sum_{t=1}^{\infty}1\{\tau_j\geq t,\mc{H}_{j,t}\}G_{j,t}}. \label{eq:structure-confirmation-drift-gap-definition}
\end{align}
The low-posterior estimate in~\eqref{eq:support-no-large-posterior-bound} and the telescoping bound~\eqref{eq:support-stopped-posterior-square} give
\begin{align}
    0\leq G_{j,\mr{com}} &\leq \frac{1}{r_j}+\frac{1}{2\lambda_{\min}^3r_j^2}=o(\log N_j). \label{eq:structure-communication-drift-gap}
\end{align}
Set $\vartheta_j\triangleq h_{\mr{b}}(r_j)+r_j\log(|\mc{X}|-1)=o(1)$. Applying the truncation and limiting argument in~\eqref{eq:support-stopped-L}--\eqref{eq:support-stopped-comparison}, while retaining the low- and high-posterior contributions separately, gives
\begin{align}
    \E{L(\bs{\rho}_j(\tau_j))}-L(\bs{\rho}_j(0)) &\leq G_{j,\mr{com}}+G_{j,\mr{conf}} \leq \frac{1}{r_j}+\frac{1}{2\lambda_{\min}^3r_j^2}+\frac{C_1}{C-\vartheta_j}S_j. \label{eq:structure-drift-gap-sandwich}
\end{align}
The leftmost expression in~\eqref{eq:structure-drift-gap-sandwich} is at least $(1-\epsilon)\log N_j-o(\log N_j)$ by~\eqref{eq:structure-second-order-L-lower} and $L(\bs{\rho}_j(0))=\log\frac{M_j}{M_j-1}=o(1)$ by~\eqref{eq:support-initial-L}. The final upper bound in~\eqref{eq:structure-drift-gap-sandwich} equals $(1-\epsilon)\log N_j+o(\log N_j)$ by~\eqref{eq:proof-r-properties} and~\eqref{eq:structure-S-value}. Hence,
\begin{align}
    G_{j,\mr{com}}+G_{j,\mr{conf}} &= (1-\epsilon)\log N_j+o(\log N_j). \label{eq:structure-total-drift-gap}
\end{align}
Combining~\eqref{eq:structure-total-drift-gap} with~\eqref{eq:structure-communication-drift-gap} gives
\begin{align}
    G_{j,\mr{conf}} &= (1-\epsilon)\log N_j+o(\log N_j). \label{eq:structure-confirmation-drift-gap}
\end{align}

On $\mc{H}_{j,t}$,~\eqref{eq:support-large-posterior-mutual-information} gives $I_{j,t}\leq\vartheta_j$, while~\eqref{eq:support-ejs-C-one} bounds the EJS divergence by $C_1$. Hence,
\begin{align}
    \frac{G_{j,\mr{conf}}}{C_1} &\leq \E{T_{j,\mr{conf}}} \leq \frac{S_j}{C-\vartheta_j}. \label{eq:structure-confirmation-occupancy-bounds}
\end{align}
Combining~\eqref{eq:structure-confirmation-occupancy-bounds},~\eqref{eq:structure-S-value}, and~\eqref{eq:structure-confirmation-drift-gap} proves the occupancy formula in~\eqref{eq:main-second-order-occupancies}. Moreover,
\begin{align}
    C_1\E{T_{j,\mr{conf}}}-G_{j,\mr{conf}} &= \E{\sum_{t=1}^{\infty}1\{\tau_j\geq t\}1\{\mc{H}_{j,t}\}\left(C_1-\operatorname{EJS}_{j,t}+I_{j,t}\right)}=o(\log N_j). \label{eq:structure-confirmation-deficiencies}
\end{align}
Since both $C_1-\operatorname{EJS}_{j,t}$ and $I_{j,t}$ are non-negative,~\eqref{eq:structure-confirmation-deficiencies} proves~\eqref{eq:main-confirmation-efficiency} and~\eqref{eq:main-confirmation-mutual-information}. The high-posterior contribution to $S_j$ equals $C\E{T_{j,\mr{conf}}}-\E{\sum_{t=1}^{\infty}1\{\tau_j\geq t,\mc{H}_{j,t}\}I_{j,t}}$. Equations~\eqref{eq:main-second-order-occupancies} and~\eqref{eq:main-confirmation-mutual-information} therefore show that this contribution is $(1-\epsilon)\frac{C}{C_1}\log N_j+o(\log N_j)$. Subtracting it from the total stopped capacity deficit in~\eqref{eq:structure-S-value} proves~\eqref{eq:main-communication-efficiency} and completes the proof.
\hfill\IEEEQED

\section{Proof of \thmref{thm:bec}} \label{sec:proof-bec}

We first establish a prefix-tree inequality, then prove the converse and the matching achievability.

For a binary prefix tree with $s$ leaves at depths $d_1,\ldots,d_s$, its total external path length is $\sum_{i=1}^{s}d_i$. Let $\ell_{\Sigma}(s)$ denote the minimum of this sum over all binary prefix trees with $s$ leaves. Since $\ell_{\mr{H}}(s)$ in~\eqref{eq:main-huffman-length} is the corresponding minimum average depth for $s$ equiprobable leaves,
\begin{align}
    \ell_{\Sigma}(s) &= s\ell_{\mr{H}}(s). \label{eq:bec-total-path}
\end{align}
For $s\geq2$, let $b\triangleq\lfloor\log_2 s\rfloor$. Then, $\ell_{\Sigma}(s)=s(b+2)-2^{b+1}$, and
\begin{align}
    \frac{\ell_{\Sigma}(s+1)}{s}-\frac{\ell_{\Sigma}(s)}{s-1}
    &= \frac{2^{b+1}-(b+2)}{s(s-1)}>0. \label{eq:bec-path-monotonicity}
\end{align}
At the endpoint $s+1=2^{b+1}$, the adjacent Huffman formulas agree because $\ell_{\Sigma}(2^{b+1})=2^{b+1}(b+1)$. Hence,~\eqref{eq:bec-path-monotonicity} holds for every $s\geq2$.

For the converse, realize the BEC using independent indicators $J_1,J_2,\ldots\sim\operatorname{Bernoulli}(1-\delta)$, independent of $(U,W)$, by setting $Y_t=X_t$ when $J_t=1$ and $Y_t=\mathsf e$ when $J_t=0$. Because $\{\tau=t\}\in\sigma(U,Y^t)$ and the alphabets are finite, the indicator of $\{\tau=t\}$ can be represented as a function of $(U,Y^t)$. Fix one such representation for every $t$ and, on an arbitrary output history, define the canonical stopping time as the first indicated time, with value $\infty$ if no time is indicated. This rule agrees with $\tau$ almost surely under the physical law.

Let $\Xi\triangleq(U,J_1,J_2,\ldots)$. For each realization $\upsilon$ of $\Xi$ and each $w\in[M]$, recursively evaluate the causal encoder and the canonical stopping rule to obtain the counterfactual path for message $w$. Let $\tau_w(\upsilon)$ denote the canonical stopping time on this counterfactual path. Since $W$ is uniform and independent of $\Xi$,
\begin{align}
    \E{\tau} &= \frac{1}{M}\sum_{w=1}^{M}\E{\tau_w(\Xi)}. \label{eq:bec-counterfactual-mean}
\end{align}
Thus, $\E{\tau_w(\Xi)}<\infty$ for every $w$. Because the message set is finite, there is a probability-one set of realizations $\upsilon$ on which all $M$ counterfactual paths stop in finite time. We perform the following deterministic prefix-tree argument on that set and then average over $\Xi$.

Once the erased channel uses are deleted, each message determines a finite binary terminal transcript. The distinct terminal transcripts are prefix-free. For a fixed realization of $\Xi$, the erasure locations and unerased channel-use indices are common to all counterfactual messages. Equal unerased prefixes therefore give equal full received histories through the earlier stopping time, including every erasure symbol. Thus, if one terminal transcript were a strict prefix of another, the common stopping rule would stop for both messages at the earlier stopping time. Messages that generate the same terminal transcript also have the same decoder output, so at most one of them is decoded correctly.

Let $K$ denote the number of distinct terminal transcripts generated by the $M$ counterfactual messages, so $K$ is a function of $\Xi$. Fix a realization $\upsilon$ in the probability-one set above. At most $K$ of the $M$ messages are decoded correctly, so the pathwise success fraction is at most $\frac{K}{M}$. Let $d_i$ be the length of transcript $i$, and let $s_i$ be the number of messages that produce it. Then, $\sum_{i=1}^{K}s_i=M$. If we append an optimal binary prefix tree with $s_i$ leaves to transcript $i$, we obtain a binary prefix tree with $M$ leaves. Therefore,
\begin{align}
    \sum_{i=1}^{K}\left(s_id_i+\ell_{\Sigma}(s_i)\right) &\geq \ell_{\Sigma}(M). \label{eq:bec-appended-tree}
\end{align}
The monotonicity in~\eqref{eq:bec-path-monotonicity}, with $\ell_{\Sigma}(1)=0$, gives
\begin{align}
    \ell_{\Sigma}(s_i) &\leq \frac{s_i-1}{M-1}\ell_{\Sigma}(M). \label{eq:bec-small-tree-bound}
\end{align}
Substituting~\eqref{eq:bec-small-tree-bound} into~\eqref{eq:bec-appended-tree} gives
\begin{align}
    \frac{1}{M}\sum_{i=1}^{K}s_id_i &\geq \frac{K-1}{M-1}\ell_{\mr{H}}(M). \label{eq:bec-pathwise-length}
\end{align}

Let $S_\tau$ denote the number of unerased outputs observed before stopping. For fixed $\Xi$, the left-hand side in~\eqref{eq:bec-pathwise-length} equals $\E{S_\tau\mid\Xi}$ because exactly $s_i$ messages produce a transcript of length $d_i$. Averaging~\eqref{eq:bec-pathwise-length} over $\Xi$ gives
\begin{align}
    \E{S_\tau} &\geq \frac{\E{K}-1}{M-1}\ell_{\mr{H}}(M) 
    \geq \frac{M(1-\Pe)-1}{M-1}\ell_{\mr{H}}(M), \label{eq:bec-success-length}
\end{align}
where the second inequality follows because the pathwise success fraction is at most $\frac{K}{M}$ and hence $M(1-\Pe)\leq\E{K}$.
With the unerased-output indicators $J_t$ introduced at the start of the converse,
\begin{align}
    S_\tau &= \sum_{t=1}^{\infty}1\{\tau\geq t\}J_t. \label{eq:bec-stopped-count}
\end{align}
The event $\{\tau\geq t\}$ is determined before channel use $t$, while $J_t$ is independent of the preceding history and has mean $1-\delta$. Tonelli's theorem gives
\begin{align}
    \E{S_\tau} &= (1-\delta)\E{\tau}. \label{eq:bec-mean-count}
\end{align}
Equations~\eqref{eq:bec-success-length} and~\eqref{eq:bec-mean-count}, together with $\Pe\leq\epsilon$, prove the converse in~\eqref{eq:main-bec}.

For achievability, we set
\begin{align}
    q &\triangleq \frac{M(1-\epsilon)-1}{M-1}. \label{eq:bec-mixture-probability}
\end{align}
Using common randomness, the code selects the immediate-guess branch with probability $1-q$ and the Huffman-transmission branch with probability $q$. In the first branch, the decoder stops at time zero and makes a fixed guess. In the second branch, the encoder transmits an optimal Huffman codeword and repeats each bit until it is received without erasure. The second branch has zero error and expected decoding time $\frac{\ell_{\mr{H}}(M)}{1-\delta}$. The resulting error probability and expected decoding time are
\begin{align}
    (1-q)\left(1-\frac{1}{M}\right) &= \epsilon, \qquad q\frac{\ell_{\mr{H}}(M)}{1-\delta}=\frac{M(1-\epsilon)-1}{M-1}\frac{\ell_{\mr{H}}(M)}{1-\delta}. \label{eq:bec-achievability-values}
\end{align}
If $\epsilon\geq1-\frac{1}{M}$, immediate guessing satisfies the error constraint. The strict inequality in~\eqref{eq:bec-path-monotonicity} also shows that equality in the pathwise prefix-tree bound requires either one terminal transcript or $M$ distinct terminal transcripts. These are the two cases used by the immediate-guess and Huffman-transmission branches.
\hfill\IEEEQED

\section{Conclusion} \label{sec:conclusion}

We close the order-$\log N$ converse gap left by the 2011 bounds of Polyanskiy \emph{et al.}~\cite{PolyanskiyPoorVerdu2011} for VLF coding in the non-vanishing error probability regime. For every DMC with $C>0$ and $C_1<\infty$, the result also determines the asymptotic VLF fundamental limit when the error probability vanishes subexponentially and when it decays exponentially at any rate strictly below $C_1$. Across these regimes, the fundamental limit has the expansion $\log \Mvlf(N,\epsilon_N)=\frac{NC}{1-\epsilon_N}-\frac{C}{C_1}A_N+o(A_N)$. The converse combines a stopped R\'enyi-entropy argument whose order approaches one with an application of the EJS divergence.

The structural theorem shows that every first-order-optimal code sequence with a non-vanishing error probability has an event of probability $\epsilon+o(1)$ that accounts for almost all decoding errors and has conditional mean decoding time $o(N)$; decoding is asymptotically reliable on the complementary event. Second-order optimality further requires communication and confirmation behavior. For the BEC, where $C_1=\infty$, the optimum instead interpolates exactly between immediate guessing and zero-error Huffman transmission with retransmission after erasures. The present results do not determine the second-order fundamental limit for VLSF codes or for VLF codes over general DMCs with $C_1=\infty$ beyond the BEC.

\appendices

\section{Proof of the Achievability Bound in \thmref{thm:main}} \label{app:achievability}

We derive~\eqref{eq:main-achievability} by specializing the modified Yamamoto--Itoh construction of Yavas and Tan to the present error-probability sequences.

Following the communication- and confirmation-phase construction in~\cite[Sec.~VI-A1]{YavasTan2025}, use the capacity-achieving input distribution $P_X^*$ associated with~\eqref{eq:notation-capacity-output} as the communication-phase input distribution, and choose $(x_{\mr{A}},x_{\mr{R}})$ such that
\begin{align}
    D\left(P_{Y|X=x_{\mr{A}}}\middle\|P_{Y|X=x_{\mr{R}}}\right) &= C_1, \qquad D_{\mr{R}}\triangleq D\left(P_{Y|X=x_{\mr{R}}}\middle\|P_{Y|X=x_{\mr{A}}}\right). \label{eq:app-ach-divergences}
\end{align}
Since $C_1>C>0$, the two conditional output distributions in the first divergence in~\eqref{eq:app-ach-divergences} are distinct. Since $C_1$ is the maximum over ordered input pairs, $C_1<\infty$ implies that both ordered divergences between these two distributions are finite. Hence, they have the same support and $0<D_{\mr{R}}<\infty$. The subscripts $\mr{A}$ and $\mr{R}$ denote the accept and reject symbols of the confirmation phase in the cited construction; the subscript $\mr{c}$ denotes its communication phase.

We next define the three overshoot constants in the cited bound. For a random variable $G$ with $\E{G}>0$, let $P_G$ denote its probability law and define
\begin{align}
    b(P_G) &\triangleq \min\left\{\frac{\E{(G^+)^2}}{\E{G}},\mathop{\mr{ess\,sup}}G\right\}. \label{eq:app-ach-b-function}
\end{align}
Under $P_X^*P_{Y|X}$, define
\begin{align}
    Z_{\mr{c}} &\triangleq \imath(X;Y), \label{eq:app-ach-Z-c}\\
    Z_{\mr{A}} &\triangleq \log\frac{P_{Y|X=x_{\mr{A}}}(Y_{\mr{A}})}{P_{Y|X=x_{\mr{R}}}(Y_{\mr{A}})}, \qquad Y_{\mr{A}}\sim P_{Y|X=x_{\mr{A}}}, \label{eq:app-ach-Z-A}\\
    Z_{\mr{R}} &\triangleq \log\frac{P_{Y|X=x_{\mr{R}}}(Y_{\mr{R}})}{P_{Y|X=x_{\mr{A}}}(Y_{\mr{R}})}, \qquad Y_{\mr{R}}\sim P_{Y|X=x_{\mr{R}}}, \label{eq:app-ach-Z-R}\\
    b_{\mr{c}} &\triangleq b(P_{Z_{\mr{c}}}), \qquad b_{\mr{A}}\triangleq b(P_{Z_{\mr{A}}}), \qquad b_{\mr{R}}\triangleq b(P_{Z_{\mr{R}}}). \label{eq:app-ach-b-constants}
\end{align}
The definitions give $\E{Z_{\mr{c}}}=C$, $\E{Z_{\mr{A}}}=C_1$, and $\E{Z_{\mr{R}}}=D_{\mr{R}}$. The three means are positive, so every argument of $b(\cdot)$ in~\eqref{eq:app-ach-b-constants} lies in the domain specified in~\eqref{eq:app-ach-b-function}. The constants $b_{\mr{c}}$, $b_{\mr{A}}$, and $b_{\mr{R}}$ are finite because the alphabets are finite and the conditional output distributions have common support.

The error and expected-length analyses in~\cite[Eqs.~(65) and~(82)--(84)]{YavasTan2025} give, for every integer $M\geq2$, $0<\gamma_1<\gamma_2$, $a_{\mr{A}},a_{\mr{R}}>0$, and $\epsilon_0\in(0,1)$, a VLF code whose expected decoding time is at most $(1-\epsilon_0)N'$ and whose error probability is at most $\epsilon_0+(1-\epsilon_0)\epsilon'$. With $\widetilde M\triangleq M-1$, the residual-error parameter $\epsilon'$ and the unrandomized duration parameter $N'$ that enter the error and mean-length bounds are
\begin{align}
    \epsilon' &\triangleq \widetilde M\left(\exp\{-(\gamma_1+a_{\mr{A}})\}+\exp\{-\gamma_2\}\right), \label{eq:app-ach-source-error}\\
    N' &\triangleq \frac{\gamma_1+b_{\mr{c}}}{C}+\left(\widetilde M\exp\{-\gamma_1\}+\exp\{-a_{\mr{R}}\}\right)\frac{\gamma_2-\gamma_1+b_{\mr{c}}}{C}+\frac{a_{\mr{A}}+b_{\mr{A}}}{C_1}+\widetilde M\exp\{-\gamma_1\}\frac{a_{\mr{R}}+b_{\mr{R}}}{D_{\mr{R}}}. \label{eq:app-ach-source-length}
\end{align}
Any common-randomness mixture in this construction can be reduced to at most three mass points while preserving its expected decoding time and error probability, by applying the Carath\'eodory argument in~\cite[Th.~19]{PolyanskiyPoorVerdu2011} to the mean-length and error-probability averages. Thus, the construction satisfies the common-randomness cardinality in \defnref{def:vlf-code}.

We now modify the parameter choices for fixed $\epsilon$ in~\cite[Eqs.~(85)--(91)]{YavasTan2025} so that the bound remains uniform over the permitted error-probability sequences. Set
\begin{align}
    s_N &\triangleq A_N+\log 2, \label{eq:app-ach-s-u}\\
    \gamma_1 &\triangleq \log\widetilde M+\log s_N, \qquad \gamma_2\triangleq \log\widetilde M+s_N+\log 2, \label{eq:app-ach-gammas}\\
    a_{\mr{A}} &\triangleq s_N+\log 2-\log s_N, \qquad a_{\mr{R}}\triangleq s_N. \label{eq:app-ach-thresholds}
\end{align}
For all sufficiently large $N$, these parameters are positive and satisfy $\gamma_1<\gamma_2$. The terms $\widetilde M\exp\{-(\gamma_1+a_{\mr{A}})\}$ and $\widetilde M\exp\{-\gamma_2\}$ in~\eqref{eq:app-ach-source-error} are equal, so the residual error $\epsilon_N'\triangleq\epsilon'$ satisfies
\begin{align}
    \epsilon_N' &= \exp\{-s_N\}. \label{eq:app-ach-continuation-error}
\end{align}
The same parameter choices give
\begin{align}
    \widetilde M\exp\{-\gamma_1\} &= \frac{1}{s_N}, \qquad \gamma_2-\gamma_1=s_N+\log 2-\log s_N. \label{eq:app-ach-cancellations}
\end{align}
The first and third terms on the right-hand side of~\eqref{eq:app-ach-source-length} combine as
\begin{align}
    \frac{\gamma_1+b_{\mr{c}}}{C}+\frac{a_{\mr{A}}+b_{\mr{A}}}{C_1}
    &= \frac{\log\widetilde M}{C}+\frac{s_N}{C_1}
    +\left(\frac{1}{C}-\frac{1}{C_1}\right)\log s_N
    +\frac{b_{\mr{c}}}{C}+\frac{\log 2+b_{\mr{A}}}{C_1}.
\end{align}
By~\eqref{eq:app-ach-cancellations}, the second and fourth terms on the right-hand side of~\eqref{eq:app-ach-source-length} are, respectively,
\begin{align}
    \left(\frac{1}{s_N}+\exp\{-s_N\}\right)\frac{s_N+\log 2-\log s_N+b_{\mr{c}}}{C} &= O(1), \label{eq:app-ach-bounded-second}\\
    \frac{1}{s_N}\frac{s_N+b_{\mr{R}}}{D_{\mr{R}}} &= O(1), \label{eq:app-ach-bounded-fourth}
\end{align}
uniformly over the permitted error-probability sequences. Since $s_N=A_N+O(1)$ and $\log s_N=\log A_N+O(1)$, there is a finite channel-dependent constant $K_{\mr{ch}}$ such that
\begin{align}
    N' &\leq \frac{\log\widetilde M}{C}+\frac{A_N}{C_1}+\left(\frac{1}{C}-\frac{1}{C_1}\right)\log A_N+K_{\mr{ch}}. \label{eq:app-ach-simplified-length}
\end{align}

It remains to choose the message-set size and the time-zero randomization. Let
\begin{align}
    T_N &\triangleq \frac{N(1-\epsilon_N')}{1-\epsilon_N}. \label{eq:app-ach-time-budget}
\end{align}
Since $A_N\geq\log\frac{1}{\epsilon_N}$ and $A_N\geq\log N$,
\begin{align}
    \epsilon_N' &\leq \frac{\epsilon_N}{2}, \qquad N\epsilon_N'\leq \frac{1}{2}. \label{eq:app-ach-residual-bounds}
\end{align}
We define
\begin{align}
    \Gamma_N &\triangleq CT_N-\frac{C}{C_1}A_N-\left(1-\frac{C}{C_1}\right)\log A_N-C(K_{\mr{ch}}+1). \label{eq:app-ach-Gamma-N}
\end{align}
Equation~\eqref{eq:app-ach-residual-bounds} gives $T_N\geq N$. Moreover,~\eqref{eq:main-reliability-condition} and $\log N=o(N)$ imply $A_N\leq(C_1-\eta)N$ for all sufficiently large $N$, while $\log A_N=O(\log N)$. Hence,
\begin{align}
    \Gamma_N &\geq \frac{C\eta}{C_1}N-O(\log N), \label{eq:app-ach-Gamma-lower}
\end{align}
and therefore $\Gamma_N\to\infty$.

We choose
\begin{align}
    M_N^{\mr{ach}} &\triangleq 1+\left\lfloor\exp\{\Gamma_N\}\right\rfloor. \label{eq:app-ach-integer-choice}
\end{align}
For all sufficiently large $N$, $\Gamma_N>0$ and hence $M_N^{\mr{ach}}\geq2$. We instantiate the construction in~\eqref{eq:app-ach-source-error}--\eqref{eq:app-ach-source-length} with
$\widetilde M=M_N^{\mr{ach}}-1=\lfloor\exp\{\Gamma_N\}\rfloor$ and $M=M_N^{\mr{ach}}$. Since $\log\widetilde M\leq\Gamma_N$, substituting~\eqref{eq:app-ach-Gamma-N} into~\eqref{eq:app-ach-simplified-length} gives
\begin{align}
    N' &\leq T_N-1<T_N. 
\end{align}
We set
\begin{align}
    \epsilon_0 &\triangleq \frac{\epsilon_N-\epsilon_N'}{1-\epsilon_N'}. \label{eq:app-ach-epsilon-zero}
\end{align}
Equation~\eqref{eq:app-ach-residual-bounds} gives $\epsilon_0\in(0,1)$. The mean-length and error-probability bounds in the cited construction satisfy
\begin{align}
    (1-\epsilon_0)N' &= \frac{1-\epsilon_N}{1-\epsilon_N'}N' \leq N, \qquad \epsilon_0+(1-\epsilon_0)\epsilon_N'=\epsilon_N. \label{eq:app-ach-final-constraints}
\end{align}
Consequently, an $(N,M_N^{\mr{ach}},\epsilon_N)$-VLF code exists, and therefore $M_N^{\mr{ach}}\leq\Mvlf(N,\epsilon_N)$.

The definition in~\eqref{eq:app-ach-integer-choice} also gives $M_N^{\mr{ach}}>\exp\{\Gamma_N\}$ and hence
\begin{align}
    \log M_N^{\mr{ach}} &> \Gamma_N. \label{eq:app-ach-rounding}
\end{align}
Moreover,
\begin{align}
    0 &\leq \frac{NC}{1-\epsilon_N}-CT_N = \frac{NC\epsilon_N'}{1-\epsilon_N} \leq \frac{C}{2(1-\bar\epsilon)}. \label{eq:app-ach-first-term-difference}
\end{align}
Combining~\eqref{eq:app-ach-Gamma-N},~\eqref{eq:app-ach-rounding}, and~\eqref{eq:app-ach-first-term-difference} gives
\begin{align}
    \log\Mvlf(N,\epsilon_N)
    &\geq \log M_N^{\mr{ach}} > \frac{NC}{1-\epsilon_N}
    -\frac{C}{C_1}A_N
    -\left(1-\frac{C}{C_1}\right)\log A_N
    -C(K_{\mr{ch}}+1)
    -\frac{C}{2(1-\bar\epsilon)}, 
\end{align}
which proves~\eqref{eq:main-achievability}.
\hfill\IEEEQED

\section{Proof of \lemref{lem:renyi}} \label{app:renyi}

We first prove~\eqref{eq:support-renyi-stopped} from a one-step entropy bound. Fix $t\in\mathbb N$, condition on $\mc{F}_{t-1}$, write $\bs{\rho}=\bs{\rho}(t-1)$, and define
\begin{align}
    B_\alpha &\triangleq \sum_{m=1}^{M}\rho_m^\alpha, \label{eq:app-renyi-B-alpha}\\
    P_{X,t}^{(\alpha)}(x) &\triangleq \frac{ \sum_{m:x_t(m)=x}\rho_m^\alpha }{ B_\alpha }, \qquad x\in\mc{X}. \label{eq:app-renyi-auxiliary-input}
\end{align}
Since $B_\alpha>0$, $P_{X,t}^{(\alpha)}$ is a probability distribution on $\mc{X}$. We restrict the following calculations to outputs $y$ satisfying $\overline P_t(y)>0$; conditional on $\mc{F}_{t-1}$, every omitted output symbol has probability zero. On this support, Bayes' rule gives
\begin{align}
    \rho_m(t) &= \frac{\rho_mP_{m,t}(Y_t)}{\overline P_t(Y_t)}, \qquad m\in[M]. \label{eq:app-renyi-next-posterior}
\end{align}
Consequently, the sum of the $\alpha$th powers of the updated posterior probabilities, evaluated at $Y_t=y$, is
\begin{align}
    \sum_{m=1}^{M}\left(\frac{\rho_mP_{m,t}(y)}{\overline P_t(y)}\right)^\alpha &= B_\alpha \frac{ \sum_{x\in\mc{X}}P_{X,t}^{(\alpha)}(x)P_{Y|X}(y|x)^\alpha }{ \overline P_t(y)^\alpha }. \label{eq:app-renyi-posterior-power}
\end{align}
We substitute~\eqref{eq:app-renyi-posterior-power} into the definition of R\'enyi entropy and obtain
\begin{align}
    H_\alpha(\bs{\rho}(t-1))-\E{H_\alpha(\bs{\rho}(t))\mid\mc{F}_{t-1}} &= \frac{1}{\alpha-1}\sum_{y:\overline P_t(y)>0}\overline P_t(y)\log\frac{\sum_{x\in\mc{X}}P_{X,t}^{(\alpha)}(x)P_{Y|X}(y|x)^\alpha}{\overline P_t(y)^\alpha}. \label{eq:app-renyi-exact-drift}
\end{align}
For this calculation, we define
\begin{align}
    z_{\alpha,t}(y) &\triangleq \left( \sum_{x\in\mc{X}}P_{X,t}^{(\alpha)}(x)P_{Y|X}(y|x)^\alpha \right)^{\frac{1}{\alpha}}, \label{eq:app-renyi-z}\\
    \nu_{\alpha,t} &\triangleq \sum_{y\in\mc{Y}}z_{\alpha,t}(y), \label{eq:app-renyi-nu}\\
    Q_{\alpha,t}(y) &\triangleq \frac{z_{\alpha,t}(y)}{\nu_{\alpha,t}}. \label{eq:app-renyi-Q}
\end{align}
If $\overline P_t(y)>0$, then $z_{\alpha,t}(y)>0$. Hence, $\overline P_t\ll Q_{\alpha,t}$. By~\eqref{eq:app-renyi-z}--\eqref{eq:app-renyi-Q}, the right-hand side of~\eqref{eq:app-renyi-exact-drift} equals the first expression in~\eqref{eq:app-renyi-one-step}. The non-negativity of relative entropy then gives
\begin{align}
    \frac{\alpha}{\alpha-1}\left[\log\nu_{\alpha,t}-D\left(\overline P_t\middle\|Q_{\alpha,t}\right)\right] &\leq \frac{\alpha}{\alpha-1}\log\nu_{\alpha,t}=I_\alpha^{\mr{S}}(P_{X,t}^{(\alpha)},P_{Y|X})\leq C_\alpha^{\mr{S}}. \label{eq:app-renyi-one-step}
\end{align}
Therefore, $H_\alpha(\bs{\rho}(t))+tC_\alpha^{\mr{S}}$ is a submartingale. Since the message prior is uniform, $H_\alpha(\bs{\rho}(0))=\log M$. Doob's optional stopping theorem~\cite[Sec.~10.10]{Williams1991} applied at $\min\{\tau,k\}$ gives
\begin{align}
    \log M &\leq \E{H_\alpha(\bs{\rho}(\min\{\tau,k\}))} +C_\alpha^{\mr{S}}\E{\min\{\tau,k\}}. \label{eq:app-renyi-bounded-stopping}
\end{align}
Because $\E{\tau}\leq N$, we have $\tau<\infty$ almost surely, and therefore $\bs{\rho}(\min\{\tau,k\})\to\bs{\rho}(\tau)$ almost surely. Since R\'enyi entropy belongs to $[0,\log M]$, dominated convergence applies to the first expectation, while monotone convergence applies to the second. Letting $k\to\infty$ gives~\eqref{eq:support-renyi-stopped}.

It remains to bound $C_{1+s}^{\mr{S}}$. For $\widetilde\alpha>1$ and probability distributions $P$ and $Q$ such that $P$ is absolutely continuous with respect to $Q$, the R\'enyi divergence of order $\widetilde\alpha$ is
\begin{align}
    D_{\widetilde\alpha}(P\|Q) &\triangleq \frac{1}{\widetilde\alpha-1} \log \sum_{y:Q(y)>0}P(y)^{\widetilde\alpha}Q(y)^{1-\widetilde\alpha}. \label{eq:app-renyi-divergence}
\end{align}
We set $D_{\widetilde\alpha}(P\|Q)=\infty$ when $P$ is not absolutely continuous with respect to $Q$. Evaluating the minimizing-center characterization in~\cite[Th.~2.2]{Sibson1969} at $Q_Y=P_Y^*$ gives the first inequality below, while bounding the weighted exponential average by its maximum gives the second.
\begin{align}
    I_{1+s}^{\mr{S}}(P_X,P_{Y|X}) &\leq \frac{1}{s}\log\sum_{x\in\mc{X}}P_X(x)\exp\left\{sD_{1+s}\left(P_{Y|X=x}\middle\|P_Y^*\right)\right\} \leq \max_{x\in\mc{X}}D_{1+s}\left(P_{Y|X=x}\middle\|P_Y^*\right). \label{eq:app-renyi-radius}
\end{align}
Equation~\eqref{eq:notation-capacity-condition} implies $P_{Y|X=x}\ll P_Y^*$ and, for $Y\sim P_{Y|X=x}$,
$\E{\imath(x;Y)}=D(P_{Y|X=x}\|P_Y^*)\leq C$ for every $x\in\mc{X}$. Hence, the R\'enyi divergences in~\eqref{eq:app-renyi-radius} are finite; output symbols outside the support of $P_Y^*$ may be omitted. The range of $y\mapsto\imath(x;y)$ has length at most $b_{\mr{ch}}$. Therefore, Hoeffding's lemma~\cite[Lem.~1]{Hoeffding1963} gives
\begin{align}
    D_{1+s}\left(P_{Y|X=x}\middle\|P_Y^*\right) &= \frac{1}{s}\log\E{\exp\{s\imath(x;Y)\}} \leq C+\frac{s b_{\mr{ch}}^2}{8}. \label{eq:app-renyi-hoeffding}
\end{align}
Taking the maximum over $P_X$ in~\eqref{eq:app-renyi-radius}, using the definition in~\eqref{eq:support-sibson-capacity}, and applying~\eqref{eq:app-renyi-hoeffding} proves~\eqref{eq:support-sibson-capacity-bound}.
\hfill\IEEEQED

\section{Proof of \lemref{lem:terminal-renyi}} \label{app:terminal-renyi}

Recall from \lemref{lem:terminal-renyi} that $\theta=1-\max_{m\in[M]}\rho_m$.
If $\theta=0$, then $H_{\alpha_M}(\bs{\rho})=0$, and~\eqref{eq:support-terminal-renyi} follows directly. We henceforth assume that $0<\theta\leq1-\frac{1}{M}$.

Choose a coordinate whose mass equals $1-\theta$. The remaining $M-1$ coordinates have total mass $\theta$. By convexity of $a\mapsto a^{\alpha_M}$, their contribution to the power sum is at least $\theta^{\alpha_M}(M-1)^{1-\alpha_M}=\exp\{-2\}\theta^{\alpha_M}$. Since $\frac{1}{1-\alpha_M}=-\frac{\log(M-1)}{2}<0$, substituting this lower bound into the definition of R\'enyi entropy gives
\begin{align}
    H_{\alpha_M}(\bs{\rho})
    &= -\frac{\log(M-1)}{2}\log\sum_{m=1}^{M}\rho_m^{\alpha_M} \\
    &\leq -\frac{\log(M-1)}{2} \log \left[ (1-\theta)^{1+\frac{2}{\log(M-1)}} +\exp\{-2\}\theta^{1+\frac{2}{\log(M-1)}} \right]. \label{eq:app-terminal-extremal}
\end{align}
We define
\begin{align}
    \omega(\theta) &\triangleq 1-\theta+\exp\{-2\}\theta, \label{eq:app-terminal-omega}\\
    w_0(\theta) &\triangleq \frac{1-\theta}{\omega(\theta)}, \qquad w_1(\theta)\triangleq \frac{\exp\{-2\}\theta}{\omega(\theta)}. \label{eq:app-terminal-weights}
\end{align}
The weights in~\eqref{eq:app-terminal-weights} are positive and sum to one. Factoring the argument of the logarithm in~\eqref{eq:app-terminal-extremal} as $\omega(\theta)$ times the left-hand side below, the weighted arithmetic-geometric mean inequality gives
\begin{align}
    w_0(\theta)(1-\theta)^{\frac{2}{\log(M-1)}}+w_1(\theta)\theta^{\frac{2}{\log(M-1)}} &\geq (1-\theta)^{\frac{2w_0(\theta)}{\log(M-1)}}\theta^{\frac{2w_1(\theta)}{\log(M-1)}}. \label{eq:app-terminal-am-gm}
\end{align}
We substitute~\eqref{eq:app-terminal-am-gm} into~\eqref{eq:app-terminal-extremal} and obtain
\begin{align}
    H_{\alpha_M}(\bs{\rho}) &\leq \log(M-1)\psi(\theta)+\zeta(\theta), \label{eq:app-terminal-psi-zeta}\\
    \psi(\theta) &\triangleq -\frac{1}{2}\log\omega(\theta), \label{eq:app-terminal-psi}\\
    \zeta(\theta) &\triangleq w_0(\theta)\log\frac{1}{1-\theta} +w_1(\theta)\log\frac{1}{\theta}. \label{eq:app-terminal-zeta}
\end{align}
For every $u\in(0,1]$, $u\log\frac{1}{u}\leq\exp\{-1\}$. Since $\omega(\theta)\geq\exp\{-2\}$,
\begin{align}
    w_0(\theta)\log\frac{1}{1-\theta} &\leq \exp\{2\}(1-\theta)\log\frac{1}{1-\theta} \leq \exp\{1\}, \label{eq:app-terminal-zeta-zero}\\
    w_1(\theta)\log\frac{1}{\theta} &\leq \theta\log\frac{1}{\theta} \leq \exp\{-1\}. \label{eq:app-terminal-zeta-one}
\end{align}
Thus, $\zeta(\theta)\leq c_0$.

The function $\psi$ is convex, with $\psi(0)=0$ and $\psi(1)=1$. Hence, $\psi(\theta)\leq\theta$ on $[0,1]$, so $\theta-\psi(\theta)$ is concave, non-negative, and zero at both endpoints. Applying concavity separately on $[0,\frac{1}{2}]$ and $[\frac{1}{2},1]$ gives
\begin{align}
    \theta-\psi(\theta)
    &\geq 2\left(\frac{1}{2}-\psi\left(\frac{1}{2}\right)\right)\min\{\theta,1-\theta\} \\
    &= \kappa\min\{\theta,1-\theta\}
    \geq \kappa\theta(1-\theta). \label{eq:app-terminal-concavity}
\end{align}
Combining $\psi(\theta)\leq\theta-\kappa\theta(1-\theta)$ with $\zeta(\theta)\leq c_0$ in~\eqref{eq:app-terminal-psi-zeta} proves~\eqref{eq:support-terminal-renyi}.
\hfill\IEEEQED

\section{Proof of \lemref{lem:scalar}} \label{app:scalar}

We define
\begin{align}
    \mc{B}_{\mr{lo}} &\triangleq \left\{Z\leq\frac{1}{2}\right\}, \qquad p_{\mr{lo}}\triangleq \Prob{\mc{B}_{\mr{lo}}}, \qquad \mu_{\mr{lo}}\triangleq \E{Z\,1\{\mc{B}_{\mr{lo}}\}}. \label{eq:app-scalar-definitions}
\end{align}
If $\Prob{Z=0}>0$, then the left-hand side in~\eqref{eq:support-scalar} is infinite and the result follows. We therefore assume that $Z>0$ almost surely.
On $\mc{B}_{\mr{lo}}$, $Z(1-Z)\geq\frac{Z}{2}$, while on $\mc{B}_{\mr{lo}}^{\mr{c}}$, $Z(1-Z)\geq\frac{1-Z}{2}$. Hence,
\begin{align}
    \mu_{\mr{lo}} &\leq 2v_Z, \label{eq:app-scalar-mu-bound}\\
    \E{(1-Z)\,1\{\mc{B}_{\mr{lo}}^{\mr{c}}\}} &\leq 2v_Z. \label{eq:app-scalar-complement-bound}
\end{align}
The identity $\E{Z}=\mu_{\mr{lo}}+(1-p_{\mr{lo}})-\E{(1-Z)1\{\mc{B}_{\mr{lo}}^{\mr{c}}\}}$, together with $\E{Z}\leq\epsilon$,~\eqref{eq:app-scalar-complement-bound}, and $\mu_{\mr{lo}}\geq0$, gives
\begin{align}
    p_{\mr{lo}} &\geq 1-\epsilon-2v_Z. \label{eq:app-scalar-probability-bound}
\end{align}
The assumption $v_Z<\frac{1-\epsilon}{4}$ makes the right-hand side in~\eqref{eq:app-scalar-probability-bound} positive, and hence $p_{\mr{lo}}>0$. Together with $Z>0$ almost surely, this implies $\mu_{\mr{lo}}>0$. Conditional Jensen's inequality gives
\begin{align}
    \E{1\{\mc{B}_{\mr{lo}}\}\log\frac{1}{Z}} &\geq p_{\mr{lo}}\log\frac{p_{\mr{lo}}}{\mu_{\mr{lo}}}. \label{eq:app-scalar-jensen}
\end{align}
Subtracting $\E{Z\,1\{\mc{B}_{\mr{lo}}\}\log\frac{1}{Z}}$ from both sides of~\eqref{eq:app-scalar-jensen} and using $z\log\frac{1}{z}\leq\exp\{-1\}$ for $z\in[0,1]$ give
\begin{align}
    \E{1\{\mc{B}_{\mr{lo}}\}(1-Z)\log\frac{1}{Z}} &\geq p_{\mr{lo}}\log\frac{p_{\mr{lo}}}{\mu_{\mr{lo}}} -\exp\{-1\}. \label{eq:app-scalar-subtraction}
\end{align}
To justify replacing $p_{\mr{lo}}$ by its lower bound $1-\epsilon-2v_Z$ and $\mu_{\mr{lo}}$ by its upper bound $2v_Z$, observe that, for $p>\mu>0$,
\begin{align}
    \frac{\partial}{\partial p}\left[p\log\frac{p}{\mu}\right] &= \log\frac{p}{\mu}+1>0, \qquad
    \frac{\partial}{\partial\mu}\left[p\log\frac{p}{\mu}\right]= -\frac{p}{\mu}<0. \label{eq:app-scalar-monotonicity}
\end{align}
The assumption $v_Z<\frac{1-\epsilon}{4}$ and~\eqref{eq:app-scalar-mu-bound}--\eqref{eq:app-scalar-probability-bound} give
\begin{align}
    p_{\mr{lo}} &\geq1-\epsilon-2v_Z>2v_Z\geq\mu_{\mr{lo}}. \label{eq:app-scalar-order}
\end{align}
Thus,~\eqref{eq:app-scalar-monotonicity} permits the substitutions in~\eqref{eq:app-scalar-subtraction}. Adding the non-negative complementary contribution $\E{1\{\mc{B}_{\mr{lo}}^{\mr{c}}\}(1-Z)\log\frac{1}{Z}}$ to its left-hand side proves~\eqref{eq:support-scalar}.
\hfill\IEEEQED

\section{Proof of \lemref{lem:correct-decoding-time}} \label{app:correct-decoding}

We combine a fixed-time change-of-measure estimate with the layer-cake representation of the expected decoding time.

To handle the dependence induced by feedback, we condition on the encoder's full history and define
\begin{align}
    \mc{K}_{t-1} &\triangleq \sigma(W,U,Y^{t-1}). \label{eq:app-correct-full-history}
\end{align}
The input $X_t$ is $\mc{K}_{t-1}$-measurable. We define the conditional mean
\begin{align}
    \mu_t &\triangleq \E{\imath(X_t;Y_t)\mid\mc{K}_{t-1}}. \label{eq:app-correct-increment}
\end{align}
Memorylessness and~\eqref{eq:notation-capacity-condition} give
\begin{align}
    \mu_t &= D\left(P_{Y|X=X_t}\middle\|P_Y^*\right) \leq C. \label{eq:app-correct-mean}
\end{align}
Moreover, $\imath(X_t;Y_t)-\mu_t$ has conditional range at most $b_{\mr{ch}}$. The process
\begin{align}
    \ms{M}_n &\triangleq \sum_{t=1}^{n}\left(\imath(X_t;Y_t)-\mu_t\right) \label{eq:app-correct-martingale}
\end{align}
is a martingale. We also define
\begin{align}
    \Lambda_n &\triangleq \sum_{t=1}^{n}\imath(X_t;Y_t). \label{eq:app-correct-sum}
\end{align}
Since $\Lambda_n-nC\leq \ms{M}_n$, the Azuma--Hoeffding inequality gives, when $b_{\mr{ch}}>0$,
\begin{align}
    \Prob{\Lambda_n-nC\geq a} &\leq \exp\left\{-\frac{2a^2}{n b_{\mr{ch}}^2}\right\}, \qquad a>0. \label{eq:app-correct-hoeffding}
\end{align}
If $b_{\mr{ch}}=0$, then $\imath(X_t;Y_t)=\mu_t\leq C$ almost surely and
\begin{align}
    \Lambda_n &\leq nC \quad\text{almost surely}. \label{eq:app-correct-zero-range}
\end{align}

We next relate early correct decoding to $\Lambda_n$. Fix finite-history indicator versions of the stopping events $\{\tau=t\}$, and on arbitrary output sequences let the canonical stopping rule be the first indicated time, with value $\infty$ if no time is indicated. At a finite canonical stopping time $t$, use the decision $\ms{g}_t(U,Y^t)$. This stopping rule and decision agree with $(\tau,\widehat W)$ almost surely under the physical law. Fix $n\in\mathbb N_0$ and extend the encoder after $\tau$ by arbitrary causal inputs. Define the $\mc{F}_n$-measurable decision
\begin{align}
    \widehat W^{(n)} &\triangleq \begin{cases}\widehat W,&\tau\leq n,\\1,&\tau>n.\end{cases} \label{eq:app-correct-fixed-time-decision}
\end{align}
Let $P_n$ be the physical law of $(W,U,Y^n)$, and let $Q_n$ retain the law of $(W,U)$ but generate $Y_1,\ldots,Y_n$ independently according to $P_Y^*$. Under $P_n$ and $Q_n$, the encoder, canonical stopping rule, and truncated decoder use the same measurable maps of the canonical coordinates. For every $\xi>0$,
\begin{align}
    \Prob{\widehat W=W,\tau\leq n}=\Prob{\widehat W^{(n)}=W,\tau\leq n} &\leq \Prob{\Lambda_n\geq\log M-\xi}+\exp\{-\xi\}. \label{eq:app-correct-change-measure}
\end{align}
To prove~\eqref{eq:app-correct-change-measure}, observe that under $Q_n$, the message is independent of $(U,Y^n)$ and remains uniform. Thus,
\begin{align}
    Q_n[\widehat W^{(n)}=W,\tau\leq n] &\leq \frac{1}{M}. \label{eq:app-correct-auxiliary-success}
\end{align}
Define the likelihood ratio
\begin{align}
    L_n &\triangleq \prod_{t=1}^{n}\begin{cases}
        \frac{P_{Y|X}(Y_t|X_t)}{P_Y^*(Y_t)},&P_Y^*(Y_t)>0,\\
        0,&P_Y^*(Y_t)=0,
    \end{cases} \label{eq:app-correct-product-likelihood}
\end{align}
with $L_0\triangleq1$. Conditionally on $(W,U,Y^{t-1})$, the $t$th output law is $P_{Y|X=X_t}$ under $P_n$ and $P_Y^*$ under $Q_n$. Multiplying these conditional likelihood ratios over $t\in[n]$ gives
\begin{align}
    L_n &= \frac{\mr{d} P_n}{\mr{d}Q_n}. \label{eq:app-correct-radon-nikodym}
\end{align}
The capacity condition~\eqref{eq:notation-capacity-condition} ensures that $P_{Y|X=x}\ll P_Y^*$ for every $x$. Under $P_n$, every realized numerator in~\eqref{eq:app-correct-product-likelihood} is positive, and hence
\begin{align}
    L_n &>0, \qquad \log L_n=\Lambda_n \quad P_n\text{-almost surely}. \label{eq:app-correct-log-likelihood}
\end{align}
We retain $L_n$, rather than writing $\exp\{\Lambda_n\}$ under $Q_n$, because $Q_n$ may generate histories containing a transition with zero physical probability; on each such history, $L_n=0$. On the event $\mc{B}_n\triangleq\{\widehat W^{(n)}=W,\tau\leq n\}$, change of measure and~\eqref{eq:app-correct-auxiliary-success} give
\begin{align}
    P_n[\mc{B}_n,L_n< M\exp\{-\xi\}] &= \mathbb E_{Q_n}\left[L_n1\{\mc{B}_n,L_n<M\exp\{-\xi\}\}\right] \leq M\exp\{-\xi\}Q_n[\mc{B}_n]\leq\exp\{-\xi\}. \label{eq:app-correct-low-likelihood}
\end{align}
Under $P_n$,~\eqref{eq:app-correct-log-likelihood} identifies $\{L_n\geq M\exp\{-\xi\}\}$ with $\{\Lambda_n\geq\log M-\xi\}$. Splitting $\mc{B}_n$ according to the events $\{L_n<M\exp\{-\xi\}\}$ and $\{L_n\geq M\exp\{-\xi\}\}$ and applying~\eqref{eq:app-correct-low-likelihood} proves~\eqref{eq:app-correct-change-measure}.

We set
\begin{align}
    n_M &\triangleq \left\lfloor\frac{\log M}{C}\right\rfloor, \qquad \xi_M\triangleq 2\log\log M. \label{eq:app-correct-cutoff}
\end{align}
We next show that the fixed-time probabilities appearing in~\eqref{eq:app-correct-change-measure} have the aggregate bound
\begin{align}
    \sum_{n=0}^{n_M-1}\Prob{\Lambda_n\geq\log M-\xi_M} &= O(\sqrt{\log M}), \label{eq:app-correct-tail-sum}
\end{align}
for all sufficiently large $M$. The $n=0$ summand is zero for all sufficiently large $M$. When $b_{\mr{ch}}>0$, we use~\eqref{eq:app-correct-hoeffding} for $n\geq1$ and write $n=n_M-1-j$. If $j\geq\frac{2\xi_M}{C}$, then
\begin{align}
    \log M-\xi_M-nC &\geq C(1+j)-\xi_M \geq \frac{Cj}{2}. \label{eq:app-correct-threshold-lower}
\end{align}
Since $n\leq\frac{\log M}{C}$, the Hoeffding exponent in~\eqref{eq:app-correct-hoeffding} is at least $\frac{C^3j^2}{2b_{\mr{ch}}^2\log M}$. The indices satisfying $j<\frac{2\xi_M}{C}$ contribute $O(\log\log M)$. For a channel-dependent constant $c>0$, the remaining terms are bounded by
\begin{align}
    \sum_{j=0}^{\infty}\exp\left\{-\frac{c j^2}{\log M}\right\} &= O(\sqrt{\log M}). \label{eq:app-correct-gaussian-sum}
\end{align}
When $b_{\mr{ch}}=0$,~\eqref{eq:app-correct-zero-range} makes every term with $j\geq\frac{2\xi_M}{C}$ equal to zero, while the remaining $O(\log\log M)$ indices are bounded trivially. Thus,~\eqref{eq:app-correct-tail-sum} holds in both cases. Finally,
\begin{align}
    n_M\exp\{-\xi_M\} &= O\left(\frac{1}{\log M}\right). \label{eq:app-correct-threshold-sum}
\end{align}

The layer-cake identity gives
\begin{align}
    \E{\tau1\{\widehat W=W\}} &= \sum_{n=0}^{\infty}\Prob{\widehat W=W,\tau>n} \geq n_M\Prob{\widehat W=W}-\sum_{n=0}^{n_M-1}\Prob{\widehat W=W,\tau\leq n}. \label{eq:app-correct-layer-cake}
\end{align}
Substituting~\eqref{eq:app-correct-change-measure},~\eqref{eq:app-correct-tail-sum}, and~\eqref{eq:app-correct-threshold-sum} into~\eqref{eq:app-correct-layer-cake}, using $\Prob{\widehat W=W}=1-\Pe$, and absorbing the bounded floor error prove~\eqref{eq:structure-correct-time}.
\hfill\IEEEQED

\section*{Acknowledgment}

\emph{Generative-AI use disclosure:}
This work grew out of the author's effort to close gaps left by his earlier research in finite-blocklength information theory. During the development and preparation of the manuscript, the author used OpenAI's ChatGPT with GPT-5.6 Sol (Ultra mode) and Anthropic Claude Fable 5 (Max effort) to discuss and critically examine possible proof techniques. In these exchanges, ChatGPT suggested extending Burnashev's one-step Shannon-entropy drift bound~\cite[Lem.~2]{Burnashev1976} to R\'enyi entropy. The author developed and verified this extension, which became a key ingredient in the converse argument that closes the logarithmic gap in the VLF fundamental limit. The two systems were also used to review proof steps, check algebraic and probabilistic calculations for possible errors, examine statements about prior work, and refine the exposition. The author independently verified every mathematical statement and proof in the final manuscript and checked the cited claims against their original sources. The author alone is responsible for the content of the manuscript, including all claims, interpretations, and any remaining errors.

\bibliographystyle{IEEEtran}
\bibliography{biblio2}

\end{document}